\DeclareTextSymbolDefault{\DH}{T1}
\definecolor{LightCyan}{rgb}{0.88,1,1}
\definecolor{celadon}{rgb}{0.67, 0.88, 0.69}
\definecolor{columbiablue}{rgb}{0.61, 0.87, 1.0}
\@citea\NAT@hyper@{%
     \NAT@nmfmt{\NAT@nm}%
     \hyper@natlinkbreak{\NAT@aysep\NAT@spacechar}{\@citeb\@extra@b@citeb}%
     \NAT@date}}
\@citea\NAT@nmfmt{\NAT@nm}%
\NAT@spacechar\NAT@hyper@{\NAT@date}}{}{}
\@citea\NAT@hyper@{%
     \NAT@nmfmt{\NAT@nm}%
     \hyper@natlinkbreak{\NAT@spacechar\NAT@@open\if*#1*\else#1\NAT@spacechar\fi}%
       {\@citeb\@extra@b@citeb}%
     \NAT@date}}
\@citea\NAT@nmfmt{\NAT@nm}%
\fi\NAT@hyper@{\NAT@date}}
\newtheorem{theorem}{Theorem}
\newtheorem{corollary}{Corollary}
\newtheorem{definition}{Definition}
\newtheorem{example}{Example}
\newtheorem{proposition}{Proposition}
\newcommand{\addQEDstyle}[2]{\AtBeginEnvironment{#1}{\pushQED{\qed}\renewcommand{\qedsymbol}{#2}}
\AtEndEnvironment{#1}{\popQED}}
\newcommand{\reals}{\mathbb R}
\newcommand{\naturals}{\mathbb N}
\newcommand{\rationals}{\mathbb Q}
\newcommand{\calR}{\mathcal{R}}
\newcommand{\pav}{{{\mathrm{PAV}}}}
\newcommand{\geom}{{{\mathrm{geom}}}}
\newcommand{\np}{{{\mathrm{NP}}}}
\newcommand{\sat}{{{{\mathrm{sat}}}}}
\newcommand{\shortcite}[1]{(\citeyear{#1})}
\newtheorem*{rep@theorem}{\rep@title}
\newcommand{\newreptheorem}[2]{%
\newenvironment{rep#1}[1]{%
 \def\rep@title{#2 \ref{##1}}%
 \begin{rep@theorem}}%
 {\end{rep@theorem}}}
\newcommand{\lambdascore}{{{{\lambda\text{-}\mathrm{score}}}}}
\definecolor{OKgreen}{HTML}{035925}
\definecolor{NOred}{HTML}{8F061F}
\newcommand{\notstarted}
\crefname{abcrule}{Rule}{Rules}
\begin{document}

\title{Phragm\'{e}n Rules for Degressive and Regressive Proportionality}

\author{Michał Jaworski\\
  University of Warsaw\\
  Warsaw, Poland\\
  {\small \texttt{m.jaworski@mimuw.edu.pl}}
  \and 
Piotr Skowron\\
  University of Warsaw\\
  Warsaw, Poland\\
  {\small \texttt{p.skowron@mimuw.edu.pl}}
}
\date{}

\maketitle

\begin{abstract}
We study two concepts of proportionality in the model of approval-based committee elections. In degressive proportionality small minorities of voters are favored in comparison with the standard linear proportionality. Regressive proportionality, on the other hand, requires that larger subdivisions of voters are privileged. We introduce a new family of rules that broadly generalize Phragmén's Sequential Rule spanning the spectrum between degressive and regressive proportionality. We analyze and compare the two principles of proportionality assuming the voters and the candidates can be represented as points in an Euclidean issue space.
\end{abstract}

\setcounter{tocdepth}{2}

\section{Introduction}\label{sec:intro}
Consider a scenario where a group of voters needs to select a committee, that is a given-size subset of available candidates. Assume the voters have approval-based preferences: each voter submits a ballot in which she indicates which of the available candidates she finds acceptable. This scenario received a considerable attention in the literature in recent years---see the book chapter by \citeauthor{kil:chapter:approval-multiwinner}~\shortcite{kil:chapter:approval-multiwinner} and the recent survey by \citeauthor{lac-sko:abc-survey}~\shortcite{lac-sko:abc-survey}. 

In many scenarios that fit in the model of approval-based committee elections, for example when the goal is to select a representative body for a population of voters, it is required that the elected committee should represent the voters proportionally. Typically, the term proportionality is used to indicate that each group of voters with similar opinions should approve the number of elected candidates that is proportional to the size of the group. For example, consider a society that is divided into two coherent groups: there are two disjoint sets of candidates, $A$ and $B$; 60\% voters approve $A$ and 40\% approve $B$. Then, proportionality---in its most commonly used sense---means that we shall select roughly 60\% of committee members from $A$ and 40\% of committee members from $B$.

Is this outcome fair to the voters? That depends on how we define and interpret voters' satisfaction. If we define the satisfaction of a voter as the number of elected candidates she approves, then indeed the outcome seems fair. However, if the elected committee is to take a number of majoritarian decisions, then it is likely that such decisions will almost solely satisfy the voters from the first group, which can be considered highly unfair by those from the second group. Similar arguments led to the idea of \emph{degressive proportionality}~\cite{laslier2012WhyNotProportional,MT12,KLMT13a}, 
where it is advised that smaller groups of voters shall obtain the number of representatives that is greater than the linear proportionality would require. In fact, degressive proportional committees can be observed in real world---European Parliament is perhaps the most commonly known example of such a committee~\cite{rose2013RepresentingEuropeans}.\footnote{Degressive proportional rules have been also considered for selecting the United Nations Parliamentary Assembly~\cite{unAssembly} 
and for allocating weights in the Council of the European Union~\cite{euVoting}.
}

On the other hand, in certain applications one would prefer to use a voting rule that follows the opposite principle of regressive proportionality. As an example, consider a group of experts selecting grant proposals for funding. Here, we would like to use a rule that provides some degree of proportionality in order to ensure that different scientific disciplines are fairly represented among the selected project proposals. On the other hand, one would perhaps prefer to select mainly projects that obtain a large support from the experts, following the idea of regressive proportionality.

\subsection{Our Contribution}
In this paper we introduce a new family of rules that follow the principles of degressive and regressive proportionality, and analyse these rules taking three different viewpoints:
\begin{enumerate}[\hspace{0pt}1.]
    \item The worst-case approach: for a given election rule we ask what is the guaranteed number of representatives that a coherent group of voters who form a  $\gamma$-fraction of the society gets in the elected committee.
    \item The average-case approach: assuming that the voters and the candidates are represented as points in a one-dimensional issue space, we ask how different forms of proportionality map distributions of voters in the issue space to distributions of their satisfactions, measured as numbers of representatives in elected committees. 
    \item The analysis of the voting committee model~\cite{skow:multiwinner-models,ijcai2020-28}: 
    we ask how different forms of proportionality map distributions of voters in the issue space to distributions of their satisfactions from decisions made by elected committees. 
\end{enumerate}

Our rules are natural extensions of Sequential Phragm\'{e}n's Rule, which is known to behave very well in terms of proportionality~\cite{bri-las-sko:c:apportionment,aaai/BrillFJL17-phragmen,skowron:prop-degree,pet-sko:laminar}. These rules are practical, and can be computed in polynomial time. We also compare them with other voting methods, such as (sequential) Thiele rules.

\section{Preliminaries}

For each $n \in \naturals$ we set $[n] = \{1,\ldots,n\}$. For a set $X$ we use $S_k(X)$ to denote the set of all $k$-element subsets of $X$; by $S(X)$ we denote the set of nonempty subsets of $X$.

An approval-based election is a quadruple $(C, V, A, k)$, where $C=\{c_1, \ldots, c_m\}$ is a set of candidates, $V=\{v_1, \ldots, v_n\}$ is a set of voters, $A\colon V \to S(C)$ is a function that maps each voter to a subset of candidates that she approves---we call $A$ an \emph{approval-based profile}, and $k$ is a desired size of the committee to be elected. We use $n$ and $m$ to denote the number of voters and candidates, respectively, i.e., $|V|=n$ and $|C|=m$.

We call the elements of $S_k(C)$ \emph{size-$k$ committees}, or simply committees, if $k$ is clear from the context. An \emph{approval-based committee election rule}, in short a rule, is a function $\mathcal{R}$  that for each election instance $E=(C,V,A,k)$ returns one or multiple size-$k$ committees, i.e., $\mathcal{R}(E) \in S(S_k(C))$. We call elements of $\mathcal{R}(E)$ \emph{winning committees}. 

\section{The Class of $\alpha$/$\beta$-Phragm\'{e}n's Rules}

In this paper we introduce two classes of election rules that generalize the Phragm\'{e}n's sequential rule (for a broader discussion of the Phragm\'{e}n's rule we refer to the recent survey by \citeauthor{lac-sko:abc-survey}~\shortcite{lac-sko:abc-survey}). The definitions of those classes are based on the following idea. The voters earn virtual money---credits---over time (the time is continuous), and they use the credits they earned to pay for committee members that they approve; buying each candidate costs a certain amount of money.
The rules are sequential---they start with an empty committee $W = \emptyset$ and iteratively add candidates to $W$. The voters are greedy: in the first time moment, when there is a group of voters and a not-yet-selected candidate $c \notin W$, such that the voters who approve $c$ have altogether certain amount of unspent money, the rule stops, adds $c$ to the committee, asks the voters to pay for $c$ (resetting their credits to zeros), and resumes. The rule finishes when $k$ candidates are selected.

In the original Phragm\'{e}n's sequential rule the voters earn credits with a constant speed (e.g., one credit per time unit) and each candidate costs 1 credit. Here, we consider the following two variants of the rule: 
\begin{enumerate}
    \item We allow the speed of earning to change, dependently on the number of candidates that the voters like in the already assembled committee. Formally, let us fix a positive, non-increasing,  discrete function $\alpha: \naturals_+ \to (0, 1]$ such that $\alpha(1)=1$. In the $\alpha$-Phragm\'{e}n's rule, $\alpha(i)$ is the voter's speed of earning credits per time unit in case $(i-1)$ committee members are already approved by the voter. In other words, each time voter $v$ pays for the $i$-th candidate, her speed of earning credits changes to $\alpha(i+1)$ and remains the same until the next $v$'s purchase of a candidate. In particular, the original Phragm\'{e}n's sequential rule corresponds to the $\alpha$-Phragm\'{e}n's rule, for the constant sequence $\alpha$. In this paper we will consider $\alpha$-Phragm\'{e}n's rules for non-increasing functions $\alpha$.
    
    \item We allow the costs of the candidates to differ dependently on the number of voters who approve the specific candidate. Formally, let us fix a positive function $\beta: [0, 1] \to (0, 1]$ such that $\beta(0)=1$. In the $\beta$-Phragm\'{e}n's rule, $\beta(\nicefrac{|V'|}{|V|})$ is the cost of the candidate that is approved exactly by the voters from $V'$. In other words, the voters from $V'$ can buy candidate $c'$ that they approve, if they have $\beta(\nicefrac{|V'|}{|V|})$ credits in total. In particular, the original Phragm\'{e}n's sequential rule corresponds to the $\beta$-Phragm\'{e}n's rule, for the constant function $\beta$. In this paper we will consider $\beta$-Phragm\'{e}n's rules for non-increasing functions $\beta$.
\end{enumerate}

\Cref{ex:definition} below illustrates the procedure of selecting winning candidates according to the $\alpha$-Phragm\'{e}n's and $\beta$-Phragm\'{e}n's rules.

\begin{figure}
\begin{center}
\begin{tikzpicture}
[yscale=1.0,xscale=2.0]
\newcommand{\width}{.5}

\filldraw[fill=green!10!white, draw=black] (-\width/2,0) rectangle (-\width/2 + 4 * \width,0.5)
node[pos=.5] {$c_1$};
\filldraw[fill=green!30!white, draw=black] (-\width/2,0.5) rectangle (-\width/2 + 3 * \width,1.0)
node[pos=.5] {$c_4$};
\filldraw[fill=red!10!white, draw=black] (-\width/2+ 3 * \width,0.5) rectangle (-\width/2 + 5 * \width,1)
node[pos=.5] {$c_5$};
\filldraw[fill=red!30!white, draw=black] (-\width/2+ 6 * \width,0) rectangle (-\width/2 + 7 * \width,.5)
node[pos=.5] {$c_3$};
\filldraw[fill=cyan!10!white, draw=black] (-\width/2+ 4 * \width,0) rectangle (-\width/2 + 6 * \width,0.5)
node[pos=.5] {$c_2$};
\filldraw[fill=cyan!30!white, draw=black] (-\width/2+ 0 * \width,1.0) rectangle (-\width/2 + 2 * \width,1.5)
node[pos=.5] {$c_6$};
\filldraw[fill=cyan!30!white, draw=black] (-\width/2+ 3 * \width,1.0) rectangle (-\width/2 + 4 * \width,1.5)
node[pos=.5] {$c_6$};

\foreach \x in {1,...,7}
\node at (-\width + \x*\width, -0.35) {$v_{\x}$};

\draw [decorate,decoration={brace,amplitude=10pt},yshift=0pt]
(6.5*\width,-0.6) -- (-0.5*\width,-0.6) node [black,midway,below,yshift=-8pt] 
{voters};
\draw [decorate,decoration={brace,amplitude=10pt},yshift=0pt]
(-0.6*\width,-0.1) -- (-0.6*\width,1.5) node [black,midway,below,xshift=-27pt, rotate=90] 
{candidates};
\end{tikzpicture}
\vspace{-0.3cm}
\caption{Graphical representation of the approval profile used in \Cref{ex:definition}. Each voter approves the candidates that are above the voter in the figure. For instance, we have $A(v_4)=\{c_1, c_5, c_6\}$.}
\label{fig:example}
\end{center}
\vspace{-0.6cm}
\end{figure}
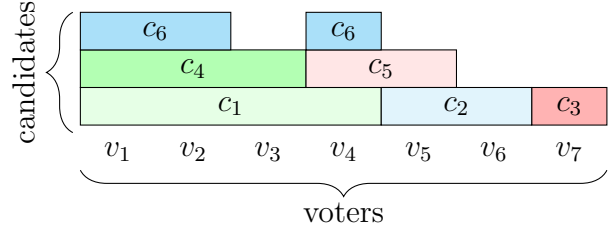

\begin{example}\label{ex:definition}
Consider an instance $E=(C, V, A, k)$ depicted in \Cref{fig:example} for $k=3$, and let us fix two functions: $\alpha(i)=(\nicefrac{1}{10})^{i-1}$ and $\beta(x)=(\nicefrac{1}{4})^{7x}$. The procedures of selecting candidates to the committee according to $\alpha$-Phragm\'{e}n's and $\beta$-Phragm\'{e}n's rules are as follows.

Let us first consider the $\alpha$-Phragm\'{e}n's rule, which starts with the empty set $W= \emptyset$. Each voter earns credits with the speed of $\alpha(1) = 1$ per time unit. At time $\nicefrac{1}{4}$ voters $\{v_1, v_2, v_3, v_4\}$ buy candidate $c_1$---these voters spend all so-far earned money on $c_1$ and their speed of earning changes to $\alpha(2) = \nicefrac{1}{10}$. Next, after $\nicefrac{1}{4}$ time units voters $\{v_5, v_6\}$ buy candidate $c_2$---let us check that voters $\{v_5,v_6\}$ can afford to buy $c_2$. Voters $v_5$ and $v_6$ have $\nicefrac{1}{4} + \nicefrac{1}{4} = \nicefrac{1}{2}$ credits each. Hence, in total they have one credit. At this moment: 
\begin{enumerate}[\hspace{0pt}1.]
    \item $W=\{c_1,c_2\}$;
    \item Each voter from $\{v_1, v_2, v_3, v_4\}$ has $\nicefrac{1}{40}$ credits and their speed of earning is $\alpha(2) = \nicefrac{1}{10}$;
    \item Voters from $\{v_5,v_6\}$ have 0 credits. They have one approved candidate in $W$, thus their speed of earning credits is $\alpha(2) = \nicefrac{1}{10}$;
    \item voter $v_7$ earns credits with the speed of $\alpha(1) = 1$, and she already has $\nicefrac{1}{2}$ credits.
\end{enumerate}
In the last step the rule selects $c_3$. Indeed, after $\nicefrac{1}{2}$ time units voter $v_7$ who approves candidate $c_3$ has one credit. What is more, within the considered time frame:
\begin{enumerate}[\hspace{0pt}1.]
    \item Each voter from $\{v_1, v_2, v_3, v_4\}$ has $\nicefrac{1}{40} + \nicefrac{1}{20} = \nicefrac{3}{40}$ credits;
    \item Each voter from $\{v_5,v_6\}$ has $\nicefrac{1}{20}$ credits.
\end{enumerate}
Based on the above observation, it is easy to see that no other candidate can be bought. Hence, $W=\{c_1, c_2, c_3\}$.

Now, let us consider the $\beta$-Phragm\'{e}n's rule. Again, we start with $W= \emptyset$. Each voter earns money with the speed of one credit per time unit. At time $\nicefrac{1}{1024}$ voters $\{v_1, v_2, v_3, v_4\}$ buy candidate $c_1$ (since candidate $c_1$ is approved by 4 voters, she costs $\beta(\nicefrac{4}{7})=\nicefrac{1}{256}$)---these voters spend all so-far earned money on $c_1$. Next, after $\nicefrac{1}{192}$ time units voters from $\{v_1, v_2, v_3\}$ can buy candidate $c_4$ and voters from $\{v_1, v_2, v_4\}$ can buy candidate $c_6$. On the other hand, let us check that voters $\{v_5, v_6\}$ cannot afford to buy $c_2$ in that time frame. Indeed, voters from $\{v_5, v_6\}$ have $2 \cdot \nicefrac{1}{1024} + 2 \cdot \nicefrac{1}{192} = \nicefrac{3+2^4}{3 \cdot 2^9}<\nicefrac{1}{16}$. 
Let us assume that candidate $c_6$ is bought by $\{v_1, v_2, v_4\}$ and finally after $\nicefrac{1}{288}$ time units voters from $\{v_1, v_2, v_3\}$ can buy candidate $c_4$---indeed, they have $\nicefrac{1}{192} + 3 \cdot \nicefrac{1}{288}=\nicefrac{1}{192} + \nicefrac{1}{96}=\nicefrac{3}{192}=\nicefrac{1}{64}$. One can also easily compute that other candidate cannot be bought in that time frame. Hence, the winning committee is $W=\{c_1, c_4, c_6\}$.
\end{example}

We will compare $\alpha$-Phragm\'{e}n's and $\beta$-Phragm\'{e}n's rules with $\lambda$-Thiele methods, another broad and important class of approval-based committee election rules. Given a non-increasing, convex function $\lambda: \naturals \to \reals$, a $\lambda$-Thiele method picks those committees $W$ that maximize the following quantity:
\begin{align*}
    \lambdascore (W) = \textstyle\sum_{v \in V} \sum_{j=1}^{|W \cap A(v)|}\lambda(j) \text{.}
\end{align*}

A particularly important example of a $\lambda$-Thiele method is Proportional Approval Voting (PAV), implemented by the harmonic sequence of weights $\lambda_{\pav}(i) = \nicefrac{1}{i}$. Given $q \leq 1$ we define the $q$-geometric-Thiele rule as the one that is implemented by $\lambda_{\geom}(i) = q^i$.

\begin{example}
Let us consider an instance $E=(C, V, A, k)$ depicted in \Cref{fig:example} for $k=3$. In the case of PAV rule ($\lambda_{PAV}$-Thiele rule), the winning committees are $W=\{c_1, c_2, c_4\}$ and $W' =\{c_1, c_2, c_6\}$ with the score of
\begin{align*}
    \lambdascore(W) = \lambdascore(W') = 3\cdot(1+\nicefrac{1}{2}) + 3\cdot1= \nicefrac{15}{2}.
\end{align*}
\end{example}

Another important class of approval-based committee election rules consists of  sequential variants of $\lambda$-Thiele methods. The seq-$\lambda$-Thiele rule selects candidates in rounds. Let $W_i$ denote the set of candidates picked until the end of the $i$-th round. In the $i$-th round the rule chooses a candidate $c$ that maximizes $\lambdascore(W \cup \{c\})$, and adds $c$ to the committee. 

\section{Proportionality of $\alpha$/$\beta$-Phragm\'{e}n's Rules}

In this section we assess how well committees returned by $\alpha$-Phragm\'{e}n's and $\beta$-Phragm\'{e}n's rules represent minorities of voters, depending on the sizes of these minorities.

In \Cref{def:pjr_guarantee} we formulate the axiom of proportional justified representation  degree (PJR degree), which is a quantitative variant of PJR~\cite{SFF+17a}. Similarly to PJR and other related properties, such as extended justified representation (EJR)~\cite{azi-bri-con-elk-fre-wal:c:justified-representation} or lower quota~\cite{bri-las-sko:c:apportionment,lac-sko:t:approval-thiele}, our axiom requires that groups of voters of sufficient sizes and with cohesive preferences should have right to decide about certain fractions of elected committees. However, since our goal is to analyze rules which are not proportional in the classic sense, our axiom does not have an encoded threshold specifying how many candidates cohesive groups of voters are allowed to elect. Instead, this threshold is provided as an adjustable function which allows to quantify the level to which the rule respects opinions of voters with cohesive preferences. This way, the axiom is more similar to proportionality degree~\cite{skowron:prop-degree}, a quantitative version of EJR~\cite{azi-bri-con-elk-fre-wal:c:justified-representation}.  Considering a quantitative variant of PJR rather than of EJR is motivated by the fact that the original Phragm\'{e}n's rule which we generalise in this paper, satisfies PJR and violates EJR. Thus, when analyzing the PJR degree of $\alpha$/$\beta$-Phragm\'{e}n's rules for other than constant $\alpha$/$\beta$-functions, we will have a reference point of a perfectly linearly-proportional rule in the class.


\begin{definition}[Proportional justified representation (PJR) degree]\label{def:pjr_guarantee}
Let $f\colon [0,1] \times \naturals \to \rationals$. We say that a rule $\calR$ has the PJR degree of $f$ if for each election instance $E = (C, V, A, k)$, each winning committee
$W \in \calR(E)$, and each group of voters $S \subset V$ it holds that:
\begin{align*}
\left|\bigcup_{v \in S} (A(v) \cap W) \right| \geq \min\left(\left| \bigcap_{v \in S}A(v) \right|, \Big\lfloor f(\nicefrac{|S|}{|V|}, k) \Big\rfloor\right) \text{.}
\end{align*}
\end{definition}

For example, if a group of voters $S$ form $20\%$ of the whole society and if it agrees on sufficiently many candidates ($\left| \bigcap_{v \in S}A(v) \right|$ is sufficiently high), then the PJR degree of $f$ means this group is allowed to decide at least about $f(0.2, k)$ members of the committee. If those voters agree on less than $f(0.2, k)$ candidates, the number of candidate they are allowed to decide about is truncated accordingly.  


Below, we present the main theoretical results for the $\alpha$-Phragm\'{e}n's rule.

\begin{theorem}\label{thm:alpha}
For a non-increasing, positive function $\alpha: \naturals_+ \to (0,1]$. The $\alpha$-Phragm\'{e}n's rule has the PJR degree of $f_\alpha$, where $f_\alpha(\gamma,k)$ is the largest natural number $\ell$ such that:
\begin{align*}
    \sum_{i=1}^{\ell}\nicefrac{1}{\alpha(i)} \leq (k-\ell+1)\frac{\gamma}{1-\gamma}.
\end{align*}
\end{theorem}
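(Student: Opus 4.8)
The plan is to prove the contrapositive of the inequality in \Cref{def:pjr_guarantee} by a budget/time argument in the spirit of standard \phragmen\ analyses, but with the earning speeds tracked carefully. Fix an instance $E$, a winning committee $W\in\calR(E)$, and a group $S$ with $\nicefrac{|S|}{|V|}=\gamma$, and write $W_S=\bigcup_{v\in S}(A(v)\cap W)$ for the set of elected candidates approved by someone in $S$. Let $\ell=f_\alpha(\gamma,k)$, the largest integer with $\sum_{i=1}^{\ell}\nicefrac{1}{\alpha(i)}\le (k-\ell+1)\tfrac{\gamma}{1-\gamma}$, and suppose for contradiction that $|W_S|\le \ell-1$ while $|\bigcap_{v\in S}A(v)|\ge \ell$. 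Since every elected commonly-approved candidate lies in $W_S$, there is then a candidate $c^*\in\bigcap_{v\in S}A(v)$ with $c^*\notin W$. Let $T$ be the time at which the rule pays for its $k$-th (final) candidate.

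First I would lower-bound $T$ using the voters outside $S$. The committee contains $k-|W_S|\ge k-\ell+1$ candidates that no voter in $S$ approves, and these are paid for entirely by the $(1-\gamma)|V|$ voters in $V\setminus S$, so that group spends at least $k-\ell+1$ in total. As every voter earns at speed at most $\alpha(1)=1$, the credit available to $V\setminus S$ by time $T$ is at most $(1-\gamma)|V|\,T$, whence
\[
T\ \ge\ \frac{k-\ell+1}{(1-\gamma)\,|V|}.
\]

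Second, and this is the crux, I would upper-bound $T$ using $S$ itself. Because $c^*$ is approved by all of $S$ and is never bought, the total unspent credit of the voters approving $c^*$, hence of $S$, stays strictly below $1$ at time $T$; and $S$ spends at most $1$ on each of the at most $\ell-1$ candidates of $W_S$ (each costs $1$, a cost possibly shared with voters outside $S$), so $S$ spends at most $\ell-1$ in total. Thus the credit \emph{earned} by $S$ by time $T$ is strictly less than $\ell$. The goal is to turn this into $T<\tfrac{1}{\gamma|V|}\sum_{i=1}^{\ell}\nicefrac{1}{\alpha(i)}$. Here the non-increasing schedule must be exploited: each voter in $S$ buys at most $\ell-1$ candidates, so she is always in one of the phases $1,\dots,\ell$, and the collective earning rate $\sum_{v\in S}\alpha(d_v(t)+1)$ (where $d_v(t)$ is the number of candidates $v$ has paid for by time $t$) starts at $\gamma|V|$ and can drop only after $S$ pays for candidates, which itself consumes earned credit. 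I would formalise that, subject to earning at most $\ell$ overall, this rate is lowered fastest when $S$ buys commonly-approved candidates in lockstep---each such purchase advances all $\gamma|V|$ voters by one phase at total cost $1$---so the run minimising $S$'s earnings for a given $T$ is the fully synchronised one, in which the $i$-th common purchase costs an extra $\tfrac{1}{\gamma|V|\alpha(i)}$ units of time. In that extremal run the earnings reach exactly $\ell$ at time $\tfrac{1}{\gamma|V|}\sum_{i=1}^{\ell}\nicefrac{1}{\alpha(i)}$, giving the required strict bound on $T$.

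Combining the two bounds yields
\[
\frac{k-\ell+1}{(1-\gamma)\,|V|}\ \le\ T\ <\ \frac{1}{\gamma\,|V|}\sum_{i=1}^{\ell}\frac{1}{\alpha(i)},
\]
that is $\sum_{i=1}^{\ell}\nicefrac{1}{\alpha(i)}>(k-\ell+1)\tfrac{\gamma}{1-\gamma}$, contradicting the choice of $\ell$. The main obstacle is the second step: the naive estimate that replaces every phase by the slowest speed $\alpha(\ell)$ gives only the weaker threshold $\ell/\alpha(\ell)\ge\sum_{i=1}^\ell\nicefrac{1}{\alpha(i)}$, so the argument must genuinely track how earned credit is spent to advance phases and must establish that synchronised purchasing of commonly-approved candidates is extremal. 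Accommodating voters of $S$ that buy different numbers of candidates at different times, and candidates of $W_S$ co-approved from outside $S$ (which only returns credit to $S$ and should be absorbable into the same exchange argument), is precisely where the care is needed.
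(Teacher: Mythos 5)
Your overall skeleton---lower-bounding the terminal time $T$ through the spending of $V\setminus S$, upper-bounding it through $S$, and playing the two bounds against the defining inequality of $\ell$---is sound and is in essence the paper's own strategy (the paper phrases the second half as minimising $S$'s leftover money at $t^*$ subject to the constraint that $S$ never holds a full credit, and reaches the same synchronised extremal run). But your crux step fails as stated. The claim that, for a given $T$, the run minimising $S$'s earnings is the fully synchronised, self-funded one is false, because phases can be advanced essentially for free: if every candidate of $W_S$ is co-approved by a large mass of voters outside $S$, each is bought almost immediately, with the voters of $S$ contributing only their tiny current balances; after time $\approx 0$ all of $S$ already earns at the slowest admissible rate $\gamma|V|\alpha(\ell)$. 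At any common feasible time this run has strictly smaller earnings than the synchronised one (at $T=\nicefrac{1}{\gamma|V|}$ it has earned roughly $\alpha(\ell)<1$ versus exactly $1$), so the exchange argument cannot establish your extremality claim. Contrary to your closing remark, outside co-payment is not absorbable in your framing: it does return credit to $S$, but it still advances phases, which is precisely what an earnings-minimiser wants. Worse, the aggregate constraint ``earned $<\ell$'' is genuinely too weak to yield the needed time bound: with all of $S$ at speed $\alpha(\ell)$ from the start it is consistent with $T=\nicefrac{\ell}{\gamma|V|\alpha(\ell)}$, which exceeds $T^*=\frac{1}{\gamma|V|}\sum_{i=1}^{\ell}\nicefrac{1}{\alpha(i)}$ whenever $\alpha$ is non-constant on $[\ell]$.

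What rescues the argument is that the binding constraint is not total earnings but the balance of $S$ staying below $1$ at \emph{every} moment, applied interval by interval. Since $\left|\bigcap_{v\in S}A(v)\right|\ge\ell>|W_S|$, an unelected commonly approved candidate exists throughout the run, so $S$'s unspent balance never reaches $1$. Let $t_1\le\dots\le t_z$ with $z\le\ell-1$ be the purchases involving $S$. After $j-1$ of them, every voter of $S$ has paid at most $j-1$ times and hence earns at speed at least $\alpha(j)$, so during the $j$-th interval the group balance rises from a value $\ge 0$ at rate at least $\gamma|V|\alpha(j)$; staying below $1$ forces that interval to last less than $\nicefrac{1}{\gamma|V|\alpha(j)}$. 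Summing the at most $\ell$ intervals gives $T<\frac{1}{\gamma|V|}\sum_{i=1}^{\ell}\nicefrac{1}{\alpha(i)}$ unconditionally, with no earnings bookkeeping, and combined with your correct lower bound $T\ge\frac{k-\ell+1}{(1-\gamma)|V|}$ this contradicts the defining inequality of $\ell$ exactly as you intended. This per-interval view also explains why the subsidised runs are harmless: cheap phase advances lower the rate, but the bound $\nicefrac{1}{\gamma|V|\alpha(j)}$ already assumes the slowest admissible rate in each interval, and any credit retained by $S$ at a purchase only shortens the next interval.
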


\begin{proof}
Observe that the $\alpha$-Phrgamen's rule is defined as a process of earning and spending money with certain constraints; these constraints, e.g., say that the voters make their purchases greedily. In order to assess the number of selected candidates, who are approved by the voters from a group $S$, we will analyse a procedure of buying candidates with relaxed constraints. For this simpler procedure we will be able to identify the case where the voters from $S$ have least money left at the end. Since our constraints are relaxed, this will give us lower bounds on the amount of money that the voters from $S$ are left with at the end of the execution of the $\alpha$-Phrgamen's rule. We will show that this amount of money is sufficient to buy an additional candidate unless certain number of candidates have been already bought by the voters from $S$.

Let us look at the following process of earning and spending money by the voters from~$S$. We first fix the time when the procedure stops, $t^*$.  The only constraint that we have is that the voters from $S$ cannot have more than one unit of money in the specific time (otherwise, they would buy an additional candidate). In that scheme we will assess the amount of money that the voters from $S$ have at time $t^*$. We aim to find instances such that the amount of credits left at $t^*$ is the smallest. 

At first, we start with an arbitrary instance of election $E$.
Let $z$ be the largest natural number such that in each instance, in time $t^*$ the voters from $S$ will buy at least $z$ candidates. 
In particular, in $E$ our process will select at least $z$ candidates who are approved by some voters in $S$. Let $t_1, t_2, \ldots, t_z$ be the time points in which some voters from $S$ are involved in some purchases of candidates;  for each $i\in \{2, \ldots, z\}$, $t_{i-1} \leq t_i$ and $t_z \leq t^*$. Without loss of generality, assume that in time $t_i$ candidate $c_i$ is bought. Our constraint says that the voters from $S$ pay in total at most one unit of money for $c_i$.
Now, we are going to modify the instance in such a way that after all such modifications the voters from $S$ have the least amount of money in time $t^*$. We can consider the minimisation problem with the following objective function
\begin{align}\label{proof:min:alpha}
    \sum_{i=1}^{|S|} (t^*-t_{v_i})\alpha(z_i),
\end{align}
where $t_{v_i}$ is the last time when $v_i$ made her last purchase, and $\alpha(z_i)$ is $v_i$'s speed of earning credits after buying her last candidate; in particular this means that voter $i$ approves $z_i-1$ selected candidates. Now, we observe that for each instance we can reduce the value of objective function by performing the following procedure.
\begin{enumerate}
    \item We modify each purchase of candidates by making all voters to pay for the specific candidate. Observe that this modification lowers the value of the objective function as for each $i\in \{1,\ldots, n\}$, $t_{v_i}=t_z$ and $\alpha(z_i)=\alpha(z+1)$. 
    Note that this modification will not violate the constraint of the purchase process since at each time $t_i$ the voters from $S$ have at most one dollar in total, thus they will all pay for each selected candidate at most one dollar. Otherwise, they would buy an additional candidate from the set on which they agree. 
    
    \item We observe that $t_z$ is the highest when (1) all the voters pay one credit for each candidate, and when (2) each candidate who is paid by some voters from $S$ is paid only by such voters. In that scenario we have
    \begin{align*}
        t_z = \nicefrac{1}{|S|}\sum_{i=1}^z \nicefrac{1}{\alpha(i)}
    \end{align*}
    and the objective function equals to
    \begin{align}\label{test}
        \sum_{i=1}^{|S|} (t^*-t_{v_i})\alpha(v_i) 
        = \sum_{i=1}^{|S|} (t^* - t_z)\alpha(z+1) 
        = \left(t^* - \nicefrac{1}{|S|}\sum_{i=1}^z \nicefrac{1}{\alpha(i)}\right)\alpha(z+1) |S|.
    \end{align}
\end{enumerate}
To sum up, the instances that solve the minimization problem given by the objective function \eqref{proof:min:alpha} are such that in each time of purchase all of the voters $S$ buy a candidate for the amount of one credit.

Now, we assess the lower bound on terminal time $t^*$ of the process of selecting $k$ candidates to the committee. If the voters from $S$ select $z$ candidates, then those from $V \setminus S$ are responsible for selecting $(k-z)$ candidates. Hence, they have to earn at least $(k-z)$ credits. Due to the assumption that the $\alpha$ function is non-increasing, we see that the voters from $V \setminus S$ need at least $t^*_{z} = \nicefrac{(k-z)}{|V \setminus S|}$ units of time to earn $(k - z)$ credits.

Let us assess what are the values of $\ell$ that have the following property: if the voters from $S$ buy $(\ell-1)$ candidates, they would be guaranteed to have more than 1 credit by the time the process stops and therefore they could buy one more candidate. Let  $\gamma = \nicefrac{|S|}{|V|}$ Using the above considered minimization problem with the form of objective function specified in \eqref{test}, we have the following inequalities
\begin{align*}
    1 &\leq \left(t^*_{\ell -1}-\frac{1}{|S|}\sum_{i=1}^{\ell-1}\frac{1}{\alpha(i)}\right)\alpha(\ell)|S| \implies \\
    1 &\leq \left(\frac{k-\ell+1}{|V|-|S|}-\frac{1}{|S|}\sum_{i=1}^{\ell-1}\frac{1}{\alpha(i)}\right)\alpha(\ell)|S| \implies \\
    \sum_{i=1}^\ell\nicefrac{1}{\alpha(i)} &\leq (k-\ell+1)\frac{\gamma}{1-\gamma}.
\end{align*}
Note that if the above inequality is satisfied for a given value of $\ell$, then it is also satisfied for lower values of $\ell$. Thus, the largest value of $\ell$ that satisfies the above inequality guarantees that the voters from $S$ cannot buy less than $\ell$ candidates.
\end{proof}

The results presented in the \Cref{thm:alpha} are general, but hard to interpret. In order to simplify the expression presented in the \Cref{thm:alpha} we observe that $\sum_{i=1}^{\ell}\nicefrac{1}{\alpha(i)} \geq \ell$, and formulate the following corollary.

\begin{corollary}\label{cor:alpha}
Fix a non-increasing, positive function $\alpha: \naturals_+ \to (0,1]$. The $\alpha$-Phragm\'{e}n's rule has the PJR degree of $f_\alpha$, where $f_\alpha(\gamma, k)$ is the largest natural number $\ell$ such that:
\begin{align*}
    \sum_{i=1}^{\ell}\nicefrac{1}{\alpha(i)} \leq \gamma(k+1).
\end{align*}
\end{corollary}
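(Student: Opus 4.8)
The plan is to derive \Cref{cor:alpha} directly from \Cref{thm:alpha} by observing that the bound in the corollary is a pointwise weakening of the bound already established. Write $f$ for the function produced by \Cref{thm:alpha} and $g$ for the candidate PJR degree asserted by the corollary, so that $g(\gamma,k)$ is the largest natural number $\ell$ with $\sum_{i=1}^{\ell}\nicefrac{1}{\alpha(i)} \leq \gamma(k+1)$. The right-hand side of \Cref{def:pjr_guarantee} is $\min(|\bigcap_{v\in S}A(v)|, \lfloor f(\gamma,k)\rfloor)$, which is monotone in $f$; hence any rule with PJR degree $f$ also has PJR degree of any pointwise-smaller function. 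It therefore suffices to show $g(\gamma,k) \leq f(\gamma,k)$ for all $\gamma$ and $k$. Both functions are already integer-valued, so the floors play no role.

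To establish $g \leq f$ I would show that every $\ell$ admissible for $g$ is admissible for $f$. The one ingredient beyond algebra is the elementary bound $\sum_{i=1}^{\ell}\nicefrac{1}{\alpha(i)} \geq \ell$, already noted in the text, which holds because $\alpha(i)\in(0,1]$ forces $\nicefrac{1}{\alpha(i)} \geq 1$. Fix $\ell$ with $\sum_{i=1}^{\ell}\nicefrac{1}{\alpha(i)} \leq \gamma(k+1)$; combining this with the bound just recalled yields $\ell \leq \gamma(k+1)$.

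It then remains to check the purely algebraic equivalence, valid for $\gamma\in(0,1)$,
\[
    \gamma(k+1) \;\leq\; (k-\ell+1)\frac{\gamma}{1-\gamma}
    \qquad\Longleftrightarrow\qquad
    \ell \;\leq\; \gamma(k+1),
\]
obtained by multiplying through by $\nicefrac{(1-\gamma)}{\gamma} > 0$ and cancelling. Since we have already shown $\ell \leq \gamma(k+1)$, the left inequality holds, and the chain $\sum_{i=1}^{\ell}\nicefrac{1}{\alpha(i)} \leq \gamma(k+1) \leq (k-\ell+1)\frac{\gamma}{1-\gamma}$ shows that $\ell$ satisfies the condition of \Cref{thm:alpha}. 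Consequently the largest $\ell$ admissible for $g$ is admissible for $f$, i.e.\ $g(\gamma,k)\leq f(\gamma,k)$. The boundary cases $\gamma=0$, where both conditions force $\ell=0$, and $\gamma=1$, where the condition of \Cref{thm:alpha} is vacuous, are handled directly.

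I expect the only genuine subtlety to be getting the direction of the reduction right: one must argue that replacing $f$ by the smaller $g$ \emph{weakens} the guarantee rather than strengthening it, so that the conclusion of \Cref{thm:alpha} indeed implies that of \Cref{cor:alpha}. The role of $\sum_{i=1}^{\ell}\nicefrac{1}{\alpha(i)} \geq \ell$ is precisely to convert the $\ell$-dependent right-hand side $(k-\ell+1)\nicefrac{\gamma}{(1-\gamma)}$ into the clean, $\ell$-free bound $\gamma(k+1)$; everything else is a single line of algebra.
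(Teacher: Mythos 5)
Your proposal is correct and follows exactly the paper's intended derivation: the paper obtains \Cref{cor:alpha} from \Cref{thm:alpha} via the single observation $\sum_{i=1}^{\ell}\nicefrac{1}{\alpha(i)} \geq \ell$, which is precisely the ingredient you use to pass from the $\ell$-dependent bound $(k-\ell+1)\frac{\gamma}{1-\gamma}$ to the clean bound $\gamma(k+1)$. Your write-up merely makes explicit two points the paper leaves implicit---the monotonicity of the PJR-degree guarantee in $f$ and the algebraic equivalence $\gamma(k+1) \leq (k-\ell+1)\frac{\gamma}{1-\gamma} \Leftrightarrow \ell \leq \gamma(k+1)$---both of which check out.
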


\Cref{cor:alpha} provides bounds which are easier to interpret, but weaker than those given in \Cref{thm:alpha}. Nevertheless, even \Cref{cor:alpha} itself implies that the classic Phragm\'{e}n's rule, which corresponds to the $\alpha$-Phragm\'{e}n's with the constant function $\alpha(i) = 1$, has the PJR degree of $f(\gamma, k) = \gamma(k+1) > \gamma \cdot k$, which implies satisfying PJR. 

We will use the simplified expression from \Cref{cor:alpha} to obtain PJR degree for specific classes of geometric Phragm\'{e}n rules with $q<1$.

\begin{proposition}\label{prop:geometric}
For a function $\alpha(i)=q^{i-1}$, where $q<1$, the $\alpha$-Phragm\'{e}n's rule has the PJR degree of $f_\alpha$:
    \begin{align*}
        f_\alpha(\gamma, k) = \lfloor\log_{\nicefrac{1}{q}} [\gamma(k+1)(\nicefrac{1}{q} - 1) + 1] - 1\rfloor.
    \end{align*}
\end{proposition}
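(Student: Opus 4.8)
The plan is to read off the closed form directly from \Cref{cor:alpha}, which states that $f_\alpha(\gamma,k)$ is the largest natural number $\ell$ with $\sum_{i=1}^{\ell}\nicefrac{1}{\alpha(i)} \le \gamma(k+1)$. The entire task thus reduces to evaluating this sum for the geometric sequence $\alpha(i)=q^{i-1}$ and then solving the resulting inequality for $\ell$. First I would substitute $\nicefrac{1}{\alpha(i)} = (\nicefrac{1}{q})^{i-1}$, which turns the sum into a finite geometric series of common ratio $\nicefrac{1}{q}>1$; note that this is exactly where the hypothesis $q<1$ enters. Writing $r=\nicefrac{1}{q}$, the partial sum has the standard closed form $\sum_{i=1}^{\ell} r^{i-1}=\frac{r^{\ell}-1}{r-1}$.

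Next I would insert this closed form into the inequality of \Cref{cor:alpha} and isolate $\ell$. Starting from $\frac{r^{\ell}-1}{r-1}\le \gamma(k+1)$ and multiplying through by $r-1>0$, I would obtain $r^{\ell}\le \gamma(k+1)(r-1)+1$, that is, $(\nicefrac{1}{q})^{\ell}\le \gamma(k+1)(\nicefrac{1}{q}-1)+1$. Since $r>1$, the map $x\mapsto\log_{r}x$ is strictly increasing, so applying $\log_{\nicefrac{1}{q}}$ preserves the inequality and yields $\ell\le \log_{\nicefrac{1}{q}}[\gamma(k+1)(\nicefrac{1}{q}-1)+1]$. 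The admissible values of $\ell$ therefore form an initial segment of $\naturals_+$, and the largest one is obtained by taking the floor of the right-hand side, producing the closed form asserted in the proposition (with the precise additive constant pinned down by the boundary analysis below).

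The algebra here is entirely routine; the only place that genuinely needs care is the discrete boundary. Because \Cref{cor:alpha} asks for the \emph{largest} integer satisfying a weak inequality, I must determine precisely whether the candidate endpoint satisfies the inequality with equality, and hence exactly where the floor lands. I would settle this, and confirm the additive constant inside the floor, by testing a transparent special case such as $q=\nicefrac{1}{2}$, where $r=2$ and the sum collapses to $2^{\ell}-1$, so the condition becomes $2^{\ell}\le \gamma(k+1)+1$ and the extremal $\ell$ can be read off by hand. Apart from this boundary bookkeeping, the argument uses nothing beyond the geometric-series identity and the monotonicity of $\log_{\nicefrac{1}{q}}$, so I expect no deeper obstacle.
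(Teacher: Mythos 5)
Your route is exactly the paper's: substitute the geometric sequence into \Cref{cor:alpha}, close the partial sum, and invert via the increasing map $\log_{\nicefrac{1}{q}}$. The algebra you sketch is correct, but it does not land on the stated formula, and your own planned sanity check will expose this. With $r=\nicefrac{1}{q}$ you correctly get $\sum_{i=1}^{\ell} \nicefrac{1}{\alpha(i)}=\sum_{i=1}^{\ell} r^{i-1}=\frac{r^{\ell}-1}{r-1}$, hence $\ell \leq \log_{\nicefrac{1}{q}}\left[\gamma(k+1)(\nicefrac{1}{q}-1)+1\right]$ \emph{without} the trailing $-1$. The paper's proof instead writes the partial sum as $\frac{1-\nicefrac{1}{q^{\ell+1}}}{1-\nicefrac{1}{q}}$, which equals $\sum_{i=0}^{\ell} r^{i}$, an $(\ell+1)$-term sum --- an off-by-one relative to the $\ell$-term sum demanded by \Cref{cor:alpha} --- and this is precisely where the proposition's additive $-1$ comes from. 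Your test case $q=\nicefrac{1}{2}$ confirms your version, not the paper's: the condition $2^{\ell}-1\leq\gamma(k+1)$ gives largest $\ell=\lfloor\log_{2}[\gamma(k+1)+1]\rfloor$, one more than the proposition's $\lfloor\log_{2}[\gamma(k+1)+1]-1\rfloor$. So the ``boundary bookkeeping'' you defer cannot ``pin down'' the constant $-1$; executed honestly, it contradicts it.

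This leaves a genuine (if small) gap between what you prove and what you claim to prove, and it has an easy repair you should state explicitly. By \Cref{def:pjr_guarantee}, if $f'\leq f$ pointwise then $\min\left(\left|\bigcap_{v\in S}A(v)\right|,\lfloor f'\rfloor\right)\leq\min\left(\left|\bigcap_{v\in S}A(v)\right|,\lfloor f\rfloor\right)$, so a rule with PJR degree $f$ also has PJR degree $f'$: the guarantee is downward monotone. Your derivation shows the $q$-geometric $\alpha$-Phragm\'{e}n's rule has the PJR degree $\lfloor\log_{\nicefrac{1}{q}}[\gamma(k+1)(\nicefrac{1}{q}-1)+1]\rfloor$, which is one more than the proposition's expression (the integer $1$ commutes with the floor), and hence the stated, weaker degree follows a fortiori. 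Add that one-line monotonicity remark and your proof is complete --- and in fact establishes a slightly stronger bound than \Cref{prop:geometric} itself.
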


\begin{proof}
We use the \Cref{cor:alpha}---we have:
\begin{align*}
    & \sum_{i=1}^{\ell}\nicefrac{1}{\alpha(i)} \leq \gamma(k+1) \implies 
    \frac{1 - \nicefrac{1}{q^{l+1}}}{1 - \nicefrac{1}{q}} \leq \gamma(k+1) \implies \\
    & 1 - \nicefrac{1}{q^{l+1}} \geq \gamma(k+1)(1 - \nicefrac{1}{q}) \implies 
     \nicefrac{1}{q^{l+1}} \leq \gamma(k+1)(\nicefrac{1}{q} - 1) + 1 \implies \\
    & \log_{\nicefrac{1}{q}} \nicefrac{1}{q}^{\ell+1} \leq \log_{\nicefrac{1}{q}} [\gamma(k+1)(\nicefrac{1}{q} - 1) + 1] \implies 
    \ell \leq \log_{\nicefrac{1}{q}} [\gamma(k+1)(\nicefrac{1}{q} - 1) + 1] - 1.
\end{align*}
Hence, the PJR degree $f_\alpha(\gamma, k)$ is given by
\begin{align*}
    \lfloor\log_{\nicefrac{1}{q}} [\gamma(k+1)(\nicefrac{1}{q} - 1) + 1] - 1\rfloor.
\end{align*}
\end{proof}

\begin{figure}[!t]
\begin{center}

\minipage{0.5\textwidth}
  \centering
  \includegraphics[width=\linewidth]{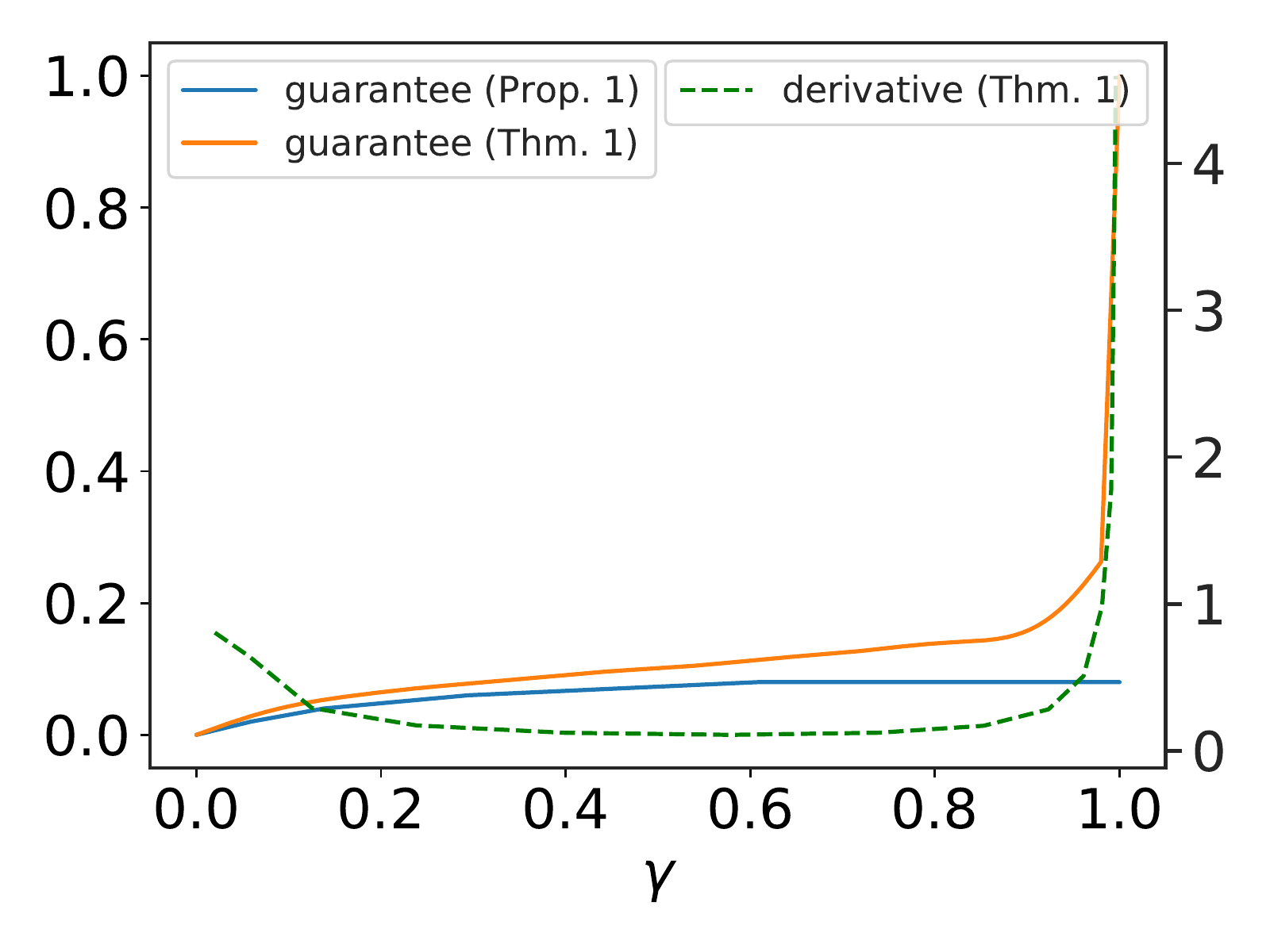}
  (a) $\alpha(i)=0.5^i; k=50$
\endminipage\hfill
\minipage{0.5\textwidth}
  \centering
  \includegraphics[width=\linewidth]{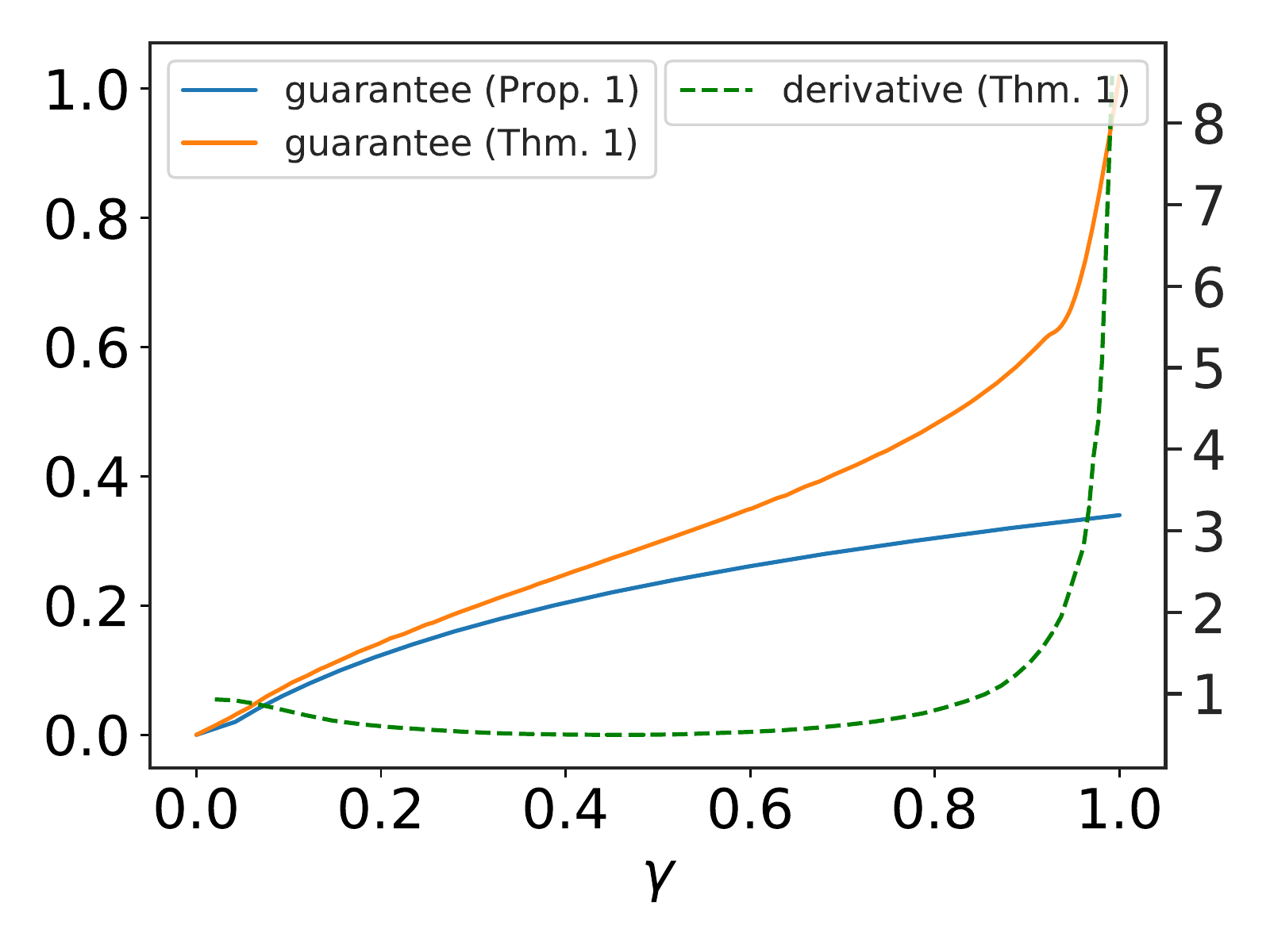}
  (b) $\alpha(i)=0.9^i; k=50$
\endminipage

\end{center}
\vspace{-0.3cm}
\caption{Lower bounds on the PJR degree (presented as a fraction of $k$) derived via \Cref{thm:alpha} and \Cref{prop:geometric} for two geometric $\alpha$-Phragm\'{e}n's rules. The right scale in $y$-axis is only relevant for the third line, which is the derivative of the guarantee derived via \Cref{thm:alpha}.}\label{fig:pjr_degree_alpha}
\end{figure}

The comparison of the results implied by \Cref{thm:alpha} and \Cref{prop:geometric} for two concrete examples of geometric $\alpha$-Phragm\'{e}n's rules are depicted in \Cref{fig:pjr_degree_alpha}. 
This figure leads to an interesting interpretation of what is degressive proportionality. At first one could expect that the PJR degree of a degressive proportional rule should have a plot that for small values of $\gamma$ lies above the plot for the linear function $g(\gamma) = \gamma$. Somehow surprisingly, this is not the case and the worst-case guarantee of each group is worse than in case of linear proportionality. The intuitive reason is that for each small cohesive group of voters there can always appear multiple groups which are even smaller and which should be even more privileged. \Cref{fig:pjr_degree_alpha} quantifies this effect in the worst-case. On the other hand, degressive proportionality means that for small values of $\gamma$ the ratio the PJR degree between a larger and a smaller group of voters is sublinear compared to the ratio of the sizes of the groups. This is visible by observing that for small values of $\gamma$ (here $\gamma \leq \nicefrac{1}{2}$) the derivative of the PJR degree is decreasing. We infer that one of the distinctive properties of the degressive proportionality is that the derivative of the proportionality guarantee is convex.  

Let us now we move to the analysis of $\beta$-Phragm\'{e}ns' rules.
\begin{theorem}\label{thm:beta}
Fix a non-increasing, positive function $\beta:[0,1] \to (0,1]$ with $\beta(0)=1$. The $\beta$-Phragm\'{e}n's rule has the PJR degree of $f_\beta$, where
\begin{align*}
    f_\beta(\gamma, k) = \Big\lfloor (k+1)\frac{\gamma\beta(1-\gamma)}{(1-\gamma)\beta(\gamma) + \gamma \beta(1-\gamma)}\Big\rfloor
\end{align*}
\end{theorem}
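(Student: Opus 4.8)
The plan is to follow the same template as the proof of \Cref{thm:alpha}: relax the rule to an abstract earning-and-spending process, isolate the worst-case instance for a fixed cohesive group $S$ with $\gamma = \nicefrac{|S|}{|V|}$, and then play off the time the group $S$ needs to accumulate credits against the time the remaining voters need to complete the committee. The essential difference from the $\alpha$-case is that here every voter earns at the constant unit speed, whereas the \emph{cost} of a candidate varies with the fraction of its approvers, so the extremal reasoning will be about costs rather than about earning speeds.

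First I would pin down the worst case. Since $\beta$ is non-increasing, a candidate lying in $\bigcap_{v\in S}A(v)$ is approved by at least $|S|$ voters and therefore costs at most $\beta(\gamma)$; the adversary makes it as expensive as possible by having it approved by exactly $S$, so that its cost is precisely $\beta(\gamma)$ and no outside voter helps pay for it. Symmetrically, a candidate approved by no voter of $S$ is approved by at most $(1-\gamma)|V|$ voters and hence costs at least $\beta(1-\gamma)$. As in the minimization step of \Cref{thm:alpha}, I would check that the configuration minimizing the credits left to $S$ is the one in which the members of $S$ share every purchase equally, all of $S$'s candidates cost exactly $\beta(\gamma)$, and all candidates outside $S$ cost $\beta(1-\gamma)$.

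With this reduction in hand I would argue by contradiction. Suppose the group secures only $\ell-1$ of its cohesive candidates while $|\bigcap_{v\in S}A(v)|\ge \ell$, so that a further cohesive candidate is always available. Then the remaining $k-\ell+1$ committee members are paid for entirely by $V\setminus S$, and since each of them costs at least $\beta(1-\gamma)$ while $V\setminus S$ earns credits at total rate $(1-\gamma)|V|$, the terminating time satisfies $t^* \ge \nicefrac{(k-\ell+1)\beta(1-\gamma)}{(1-\gamma)|V|}$. By time $t^*$ the group $S$ has earned $\gamma|V|t^*$ credits and spent at most $(\ell-1)\beta(\gamma)$, so for $S$ to be unable to buy an $\ell$-th candidate we would need $\gamma|V|t^* - (\ell-1)\beta(\gamma) < \beta(\gamma)$, i.e. $\gamma|V|t^* < \ell\beta(\gamma)$. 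Substituting the lower bound on $t^*$ and clearing denominators turns this into $\gamma(k-\ell+1)\beta(1-\gamma) < \ell(1-\gamma)\beta(\gamma)$, which rearranges to $\ell > (k+1)\frac{\gamma\beta(1-\gamma)}{(1-\gamma)\beta(\gamma)+\gamma\beta(1-\gamma)}$. Contrapositively, whenever $\ell$ is at most this quantity the group is forced to secure an $\ell$-th candidate; the largest integer $\ell$ for which this holds is exactly $f_\beta(\gamma,k)$, which together with the truncation at $|\bigcap_{v\in S}A(v)|$ built into the $\min$ yields the claimed PJR degree.

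The main obstacle I expect is the worst-case reduction rather than the arithmetic. Because a candidate approved by only a few members of $S$ can be very expensive (its cost approaches $\beta(0)=1$), I must rule out the possibility that the adversary drains $S$'s budget through such thinly-approved candidates before $S$ secures its cheap cohesive candidates. The greedy nature of the rule is what saves the argument—a cohesive candidate of cost $\beta(\gamma)$ becomes affordable to the whole of $S$ much earlier than a thinly-approved candidate becomes affordable to the handful of $S$-voters backing it—but making this formal, and simultaneously justifying the equal-sharing extremal configuration, is the delicate part that mirrors the explicit minimization \eqref{proof:min:alpha} in the proof of \Cref{thm:alpha}.
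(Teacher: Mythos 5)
Your proposal is correct and matches the paper's proof essentially step for step: the paper likewise bounds $t^*$ from above by $\frac{z+1}{|S|}\beta(\gamma)$ (since an unelected cohesive candidate always remains, $S$ never holds $\beta(\gamma)$ unspent credits, hence pays at most $\beta(\gamma)$ per purchase, for a total of at most $z\beta(\gamma)$) and from below by $(k-z)\beta(1-\gamma)\nicefrac{1}{(|V|-|S|)}$, then combines the two inequalities and solves for $\ell$ exactly as you do. The ``delicate part'' you flag at the end is actually a non-issue: the holdings-cap observation you already invoke disposes of thinly-approved candidates directly (any purchase involving $S$'s money draws at most $\beta(\gamma)$ from the group and counts toward the union $\bigcup_{v\in S}(A(v)\cap W)$), so no equal-sharing extremal minimization in the style of \Cref{thm:alpha} is needed --- the paper's proof contains none.
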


\begin{proof}
We will follow a similar proof technique to the one used in the proof of \Cref{thm:alpha}. 
Let us fix a group of voters $S$, and let $\gamma = \nicefrac{|S|}{|V|}$. Consider an election instance $E$. Assume the rule applied to $E$ ends in time $t^*$ and that the voters from $S$ approve $z < \min \left(\left|\bigcap_{v \in S}A(v) \right|, \left\lfloor f(\gamma, k) \right\rfloor\right)$ distinct elected candidates. First observe that at each time moment there exists a not elected candidate who is approved by all the voters from $S$. Thus, the voters from $S$ at each time cannot have more than $\beta(\gamma)$ unspent money. This means that at each time they paid for a candidate no more than $\beta(\gamma)$. Since they paid for $z$ candidates they paid in total no more than $z\beta(\gamma)$ dollars.

In time $t^*$ the total amount of money earned by the voters from $S$ equals $|S|t^*$. At that time the voters cannot be left with $\beta(\gamma)$ dollars or more, since then they would buy an additional candidate they all approve of. Thus
\begin{align*}
    |S|t^* - z\beta(\gamma) < \beta(\gamma)  \quad \text{and so~} t^* < \frac{z+1}{|S|}\beta(\gamma) \text{.} 
\end{align*}



On the other hand, we can observe that the voters from $V \backslash S$ have to earn for at least $(k-z)$ candidates on their own. The fastest way to do it is when they have $(k-z)$ candidates in common. Consequently:
\begin{align*}
    t^* \geq (k-z)\beta(1-\gamma)\nicefrac{1}{(|V|-|S|)}.
\end{align*}
Thus, it must hold that 
\begin{align}\label{eq:betaPhrgaCond}
   \frac{z+1}{|S|}\beta(\gamma) > (k-z)\beta(1-\gamma)\nicefrac{1}{(|V|-|S|)}.
\end{align}

In other words, if the above inequality did not hold, we would reach a contradiction with the initial assumption that $z < \min \left(\left|\bigcap_{v \in S}A(v) \right|, \left\lfloor f(\gamma, k) \right\rfloor\right)$. Thus, the voters from $S$ are guaranteed to approve $\ell$ candidates, where $\ell$ is the largest natural number such that \eqref{eq:betaPhrgaCond} would not hold for $z = \ell-1$:
\begin{align*}
   \frac{\ell}{|S|}\beta(\gamma) \leq (k-\ell-1)\beta(1-\gamma)\nicefrac{1}{(|V|-|S|)}.
\end{align*}
After algebraic reformulations:
\begin{align*}
    \ell &\leq (k-\ell+1)\frac{|S|\beta(1-\gamma)}{(|V|-|S|)\beta(\gamma)} \implies \\
    \ell\Big(1+\frac{\beta(1-\gamma)}{(\nicefrac{1}{\gamma}-1)\beta(\gamma)}\Big) &\leq (k+1)\frac{\beta(1-\gamma)}{(\nicefrac{1}{\gamma}-1)\beta(\gamma)} \implies \\
    \ell\Big((1-\gamma)\beta(\gamma)+\gamma\beta(1-\gamma)\Big) &\leq (k+1)\gamma\beta(1-\gamma) \implies \\
    \ell &\leq (k+1)\frac{\gamma\beta(1-\gamma)}{(1-\gamma)\beta(\gamma) + \gamma\beta(1-\gamma)}.
\end{align*}
The final statement is obtained by taking the floor from the right-hand side of above expression.
\end{proof}

Using the assumptions that $\beta$ is non-increasing, (for $\gamma \geq \nicefrac{1}{2}$ we have $\beta(\gamma)\leq \beta(1-\gamma)$ and for $\gamma \leq \nicefrac{1}{2}$ we have $\beta(\gamma)\geq \beta(1-\gamma)$) we obtained the following simplified version of the bounds.

\begin{corollary}\label{cor:beta}
The lower bound for PJR degree, $f_\beta(\gamma, k)$, satisfies the following:
\begin{enumerate}
    \item for $\gamma \geq \nicefrac{1}{2}$ we have $f_\beta(\gamma, k) \geq \lfloor (k+1)\gamma \rfloor$.
    \item for $\gamma \leq \nicefrac{1}{2}$ we have $f_\beta(\gamma, k) \geq \lfloor (k+1)\gamma \cdot \nicefrac{\beta(1-\gamma)}{\beta(\gamma)} \rfloor$.
\end{enumerate}
\end{corollary}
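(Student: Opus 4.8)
The plan is to derive Corollary~\ref{cor:beta} directly from the exact expression for $f_\beta(\gamma,k)$ proved in Theorem~\ref{thm:beta} by bounding the ratio
\[
\frac{\gamma\beta(1-\gamma)}{(1-\gamma)\beta(\gamma) + \gamma\beta(1-\gamma)}
\]
using only the monotonicity of $\beta$. The whole content of the corollary is that the complicated denominator can be replaced by something cleaner once we know how $\beta(\gamma)$ and $\beta(1-\gamma)$ compare, and since $\gamma$ and $1-\gamma$ swap roles across the midpoint $\nicefrac12$, the two cases are handled symmetrically.

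For the first case, $\gamma \geq \nicefrac12$, I would note that $1-\gamma \leq \gamma$, so by non-increasingness $\beta(1-\gamma) \geq \beta(\gamma)$. The goal is to show the fraction is at least $\gamma$, i.e.\ that $\gamma\beta(1-\gamma) \geq \gamma\bigl[(1-\gamma)\beta(\gamma)+\gamma\beta(1-\gamma)\bigr]$, equivalently (after cancelling the common positive factor $\gamma$ and rearranging) $(1-\gamma)\beta(1-\gamma) \geq (1-\gamma)\beta(\gamma)$. This last inequality is immediate from $\beta(1-\gamma)\geq\beta(\gamma)$ together with $1-\gamma\geq 0$. Multiplying the resulting bound $f_\beta(\gamma,k)\ge \lfloor (k+1)\gamma\rfloor$ through and applying the floor (which preserves $\geq$) gives the stated claim.

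For the second case, $\gamma \leq \nicefrac12$, we instead have $\beta(\gamma)\geq\beta(1-\gamma)$, and the target lower bound involves the correction factor $\nicefrac{\beta(1-\gamma)}{\beta(\gamma)}$. Here I would bound the denominator $(1-\gamma)\beta(\gamma)+\gamma\beta(1-\gamma)$ from above by replacing $\beta(1-\gamma)$ in the second summand with the larger $\beta(\gamma)$, yielding $(1-\gamma)\beta(\gamma)+\gamma\beta(\gamma)=\beta(\gamma)$. Substituting this upper bound on the denominator into the exact expression lowers the fraction to at least $\gamma\beta(1-\gamma)/\beta(\gamma)$, which is precisely the claimed bound before taking the floor. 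Applying $\lfloor\cdot\rfloor$ then finishes this case.

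There is no real obstacle here: both cases reduce to a single application of $\beta$'s monotonicity followed by an elementary manipulation of the fraction, and the floor is monotone so the bounds survive truncation. The one point requiring a little care is to keep track of which of $\beta(\gamma),\beta(1-\gamma)$ is larger, since the direction of the inequality flips at $\gamma=\nicefrac12$; this is exactly why the corollary is split into the two cases. If anything, the subtlety is purely bookkeeping—ensuring that the summand we bound is the one whose replacement collapses the denominator to a clean single $\beta$-value, and that the factor being cancelled (namely $\gamma>0$, or $1-\gamma\geq0$) is nonnegative so the inequality direction is preserved.
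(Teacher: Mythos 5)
Your proof is correct and takes essentially the same route as the paper, which derives the corollary from Theorem~\ref{thm:beta} by exactly this use of the monotonicity of $\beta$ (comparing $\beta(\gamma)$ with $\beta(1-\gamma)$ on either side of $\gamma = \nicefrac{1}{2}$), a step the paper states in one sentence without detailing. Your case analysis and denominator manipulations supply precisely those omitted elementary steps, and both are valid.
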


One could expect that the separated guarantee for regressive-proportional rules should be a function that is below $f(x) = x$ for small arguments and above $f(x) = x$ for large arguments. This is indeed the case, as illustrated in \Cref{fig:pjr_degree_beta}; the exact shape of the function quantify this effect. Similarly, according to our intuition the derivative of the PJR degree for regressive proportional rules is concave.

\begin{figure}[!t]
\begin{center}

\minipage{0.5\textwidth}
  \centering
  \includegraphics[width=\linewidth]{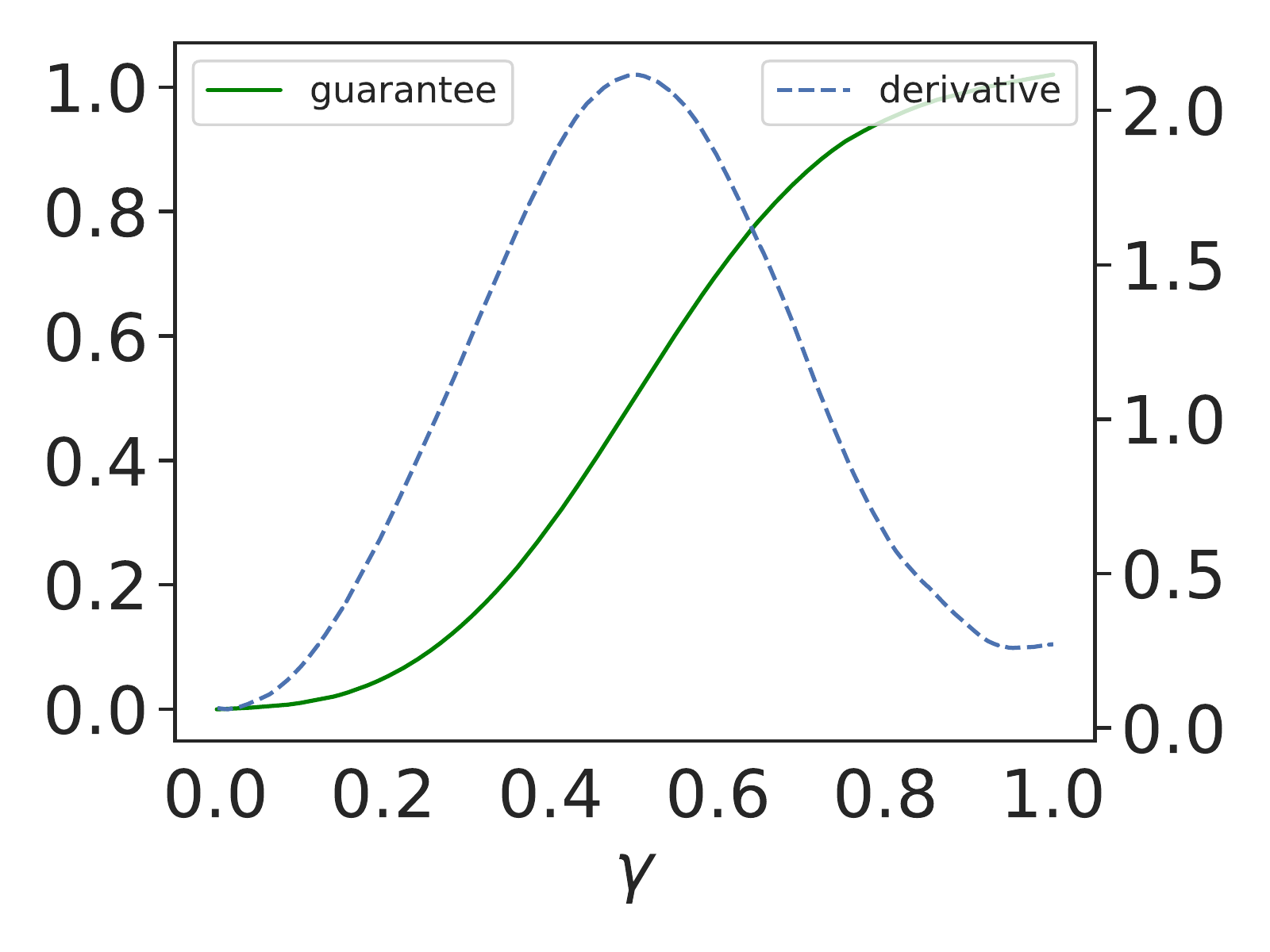}
  (a) $\beta(\gamma) = 0.1^{\gamma}; k=50$
\endminipage\hfill
\minipage{0.5\textwidth}
  \centering
  \includegraphics[width=\linewidth]{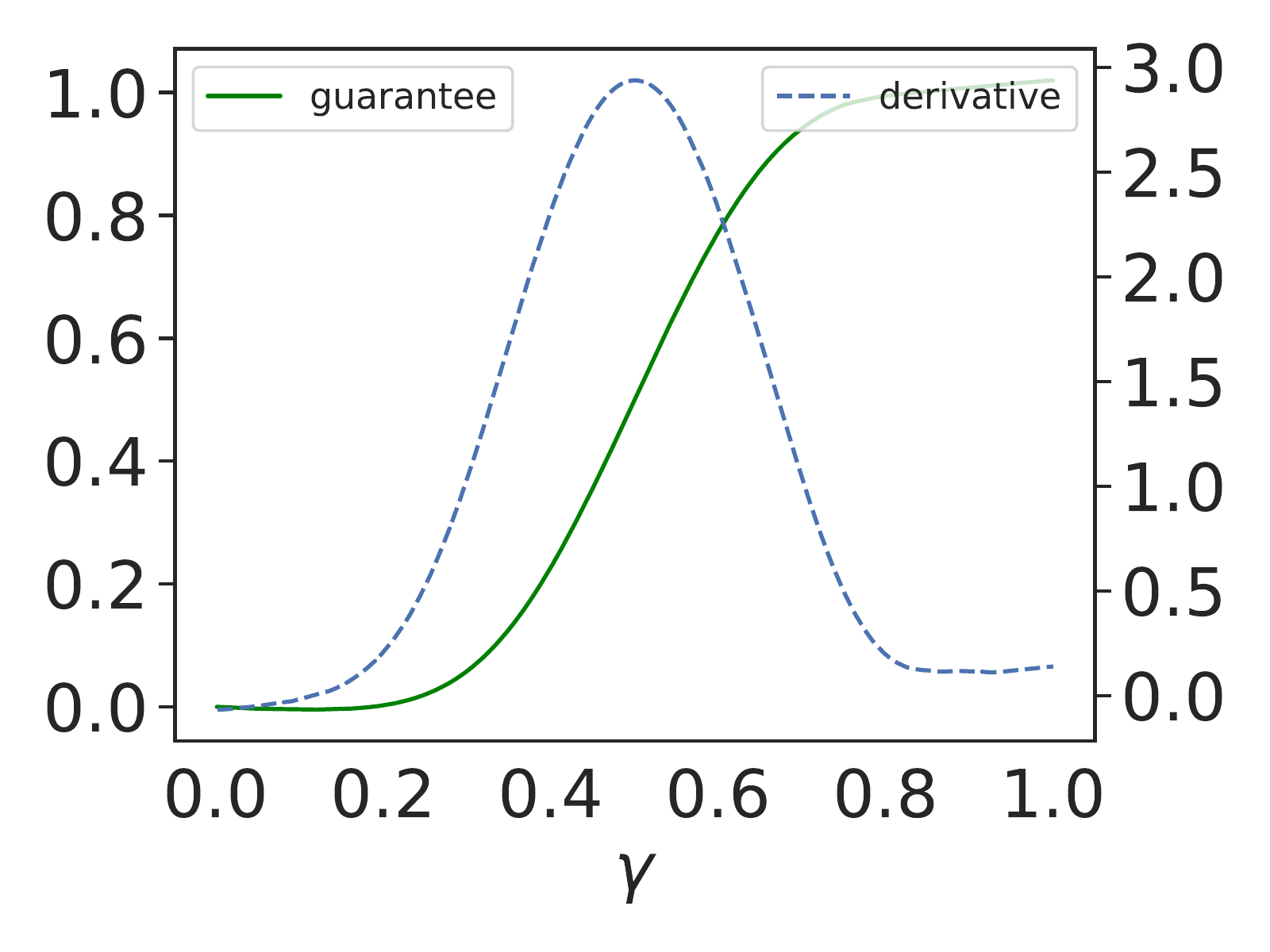}
  (b) $\beta(\gamma) = 0.01^{\gamma}; k=50$
\endminipage

\end{center}
\vspace{-0.3cm}
\caption{Lower bounds on the PJR degree (presented as a fraction of $k$) obtained from \Cref{thm:beta} for two $\beta$-Phragm\'{e}n's rules. The right scale in $y$-axis is relevant for the second line, which is the derivative of the PJR guarantee.}\label{fig:pjr_degree_beta}
\end{figure}

One can naturally ask: what if the speeds of earning money in the definition of $\alpha$-Phragm\'{e}n's rules are increasing? Will we obtain a regressive proportional rule? Interestingly, this is not the case, which is illustrated in the following example.

\begin{example}\label{ex:alpha_non_increasing}
Consider an $\alpha$-Phragm\'{e}n's rule with $\alpha(i) = i^{100}$, and the following election instance. There are 100 voters. Voters $v_1, \ldots v_{55}$ approve candidate $c_1$. Additionally, voters  $v_1, \ldots v_{30}$ approve $c_2, c_3, \ldots, c_6$. Further, voters $v_{51}, \ldots v_{100}$ approve $c_7, c_8, \ldots, c_{13}$. The size of the committee to be elected is $k = 6$. Here, the $\alpha$-Phragm\'{e}n's rule would select $c_1, \ldots, c_6$. Thus, the candidates who are approved by 50 voters, $c_7, c_8, \ldots, c_{13}$, would not be selected even though the voters who approved such candidates got only one or zero representatives. Instead, the rule would pick candidates who are approved by only 30 voters. This is not consistent with our interpretation of regressive proportionality. In contrary, from \Cref{cor:beta} it follows that each $\beta$-Phragm\'{e}n's rule would guarantee at least 3 candidates in the committee from those that are approved by 50 voters.
\end{example}

\Cref{ex:alpha_non_increasing} is very instructive. It illustrates that designing voting rules based solely on the intuitive premises can have undesirable effects. This illuminates the need of applying formal methods to the analysis of voting rules. Indeed, for an increasing function $\alpha$ the $\alpha$-Phragm\'{e}n's rule does not have a good PJR degree. In fact, exactly this observation has lead us to the definition of the class of $\beta$-Phragm\'{e}n's rules.  


\section{Comparing $\alpha/\beta$-Phragm\'{e}n's and $\lambda$-Thiele rules}

We will now compare the class of $\alpha/\beta$-Phragm\'{e}n's rules with the class of $\lambda$-Thiele methods. 

\subsection{The PJR degree of $\lambda$-Thiele rules}

We first observe that the lower bound on the PJR degree of Thiele methods follows from the analogous lower bound on the proportionality degree~\cite{skowron:prop-degree}. Since proportionality degree is a stronger condition than the PJR degree, we obtain the following corollary. 

\begin{theorem}[\cite{skowron:prop-degree}, Theorem~5.1]\label{thm:thiele_lower}
Let $\lambda: \naturals \to \reals$ be a non-increasing, convex function, and let $f_\lambda:\reals \times \naturals \to \reals$ such that $f_\lambda(\gamma,k) \leq k$, and that for each $x \in [k]$:
\begin{align*}
    (k-f_\lambda(\gamma, k))\lambda(1+f_\lambda(\gamma, k)) \geq \frac{1-\gamma}{\gamma} x\lambda(x).
\end{align*}
Then, the $\lambda$-Thiele rule has the PJR degree of $f_\lambda$.
\end{theorem}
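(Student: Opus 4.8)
The statement is a proportionality-degree guarantee in disguise, so my plan is to first establish the stronger \emph{average}-satisfaction bound and then read off the PJR (union) bound essentially for free. Writing $r_v = |A(v)\cap W|$ for a fixed winning committee $W$ and a group $S$ with $\gamma = \nicefrac{|S|}{|V|}$, the target reduces to showing $\tfrac{1}{|S|}\sum_{v\in S} r_v \ge f_\lambda(\gamma,k)$ whenever the voters of $S$ still share an unelected approved candidate. This suffices because
\[
\Big|\bigcup_{v\in S}(A(v)\cap W)\Big| \;\ge\; \max_{v\in S} r_v \;\ge\; \frac{1}{|S|}\sum_{v\in S} r_v,
\]
so any lower bound on the average transfers to the union, and since the union is an integer a bound of $f_\lambda$ yields $\ge \lfloor f_\lambda\rfloor$. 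The truncation by $|\bigcap_{v\in S}A(v)|$ is handled by a dichotomy: if $\bigcap_{v\in S}A(v)\subseteq W$ then the union already contains every commonly approved candidate, so $|\bigcup_{v\in S}(A(v)\cap W)| \ge |\bigcap_{v\in S}A(v)| \ge \min(\cdot,\cdot)$ trivially; otherwise there is a commonly approved $c^\ast \notin W$ that feeds the optimality argument. (As a shortcut one could simply invoke that $\lambda$-Thiele has proportionality degree $f_\lambda$ from \cite{skowron:prop-degree} and apply the union-versus-average observation, which is exactly the reduction announced in the text; below I outline the underlying argument.)

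The engine is local optimality of the \emph{exact} $\lambda$-Thiele maximizer under single-candidate swaps. Fix $c^\ast \in \bigcap_{v\in S}A(v)\setminus W$. For every $c'\in W$, optimality of $W$ gives $\lambdascore(W) \ge \lambdascore(W - c' + c^\ast)$; expanding the two marginal contributions and using that $\lambda$ is non-increasing (so the marginal gain of $c^\ast$ computed against $W-c'$ dominates the one computed against $W$) yields
\[
\sum_{v : c'\in A(v)} \lambda(r_v) \;\ge\; \sum_{v\in S} \lambda(r_v + 1).
\]
Summing over all $k$ members $c'\in W$ and exchanging the order of summation on the left collapses it to $\sum_{v\in V} r_v\lambda(r_v)$, giving the key inequality
\[
\sum_{v\in V} r_v\lambda(r_v) \;\ge\; k\sum_{v\in S}\lambda(r_v+1).
\]

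To close I would argue by contradiction: suppose $r_S := \tfrac{1}{|S|}\sum_{v\in S} r_v < f_\lambda(\gamma,k)$. Convexity of $\lambda$ and Jensen's inequality push the right-hand side down to $k|S|\,\lambda(r_S+1) \ge k|S|\,\lambda(1+f_\lambda)$, while the pointwise hypothesis $x\lambda(x)\le \tfrac{\gamma}{1-\gamma}(k-f_\lambda)\lambda(1+f_\lambda)$ caps each term $r_v\lambda(r_v)$ of the left-hand side; the hypothesis is calibrated precisely so that the extremal configuration — voters in $S$ sitting at the threshold $f_\lambda$ and voters outside $S$ each maximizing $x\lambda(x)$ — makes the two sides meet, so the strict violation contradicts the key inequality. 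I expect this convexity bookkeeping to be the main obstacle: the $S$-terms occur on both sides, so a crude term-by-term cap (which I tried first) is too lossy by a factor that swallows the $(k-f_\lambda)$ versus $k$ gap, and one must instead move the $S$-contribution of the left-hand side across and verify, using convexity of $\lambda$, that the uniform profile is genuinely the worst case. Combining the resulting average bound with the union inequality and the dichotomy of the first paragraph then gives $|\bigcup_{v\in S}(A(v)\cap W)| \ge \min\big(|\bigcap_{v\in S}A(v)|, \lfloor f_\lambda(\gamma,k)\rfloor\big)$, i.e.\ the PJR degree of $f_\lambda$.
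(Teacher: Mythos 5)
Your proposal is correct and takes essentially the same route as the paper: the paper gives no internal proof of this statement, but simply imports Theorem~5.1 of \cite{skowron:prop-degree} and observes that the proportionality degree (an average-satisfaction guarantee) is stronger than the PJR degree, which is exactly your first paragraph's reduction via $\bigl|\bigcup_{v \in S}(A(v)\cap W)\bigr| \geq \max_{v \in S}|A(v)\cap W| \geq \frac{1}{|S|}\sum_{v \in S}|A(v)\cap W|$ together with the dichotomy on whether $\bigcap_{v\in S}A(v) \subseteq W$. Your additional sketch of the swap-plus-Jensen engine correctly reconstructs the argument of the cited theorem (which lives in the referenced work, not in this paper), so it is a faithful, if condensed, account of the same proof.
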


We can also formulate the corresponding upper-bound on the PJR degree. We have: 

\begin{proposition}\label{prop:thiele_upper}
Let $\lambda: \naturals \to \reals$ be a non-increasing, convex function. The PJR degree of the $\lambda$-Thiele rule $f_\lambda$ must satisfy for each $x \in [k-f_\lambda(\gamma, k)+1]$:
\begin{align*}
(k - f_\lambda(\gamma, k) + x + 1) \lambda(f_\lambda(\gamma, k)) \geq \frac{1-\gamma}{\gamma} x\lambda(x) \text{.}
\end{align*}
\end{proposition}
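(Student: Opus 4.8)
The plan is to establish an upper bound on the PJR degree of a $\lambda$-Thiele rule by exhibiting, for each candidate value of the degree $\ell = f_\lambda(\gamma,k)$ that is "too large," a concrete election instance in which the winning committee fails the PJR requirement. Whereas Theorem~\ref{thm:thiele_lower} guarantees that cohesive groups get represented, here I want to show the opposite direction: if the claimed inequality is violated for some $x$, then there is an instance where a cohesive $\gamma$-fraction group agreeing on arbitrarily many candidates receives strictly fewer than $\ell$ of them, so $\ell$ cannot be a valid lower bound on the degree.

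First I would construct the worst-case profile. Let $S$ be a cohesive group forming a $\gamma$-fraction of the electorate, and give them a large common block of approved candidates, say $\ell - 1$ of which will end up elected. The remaining $V \setminus S$ voters, a $(1-\gamma)$-fraction, should be arranged into "competing" blocks that the Thiele objective prefers to fund once $S$'s marginal contributions have dropped. The key calculation is the marginal-gain comparison intrinsic to Thiele rules: after $S$ has been allocated $\ell - 1 = f_\lambda(\gamma,k) - 1$ candidates, the marginal score of giving $S$ one more approved candidate is proportional to $|S|\,\lambda(f_\lambda(\gamma,k))$, i.e.\ $\gamma \cdot \lambda(f_\lambda(\gamma,k))$ up to normalization. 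Against this I would pit the marginal score obtainable from the outside voters $V\setminus S$, who can instead absorb a further batch of $x$ candidates whose aggregate Thiele value scales like $(1-\gamma)\, x\,\lambda(x)$, while the "free" remaining budget of $k - f_\lambda(\gamma,k) + x + 1$ committee slots determines how these purchases interleave. Balancing the total score the rule would assign in the two competing allocations yields precisely the stated inequality $(k - f_\lambda(\gamma,k) + x + 1)\lambda(f_\lambda(\gamma,k)) \geq \tfrac{1-\gamma}{\gamma} x \lambda(x)$ as the condition under which the Thiele optimum still prefers to fund $S$'s $\ell$-th candidate.

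The main obstacle will be designing the instance so that the optimization genuinely tips against $S$ at exactly the claimed threshold, and arguing this for \emph{every} $x \in [k - f_\lambda(\gamma,k)+1]$ rather than just one convenient value --- convexity and monotonicity of $\lambda$ are what let a single violated $x$ produce a globally suboptimal allocation for $S$, and I would lean on these properties to control the sign of all the marginal comparisons simultaneously. A secondary technical point is handling integrality: since $|S| = \gamma |V|$ and the number of elected candidates must be integers, I would take $|V|$ large and scale the blocks so that the ratios $\gamma$ and $1-\gamma$ are realized exactly, making the floor in the definition of the PJR degree harmless. Once the instance is fixed, the verification that the Thiele score of the allocation starving $S$ exceeds that of the allocation satisfying PJR reduces to the same algebraic inequality, so no new ideas are needed beyond the marginal-value bookkeeping.
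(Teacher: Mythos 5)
Your proposal matches the paper's proof essentially step for step: assuming the inequality fails for some $x$, the paper builds exactly the instance you describe---a block $B$ of $z = f_\lambda(\gamma,k)$ candidates commonly approved by the $\gamma$-fraction group $S$, with $V \setminus S$ split into $\left\lceil \frac{k-z+1}{x} \right\rceil$ disjoint equally-sized groups each approving at most $x$ candidates from a disjoint pool $D$ of size $k-z+1$---and a swap argument at the Thiele optimum yields $(k-z+x+1)\lambda(z) \geq \frac{1-\gamma}{\gamma}x\lambda(x)$, contradicting the assumed violation. One simplification relative to your sketch: since negating the statement hands you a single violated $x$, no uniformity over all $x$ (and no use of convexity) is needed---monotonicity of $\lambda$ alone gives the marginal bound $\lambda(j)\geq\lambda(x)$ for $j \leq x$, and the factor $k - z + x + 1$ falls out of bounding the ceiling by $\frac{k-z+1}{x}+1$.
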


\begin{proof}
For the sake of contradiction, let us assume that for some $\gamma \in (0,1)$, and  $x \in [k-f_\lambda(\gamma, k)+1]$ it holds that:
\begin{align*}
 (k - z + x + 1) \lambda(z) < \frac{1-\gamma}{\gamma} x\lambda(x) \text{,}
\end{align*}
where for the simplicity of notation we set $z=f_\lambda(\gamma, k)$.

We will construct an instance of an election witnessing that $f_\lambda$ cannot be a proportionality guarantee of the $\lambda$-Thiele rule. Let $C = B \cup D$ be the set of candidates, where $B=\{b_1, b_2, \ldots, b_z\}$ and $D=\{d_1, d_2, \ldots, d_{k-z+1}\}$. We distinguish two groups of voters, $S$ and $V \setminus S$, such that $\nicefrac{|S|}{|V|}=\gamma$. Each candidate $b\in B$ is approved by the voters from~$S$. Voters from $V \setminus S$ are divided into $\left\lceil\frac{k-z+1}{x}\right\rceil$ equally-sized groups: each group approves at most $x$ candidates from $D$, and each two groups approve disjoint sets of candidates. 

We will show that the optimal committee for this instance cannot contain $z$ candidates from~$B$. For the sake of contradiction, let us assume that an optimal committee $W$ contains all $z$ candidates from $B$. Then, the voters from $V \setminus S$ would have $(k-z)$ candidates that they approve in the winning committee~$W$. If we replaced one candidate from $B$ with a candidate from $D$ in $W$, then the score of 
the committee could not increase. Thus:
\begin{align*}
    |S|\lambda(z) \geq \frac{|V \setminus S|}{\left\lceil\frac{k-z+1}{x}\right\rceil} \lambda(x) \geq \frac{|V \setminus S|}{\frac{k-z+1}{x} + 1} \lambda(x) = \frac{|V \setminus S|}{k-z + x + 1} x\lambda(x)
\end{align*}
Consequently, we get that $(k - z + x + 1)\lambda(z) \geq \frac{1-\gamma}{\gamma} x\lambda(x)$, and so:
\begin{align*}
    (k - f_\lambda(\gamma, k) + x + 1) \lambda(f_\lambda(\gamma, k)) \geq \frac{1-\gamma}{\gamma} x\lambda(x) \text{.}
\end{align*}

Hence, we get the contradiction. Consequently, we get that set of candidates $D$ is included in the winning committee $W$ and the number of included candidates from $B$ is lower than $z$. Thus, $f_\lambda$ is not a PJR degree for the $\lambda$-Thiele method. This gives a contradiction, and completes the proof.
\end{proof}

\Cref{prop:thiele_upper} applies also to sequential Thiele methods. What is more, \Cref{thm:thiele_lower} says that the $q$-geometric Thiele method has the PJR degree of $f(\gamma, k) = z$, where $z$ is the largest value satisfying:
\begin{align*}
    (k-z)q^{1+z} \geq \frac{1-\gamma}{\gamma} \max_{x \in [k]}xq^x.
\end{align*}
For $q = 0.5, 0.8$ we depict this guarantee in \Cref{fig:separated_guarantee_degr_thiele}. We can see that the lower and upper bounds implied by \Cref{thm:thiele_lower} and \Cref{prop:thiele_upper} are very close. This shows that the analysis is almost tight.

\begin{figure}[!t]
\begin{center}

\minipage{0.5\textwidth}
  \centering
  \includegraphics[width=\linewidth]{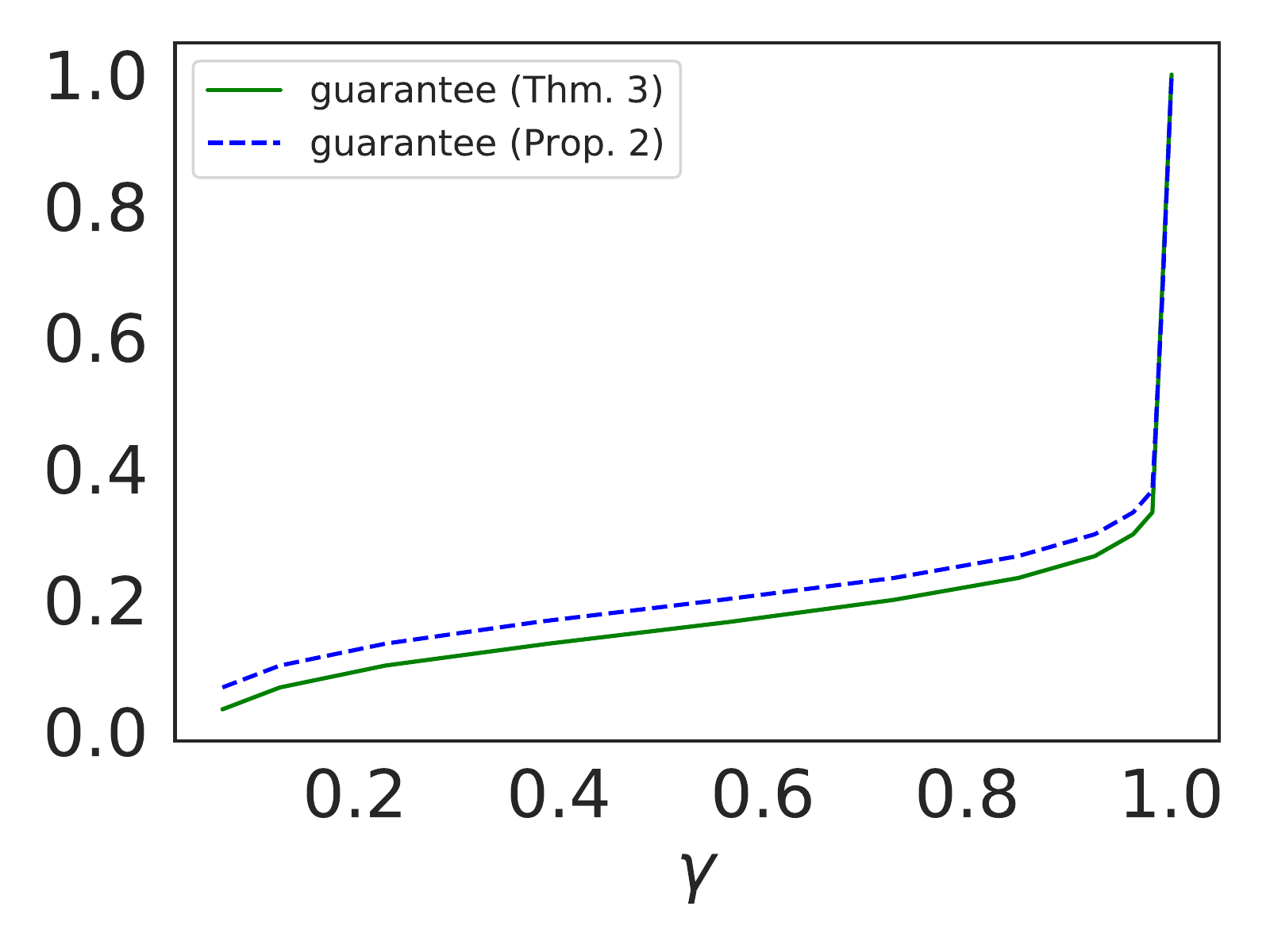}
  (a) $\lambda(\gamma) = 0.5^\gamma; k=30$
\endminipage\hfill
\minipage{0.5\textwidth}
  \centering
  \includegraphics[width=\linewidth]{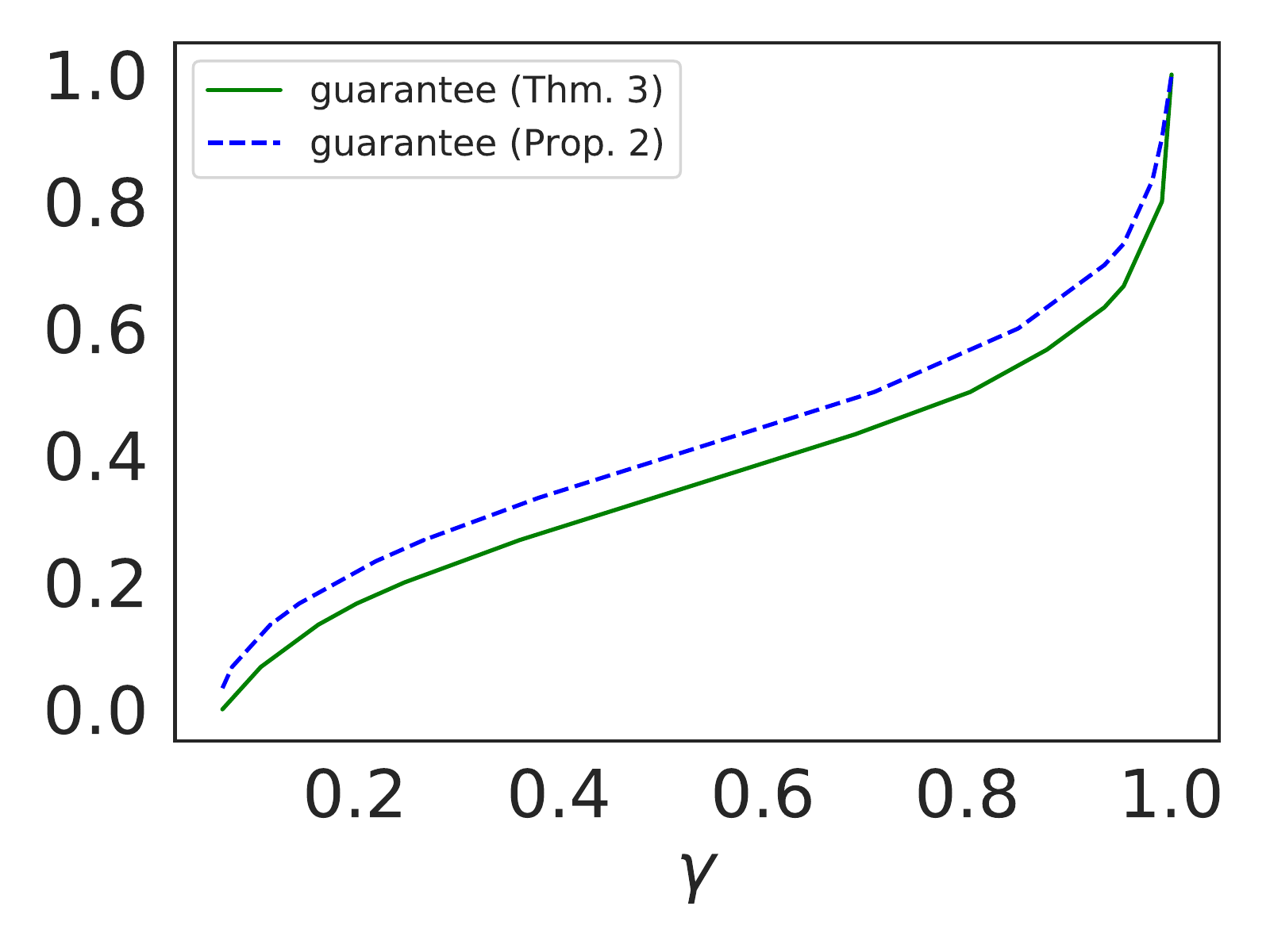}
  (b) $\lambda(\gamma) = 0.8^\gamma; k=30$
\endminipage

\end{center}
\vspace{-0.3cm}
\caption{Lower and upper bounds for the PJR degree (presented as a fraction of $k$) obtained from \Cref{thm:thiele_lower} and \Cref{prop:thiele_upper} for two geometric $\lambda$-Thiele rules, where $\lambda(\gamma)=q^{\gamma}$.}\label{fig:separated_guarantee_degr_thiele}
\vspace{-0.3cm}
\end{figure}

\subsection{Differences Between $\alpha/\beta$-Phragm\'{e}n's and $\lambda$-Thiele rules}

We can see that both $\alpha/\beta$-Phragm\'{e}n's and $\lambda$-Thiele rules spread the spectrum from degressive to regressive proportionality. In the remainder of this section we will explain that these classes are in fact different, and that certain behavior of $\alpha/\beta$-Phragm\'{e}n's rules cannot be implemented within the class of Thiele rules.

First, we observe that $\alpha/\beta$-Phragm\'{e}n's rules are computable in polynomial time, while for most $\lambda$-function computing the outcomes of $\lambda$-Thiele rules is $\np$-hard~\cite{sko-fal-lan:c:collective}.

Second, we note that $\alpha/\beta$-Phragm\'{e}n's rules satisfy the axiom of committee monotonicity~\cite{elk-fal-sko-sli:j:multiwinner-properties}. Intuitively, committee monotonicity says that if we increase the committee size, then the candidates that were members of winning committees should still be selected. This property is specifically important in certain contexts, where the goal is to produce a ranking of objects, so that the ranking proportionally reflect the opinions of a certain group of agents (see the work of \citeauthor{skowron:prop-degree}~\shortcite{{skowron:prop-degree}} for a more detailed discussion on the applications of committee monotonic approval-based rules). 


This property is also satisfied by sequential $\lambda$-Thiele rules. On the other hand, except for Approval Voting, no $\lambda$-Thiele method satisfies the property.
Thus, with respect to computation complexity and committee monotonicity, $\alpha/\beta$-Phragm\'{e}n's rules are closer to the sequential variants of $\lambda$-Thiele methods rather than to the $\lambda$-Thiele methods themselves.

Yet, there are properties of $\alpha/\beta$-Phragm\'{e}n's rules which cannot be satisfied by (sequential) $\lambda$-Thiele methods. In order to see that we first introduce an axiom of independence of unanimously approved candidate (IUAC).

\begin{definition}
Given an election $E = (C, V, A, k)$ and $c \in C$, by $(C_{-c}, V, A, k)$ we denote the instance obtained from $E$ by removing $c$ from the set of available candidates, and the approval sets of the voters.
A rule $\calR$ satisfies independence of unanimously approved candidate (IUAC), if for each election $E = (C, V, A, k)$ where there exists a single unanimously approved candidate $c$ (i.e., a candidate $c$ such that $c \in A(v)$ for all $v \in V$), for each winning committee $W \in \calR(E)$ there is $W' \in \calR((C_{-c}, V, A, k))$ such that $W = W' \cup \{c\}$.
\end{definition}

In words, assume we add one candidate who is approved by all the voters and that we increase the committee size by one; then IUAC requires that the rule should select the old committee together with the unanimously approved candidate.

We observe that that $\alpha$-Phragm\'{e}n's rules for geometric sequences $\alpha$ and every $\beta$-Phragm\'{e}n's rule satisfy the axiom of IUAC. Analogously, one can show that only $\lambda$-Thiele methods with geometric sequences of weights satisfy the axiom. 

\begin{proposition}\label{prop:thiele_geometric}
A $\lambda$-Thiele method satisfies IUAC if and only if $\lambda$ is a geometric sequence.
\end{proposition}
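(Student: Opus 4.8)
The plan is to recast IUAC as a purely analytic condition on the partial sums of $\lambda$, then prove both implications through that lens, isolating the difficulty in an explicit gadget construction for the converse. Throughout I treat $\lambda$ as positive (as holds for PAV and for every geometric sequence with ratio in $(0,1]$, the only candidates for the claim). Write $\Lambda(t)=\sum_{j=1}^{t}\lambda(j)$, so that $\lambdascore(W)=\sum_{v}\Lambda(|W\cap A(v)|)$, and record the one identity driving everything: if $c$ is unanimously approved, $W'\subseteq C_{-c}$ has size $k-1$, and $s_v=|W'\cap A(v)|$, then
\begin{align*}
\lambdascore(W'\cup\{c\})=\sum_{v\in V}\Lambda(s_v+1),
\end{align*}
since adding $c$ raises every voter's count by exactly one. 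An exchange argument (swap any $d\in W$ for $c$; as $c$ is the \emph{unique} unanimous candidate some voter fails to approve $d$, and $\lambda>0$) shows every optimal size-$k$ committee of $E$ contains $c$. Hence IUAC is equivalent to: for every reduced instance the size-$(k-1)$ committees $W'\subseteq C_{-c}$ maximizing $\sum_v\Lambda(s_v)$ are exactly those maximizing $\sum_v\Lambda(s_v+1)$ — that is, replacing the objective $\Lambda$ by its shift $\Lambda(\cdot+1)$ must never change the winner set.

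For the \emph{if} direction, with $\lambda$ geometric I rescale (Thiele maximization is invariant under positive scaling of $\lambda$) to $\lambda(j)=q^{j}$ with $q\in(0,1]$. A one-line computation gives $\Lambda(t+1)=q\,\Lambda(t)+q$, hence
\begin{align*}
\sum_{v\in V}\Lambda(s_v+1)=q\sum_{v\in V}\Lambda(s_v)+qn,
\end{align*}
an increasing affine function of the base score. The two sets of maximizers therefore coincide, and combined with the exchange argument (every winner contains $c$) this is precisely IUAC.

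For the converse I assume IUAC and show $\lambda$ is geometric. By the equivalence above, the shift $\Lambda(\cdot)\mapsto\Lambda(\cdot+1)$ preserves the set of maximizers of $\sum_v\Lambda(s_v)$ over all realizable satisfaction profiles. I claim this forces the planar points $P_t=(\Lambda(t),\Lambda(t+1))$ to be collinear; collinearity means $\Lambda(t+1)=\rho\Lambda(t)+\sigma$ for fixed $\rho,\sigma$, and differencing yields $\lambda(t+1)=\rho\,\lambda(t)$, i.e. $\lambda$ is geometric ($\rho=1$ recovering Approval Voting). So it suffices to rule out a non-collinear triple $t_1<t_2<t_3$. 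For such a triple the functionals $\delta\mapsto\sum_i\delta_i\Lambda(t_i)$ and $\delta\mapsto\sum_i\delta_i\Lambda(t_i+1)$ are linearly independent on $\{\delta\in\reals^{3}:\sum_i\delta_i=0\}$, so I may choose an integer vector $\delta$ with $\sum_i\delta_i=0$, $\sum_i\delta_i\Lambda(t_i)>0$, and $\sum_i\delta_i\Lambda(t_i+1)<0$. Splitting $\delta_i=m_i-m_i'$ into positive and negative parts produces two profiles, $P$ (with $m_i$ voters at level $t_i$) and $P'$ (with $m_i'$ voters at level $t_i$), on equally many voters, with $P$ having strictly larger base score but strictly smaller shifted score. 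If $P$ is realized by a committee $W_1'$ that is the unique base winner and $P'$ by an available committee $W_2'$, the shifted inequality reads $\lambdascore(W_2'\cup\{c\})>\lambdascore(W_1'\cup\{c\})$, so $W_1'\cup\{c\}$ is not a winner of the enlarged election even though $W_1'$ is the unique base winner — contradicting IUAC.

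The main obstacle is this realizability step: building an actual election in which $W_1'$ is the genuine unique base optimum and $W_2'$ is a feasible committee, with no third committee interfering. I would assemble it from disjoint gadgets. A dedicated block of private voters forces all but a few committee slots — each forced candidate is approved by a large block and by no one else, so $\lambda(1)>0$ compels its selection — and the remaining slots feed a small sub-gadget built from ``ladders'' (chains of candidates approved by a common voter group), whose two admissible selections realize exactly $P$ and $P'$. Since $P$ and $P'$ need not consume the same number of ladder candidates ($\sum_i\delta_i t_i$ is uncontrolled), I would balance the slot budget with extra level-one private fillers so both options occupy equally many slots; alternatively, one passes to a non-collinear \emph{quadruple} and picks $\delta$ additionally orthogonal to $(t_i)_i$. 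Checking that these gadgets compose without spurious optimal committees is the delicate part; the two short computations above then close both directions.
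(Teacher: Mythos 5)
Your reduction of IUAC to the comparison of maximizers of $\sum_v\Lambda(s_v)$ and $\sum_v\Lambda(s_v+1)$ (via the exchange argument using uniqueness of the unanimous candidate) is sound, and the \emph{if} direction via the affine identity $\Lambda(t+1)=q\Lambda(t)+q$ is correct and clean; this matches the part the paper dismisses as straightforward. One small point there: to recover the full geometric property from collinearity of the points $P_t=(\Lambda(t),\Lambda(t+1))$ you must include $t=0$ (a voter with no approved committee member), since collinearity for $t\geq 1$ only forces $\lambda(t+1)=\rho\lambda(t)$ from $t\geq 2$ onward and leaves the ratio $\nicefrac{\lambda(2)}{\lambda(1)}$ unconstrained.

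The converse, however, is a plan rather than a proof, and the gap is not mere bookkeeping. Everything hinges on realizing the abstract profiles $P$ and $P'$ as, respectively, the \emph{unique} base optimum and a feasible same-size alternative, with no mixed committee (selecting parts of several ladders) interfering---and for concave $\Lambda$ such spreading committees are precisely the ones that tend to score higher, so excluding them is the substance of the proof, which you defer. Moreover, of your two proposed repairs for the slot-budget mismatch, one provably fails: the ``non-collinear quadruple with $\delta$ orthogonal to $(t_i)_i$'' route needs more than non-collinearity of the $P_t$; it needs that there be no representation $\Lambda(t+1)=\rho\Lambda(t)+\mu t+\sigma$ on the chosen indices. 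But for the weights $\lambda(i)=Aq^i+B$ with $A,B>0$ and $q\in(0,1)$---positive, non-increasing, convex, and non-geometric---one checks $\Lambda(t+1)=q\Lambda(t)+B(1-q)t+(Aq+B)$ for \emph{all} $t$, so every quadruple is degenerate and no admissible $\delta$ exists, even though the points $P_t$ are non-collinear. The filler repair is not ruled out, but each filler shifts the base and shifted scores asymmetrically (by $\lambda(1)$ versus $\lambda(2)$), so the sign conditions must be re-derived in an enlarged system whose solvability you do not establish. The paper avoids all of this by anchoring at the \emph{first} index $i$ with $\lambda(i)\neq q^{i-1}$, where $q=\nicefrac{\lambda(2)}{\lambda(1)}$: taking $k=i-1$ and two disjoint voter/candidate blocks with $\nicefrac{|V_2|}{|V_1|}$ chosen strictly inside the window that the deviation opens between $q^{k-1}$ and $\nicefrac{\lambda(k+1)}{q}$, the base optimum is forced because $\lambda$ is exactly geometric up to index $k$, and the IUAC violation reduces to a single marginal inequality---no aggregate separating functional, hence none of the degeneracy issues above. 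I would either adopt that first-deviation anchoring or complete your gadget analysis in full before claiming the converse.
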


\begin{proof}
The proof that each geometric $\lambda$-Thiele method satisfies IUAC is straightforward. Now, let us fix a $\lambda$-Thiele method that satisfies IUAC and assume $\lambda(1)=1$. Let $q=\nicefrac{\lambda(2)}{\lambda(1)}$. If $\lambda$ is not geometric, then there exists $i$ such that $\lambda(i) \neq q^{i-1}$ (without loss of generality, we can assume that $\lambda(i) < q^{i-1}$; for $\lambda(i) > q^{i-1}$ the proof is analogous). Let $k = i-1$, and consider an instance where there are two disjoint groups of candidates, $C_1$ and $C_2$, and two disjoint groups of voters, $V_1$ and $V_2$, such that for each $v\in V_i$ we have $A(v)=C_i$. We set the sizes of the groups in such a way that $|V_2|=\frac{\lambda(k+1)+q^{k}}{2q}|V_1|$. First, we see that $|V_1|q^{k-1} > |V_2|$ which implies that only candidates from $C_1$ are selected to the winning committee. What is more, $q|V_2| > \lambda(k+1)|V_1|$, which implies that if we add new candidate $c$ that is approved by all the voters, then one candidate from $C_2$ would be selected to the committee of size $k+1$ which contradicts IUAC.
\end{proof}

Now, observe that there exists no geometric $\lambda$-Thiele method that implements regressive proportionality.


\begin{proposition}\label{prop:arbitrarily_small}
Fix a $q$-geometric Thiele method with $q<1$, $\gamma \in (0, 1)$ and $\varepsilon > 0$. Consider a cohesive group of voters $V'$ who form the $\gamma$ fraction of the whole society.
There is an instance such that the fraction of winning candidates that are approved by the voters from $V'$ is lower than $\varepsilon$.
\end{proposition}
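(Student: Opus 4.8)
The plan is to exhibit, for any fixed $q<1$, $\gamma\in(0,1)$ and $\varepsilon>0$, a single highly structured instance (parametrised by a committee size $k$ that I will eventually take large) on which the $q$-geometric Thiele method gives $V'$ only logarithmically many seats. The construction is as follows. Let $V'$ be one cohesive group of voters, all of whom approve a common block $B=\{b_1,\dots,b_k\}$ of $k$ candidates. Outside $V'$ I place $k$ pairwise-disjoint groups $U_1,\dots,U_k$, each of the same size $w$, where $U_i$ approves a single private candidate $d_i$ (so the $d_i$ are distinct and approved by nobody else). I fix $|V'|=\frac{\gamma}{1-\gamma}\,kw$, which makes $\nicefrac{|V'|}{|V|}=\gamma$ exactly; choosing $w$ to be a suitable integer removes all divisibility issues.

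The key step is to compute the winning committees explicitly. Because the profile is a disjoint union of a block of identical voters and $k$ singleton-style groups, the $\lambda_{\geom}$-score decomposes additively. For a committee $W$ with $t:=|W\cap B|$, each included private candidate contributes $wq$, and the optimal use of the remaining $k-t$ seats is simply to take $k-t$ of the $d_i$'s. Hence, up to the irrelevant choice of which $d_i$ are taken,
\begin{align*}
  \lambdascore(W) \;=\; |V'|\sum_{j=1}^{t} q^{j} \;+\; (k-t)\,wq .
\end{align*}
I would then treat this as a function of the single integer variable $t\in\{0,\dots,k\}$. Its successive difference is $q\big(|V'|q^{t}-w\big)$, which is positive exactly while $q^{t}>w/|V'|=\tfrac{1-\gamma}{\gamma k}$; so the objective is strictly unimodal and its maximiser is $t^\star=\big\lceil \log_{1/q}\!\big(\tfrac{\gamma k}{1-\gamma}\big)\big\rceil$, up to one unit from a possible boundary tie. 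Thus every winning committee satisfies $|W\cap B|\le t^\star+1$.

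Finally, since the candidates approved by $V'$ are exactly those in $B$, the fraction of winning candidates approved by $V'$ equals $|W\cap B|/k\le (t^\star+1)/k$. As $t^\star=\Theta\!\big(\tfrac{\ln k}{\ln(1/q)}\big)$ is logarithmic in $k$ while the denominator is linear, this fraction tends to $0$ as $k\to\infty$ for fixed $q$ and $\gamma$; choosing $k$ large enough that $(t^\star+1)/k<\varepsilon$ completes the argument. The main obstacle, and the part deserving the most care, is justifying that the optimum is attained at the single value $t^\star$ (i.e.\ the additive decomposition of the Thiele score together with its unimodality in $t$), and checking that the conclusion holds for \emph{every} winning committee rather than just one; the vanishing of the fraction is then immediate from the contrast between the logarithmic seat count of $V'$ and the linear committee size.
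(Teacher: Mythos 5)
Your proof is correct, and it is built on the same construction as the paper's: a cohesive block approved by all of $V'$ together with many pairwise-disjoint small groups, each approving a single private candidate, so that the geometric Thiele score decomposes additively. The difference is in the analysis. The paper supplies only $k - \lfloor \varepsilon k \rfloor$ private candidates and runs a one-step threshold comparison: for $k$ large enough, the contribution $(1-\gamma)\cdot\nicefrac{1}{(k - \lfloor \varepsilon k \rfloor+1)}$ of any private candidate (decaying linearly in $k$) exceeds the marginal contribution $\gamma q^{\lfloor \varepsilon k \rfloor}$ of a further block candidate (decaying exponentially), whence every winning committee takes all private candidates and at most $\lfloor \varepsilon k \rfloor$ from the block. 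You instead supply $k$ private candidates, note that by symmetry the score is a function of the single variable $t = |W \cap B|$, and prove unimodality via the first difference $q\bigl(|V'|q^{t} - w\bigr)$, which changes sign at most once; this correctly bounds $t$ in \emph{every} winning committee by $t^\star + 1$ with $t^\star = \bigl\lceil \log_{1/q}\bigl(\gamma k/(1-\gamma)\bigr)\bigr\rceil$, and your handling of the tie is sound, since the difference vanishes for at most one value of $t$. Your route costs slightly more computation but buys a quantitatively stronger conclusion: the seat count of $V'$ is $\Theta\bigl(\log_{1/q} k\bigr)$, so the approved fraction actually tends to zero, rather than merely being capped at $\lfloor \varepsilon k \rfloor/k \le \varepsilon$ as in the paper (whose stated strict inequality the logarithmic bound also delivers immediately). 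Both arguments treat the divisibility of $\gamma|V|$ at the same informal level, so no rigor is lost there.
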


\begin{proof}
 Consider the following instance. We set $C=C'\cup C_1 \cup \cdots \cup C_{k - \lfloor \varepsilon k \rfloor}$ and $V=V'\cup V_1 \cup \cdots \cup V_{k - \lfloor \varepsilon k \rfloor}$, where for each $i$, $|V_i|=(1 - \gamma)\nicefrac{1}{k - \lfloor \varepsilon k \rfloor+1}$. We assume that in each set $C_i$ there is a single candidate that is approved only by the voters from $V_i$. Additionally, the voters from $V'$ approve the candidates from $C'$. Thus, if a candidate from $C_i$ is selected to the committee, the voters from $V_i$ contribute together $(1 - \gamma)\nicefrac{1}{k - \lfloor \varepsilon k \rfloor+1}$ to the score. Now, we find $k$ that satisfies the following inequality
\begin{align*}
    (1 - \gamma) \frac{1}{k - \lfloor \varepsilon k \rfloor+1} > \gamma q^{\lfloor \varepsilon k \rfloor}.
\end{align*}
This implies that we select at most $\lfloor \varepsilon k \rfloor$ candidates from $C'$ and all the candidates from $C_1\cup \cdots \cup C_{k - \lfloor \varepsilon k \rfloor}$ to the winning committee. Thus, the fraction of candidates in the winning committee from $C'$ is at most $\frac{\lfloor \varepsilon k \rfloor}{k} \leq \frac{\varepsilon k}{k} = \varepsilon$. This completes the proof.
\end{proof}

An interesting implication is that there exists no (sequential) $\lambda$-Thiele method that would satisfy IUAC and that would implement the idea of regressive proportionality. Such rules exist within the class of $\beta$-Phragm\'{e}n's rules. This shows that the rules from the classes of $\alpha/\beta$-Phragm\'{e}n's and $\lambda$-Thiele methods exhibit different properties.


\section{Degressive and Regressive Proportionality in the Euclidean Model}

We will now analyze degressive and regressive proportionality through experiments. Our goal is to understand how voters' satisfaction depends on using committee election rules implementing different types of proportionality.

\subsection{Distributions of Voters' Preferences}
We consider the one-dimensional Euclidean model, where each individual (a voter or a candidate) is represented as a point in the interval $[-1; 1]$. Intuitively, this point represents the position of the individual in the left-right political spectrum. The Euclidean model is commonly used in political science~\cite{DavisHinich66,Plott1967,enelow1984spatial,enelow1990advances,mckelvey1990,merrill1999unified,schofield2007spatial}, and---more recently---in computational social choice~\cite{ijcai/ElkindL15-dichpref,elk-fal-las-sko-sli-tal:c:2d-multiwinner,fal-szu-tal:utopia,fal-tal:between-prop-and-diversity}.

We draw the positions of individuals independently at random from a beta distribution, scaled into $[-1, 1]$. We consider four distributions: $\texttt{Beta}(\nicefrac{1}{2}, \nicefrac{1}{2})$, $\texttt{Beta}(\nicefrac{1}{2}, 2)$, $\texttt{Beta}(2, 2)$, and $\texttt{Beta}(2, 4)$.
The approval preferences of the voters are constructed from their positions as follows. We fix the approval radius $\xi \in \{0.1, 0.2, 0.3, 0.4, 0.5\}$, and assume that a voter $v$ approves a candidate $c$ if and only if $|v-c|\leq \xi$. We set a threshold of 0.5 for the approval radius---higher values would imply that there might exist voters that approve e.g. extreme right-wing candidates and left-wing candidates. The elections drawn this way belong to the candidate-interval domain~\cite{ijcai/ElkindL15-dichpref}. Further, each election instance from the candidate-interval domain can be obtained from the one-dimensional Euclidean model. 

\subsection{Voting Rules used in Simulations}

In our simulations we select winning committees using one of the following rules: 
\begin{enumerate}[\hspace{0pt}1.]
    \item Degressive: $\alpha$-Phragm\'{e}n's with $\alpha(i)=(\nicefrac{1}{2})^i$;
    \item Regressive: $\beta$-Phragm\'{e}n's with $\beta(x)=(\nicefrac{9}{10})^{100x}$;
    \item Linear: $\alpha$-Phragm\'{e}n's with $\alpha(i)=1$ (which equivalent to Phragm\'{e}n's Sequential rule).
\end{enumerate}

\subsection{Two Measures of Voters' Satisfaction}
We quantify the satisfaction of the voters from the elected committees using two measures:

\paragraph{Number of Representatives.} In this case the satisfaction of a voter $v$ from a committee $W$ equals to the number of committee members that $v$ approves, $\sat(v)=|W\cap A(v)|$.

\paragraph{Number of Satisfying Decisions.} Our second measure is based on the analysis of the voting committee model ~\cite{skow:multiwinner-models,ijcai2020-28}. The high level idea is the following. We assume that the voters and the candidates have preferences over a number of binary issues. The positions of the individuals are correlated with their preferences over the issues. Thus the voters and the candidates who are closer in the Euclidean space are more likely to have similar opinions regarding the issues. Next, we assume that the elected committee uses majoritarian voting to decide about the issues. We measure the satisfaction of a voter $v$ as the fraction of the committee's decisions consistent with $v$'s preferences. Formally, we use the following procedure.
    \begin{enumerate}[\hspace{0pt}1.]
        \item We generate $p$ issues according to the same beta distribution as the one from which we have sampled the voters and candidates---each issue is represented as a value from $[-1;1]$ which indicates the ideological characteristic of the issue. In our experiments we have used $p = 100$.
        \item We assign to each individual a $p$ dimensional binary vector, where in the $i$-th position of the vector we set 1 if the individual is \emph{for} the issue, and 0 if she is \emph{against}. In order to generate preferences over the issues we use the Bernoulli distribution, where the probability of an individual $\eta$ getting 1 in the $i$-th position in the vector depends on the position $x$ of an issue in the Euclidean space, and is given by the following formula:
            \begin{align*}
                p_{\eta}(x) = 
                \begin{cases}
                \frac{1}{\tau(1 - |\eta|)|\eta - x| + 1}, & \mbox{if } |x| > |\eta| \text { and } x\eta > 0,\\ 
                1, & \mbox{if } |x| < |\eta| \text { and } x\eta > 0,\\ 
                \frac{1}{(\delta|\eta| + \tau)|x| + 1}, & \mbox{if } x \eta \leq 0.
                \end{cases}
            \end{align*}
            
        The function $p_{\eta}$ for different values of $\eta$ (and $\tau=30$, $\delta=120$) is depicted in \Cref{fig:prob_of_1} in the appendix. Let us explain the form of this function through an example. Consider a center-left individual $\eta = -0.3$. We assume that the issue in the center $x = 0$ represents the status quo---accepting this issue does not change the state of the world, hence no voter opposes to it. E.g., the position $\eta = -0.3$ might correspond to the preferred tax rate at the level of 35\%, while the current tax rate---corresponding to position $x = 0$---is 31\%.
        Further:
        \begin{enumerate}[\hspace{0pt}a.]
            \item Every left-oriented issue $x$ that is closer to the center than the individual (i.e., $x\in [\eta, 0)$) is always approved by $\eta$. Accepting such an issue changes the status-quo towards the state that is preferred by the individual. 
            \item The more far-left the issue, the less likely it is that the individual accepts the issue. Our centre-left individual $\eta$ has an aversion to radicalism, thus for $x < \eta$ the probability function is increasing and convex.
            \item For right-oriented issues the function is decreasing and convex; the slope is greater than in case of more left-oriented issues, as the individual $\eta$ is centre-left. 
            \item The more radical the individual, the less probable it is that she accepts the issue with the opposite characteristics. For instance, for $\eta_1 > \eta_2 > 0$ and $x < 0$ we have that $p_{\eta_1}(x) < p_{\eta_2}(x)$.
        \end{enumerate}
        Such functions ensure that candidates that are closer to a voter $v$ are more likely to have similar preferences as $v$ regarding the issues (see \Cref{fig:coincided_pref} in the appendix). 
    \end{enumerate}
    Once we build an election with preferences over issues, we measure the satisfaction of each voter $v$ as the fraction of issues for which $v$'s preferences coincide with the committee's decision; recall that in the voting committee model the winning committee $W$ makes majoritarian decisions.

\begin{table}[t]
\centering
\begin{tabular}{ c|c|c|c|c|c} 
\texttt{Beta} & rule & \multicolumn{2}{c|}{\# representatives} & \multicolumn{2}{c}{\# sat. decisions}\\
params & & avg & std & avg & std \\
\toprule
        & degr & 5.753 & 0.923 & 0.683 & 0.121 \\ 
$(2,2)$ & lin & 6.548 & 2.254 & $\mathbf{0.686}$ & 0.125 \\ 
        & regr & $\mathbf{7.708}$ & 6.853 & 0.681 & 0.153\\ 
\midrule
        & degr & 6.977 & 1.201 & 0.681 & 0.128 \\ 
$(2,4)$ & lin & 8.590 & 3.312 & $\mathbf{0.687}$ & 0.140 \\ 
        & regr & $\mathbf{10.31}$ & 8.996 & 0.677 & 0.173 \\ 
\midrule
                      & degr & 6.990 & 1.543 & 0.555 & 0.233 \\ 
$(\nicefrac{1}{2},2)$ & lin & 11.16 & 6.120 & $\mathbf{0.659}$ & 0.154 \\
                      & regr & $\mathbf{14.45}$ & 11.74 & 0.651 & 0.321 \\ 
\midrule
                                    & degr & 5.137 & 0.847 & 0.667 & 0.153 \\ 
$(\nicefrac{1}{2},\nicefrac{1}{2})$ & lin & 5.674 & 2.101 & $\mathbf{0.668}$ & 0.155 \\ 
                                    & regr & $\mathbf{6.763}$ & 6.330 & 0.584 & 0.278 \\ 
\bottomrule
\end{tabular}
\caption{The total satisfaction of the voters for $\xi=0.2$.}\label{tab:numerical_results}
\vspace{-0.1cm}
\end{table}

\subsection{Results of the Simulations}

\begin{figure*}[ht]
\minipage{0.245\textwidth}
  \includegraphics[width=\linewidth]{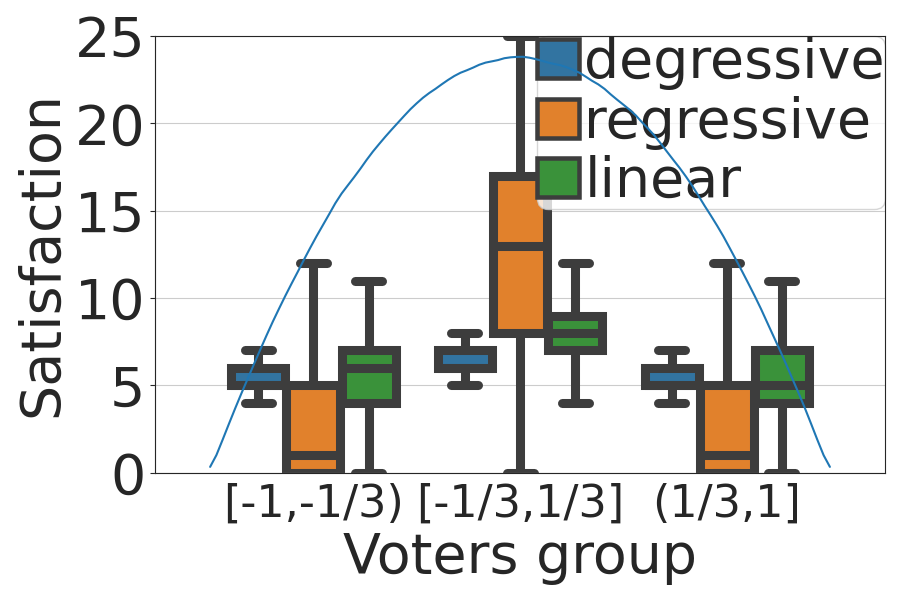}
  (a) \texttt{Beta}$(2,2)$
\endminipage
\minipage{0.245\textwidth}
  \includegraphics[width=\linewidth]{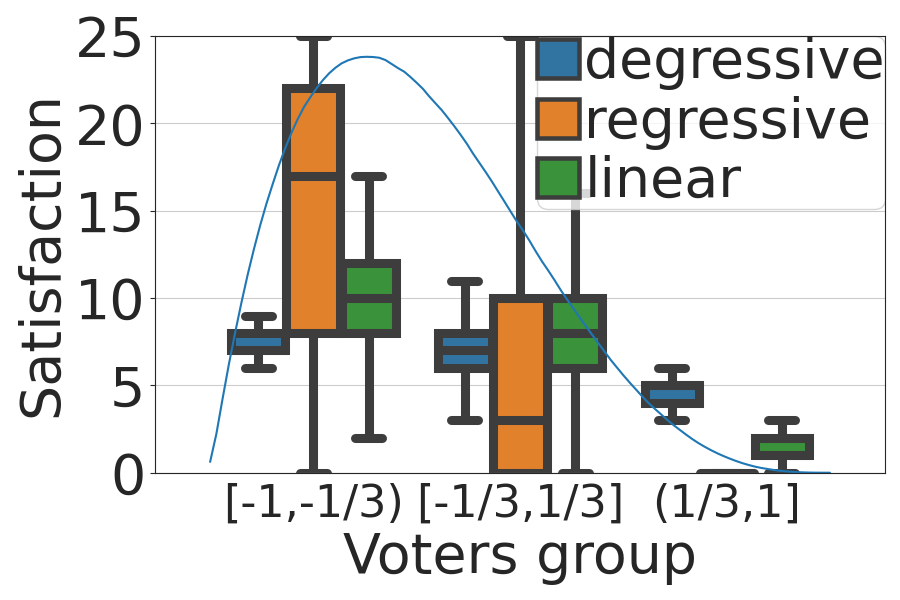}
  (b) \texttt{Beta}$(2,4)$
\endminipage
\minipage{0.245\textwidth}
  \includegraphics[width=\linewidth]{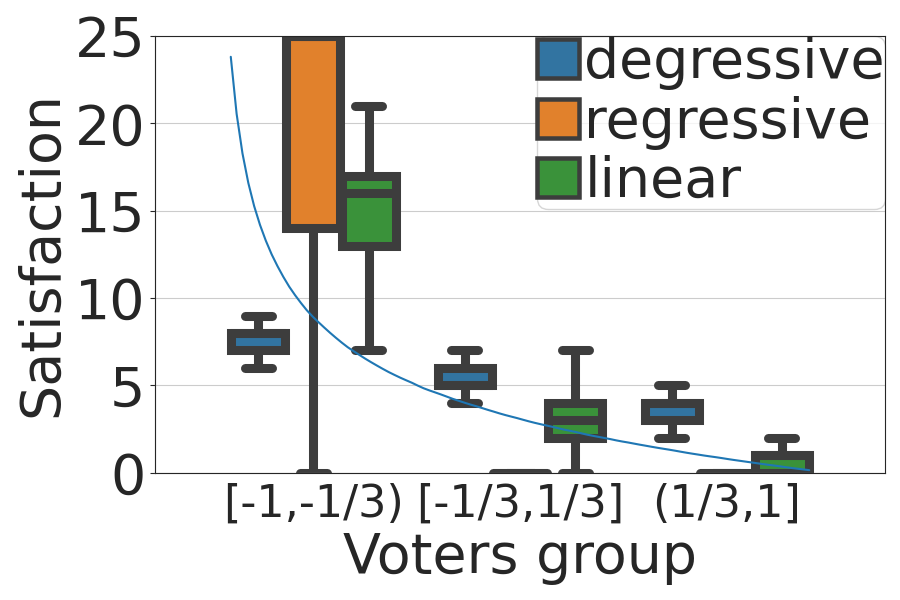}
  (c) \texttt{Beta}$(\nicefrac{1}{2},2)$
\endminipage
\minipage{0.245\textwidth}
  \includegraphics[width=\linewidth]{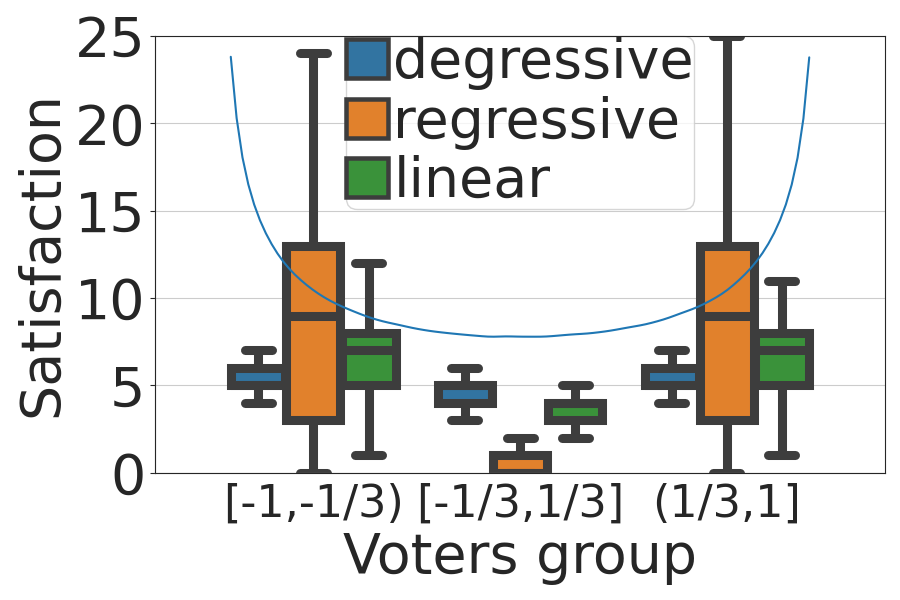}
  (d) \texttt{Beta}$(\nicefrac{1}{2}, \nicefrac{1}{2})$
\endminipage
\caption{Box plots with the distribution of voters' satisfaction (measured as the number of representatives) for different society models (beta  distributions). Acceptance radius is $\xi = 0.2$. In each plot the blue line depicts the density of the distribution from which we sampled voters and candidates.}\label{fig:basic_satisfaction}
\end{figure*}

\begin{figure*}[ht]
\minipage{0.245\textwidth}
  \includegraphics[width=\linewidth]{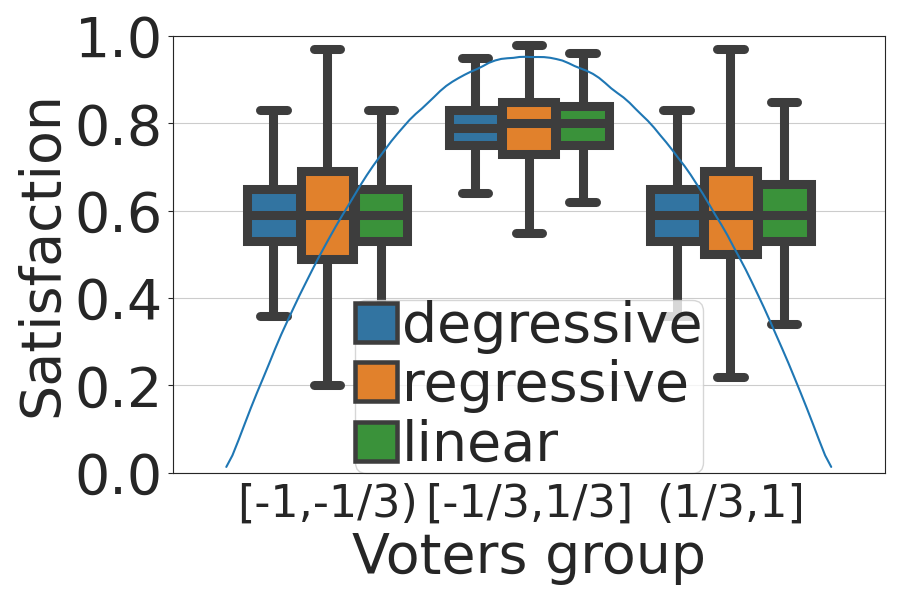}
  (a) \texttt{Beta}$(2,2)$
\endminipage
\minipage{0.245\textwidth}
  \includegraphics[width=\linewidth]{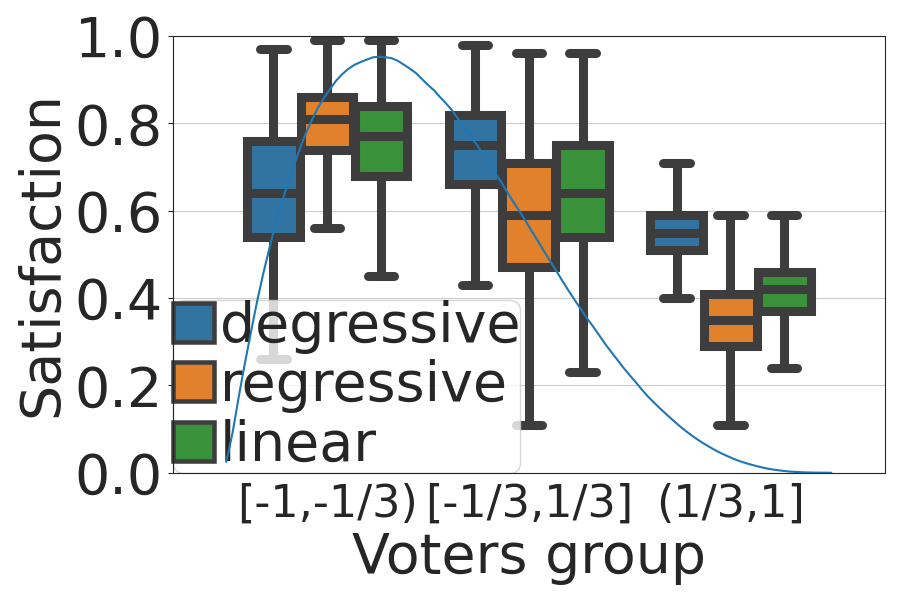}
  (b) \texttt{Beta}$(2,4)$
\endminipage
\minipage{0.245\textwidth}
  \includegraphics[width=\linewidth]{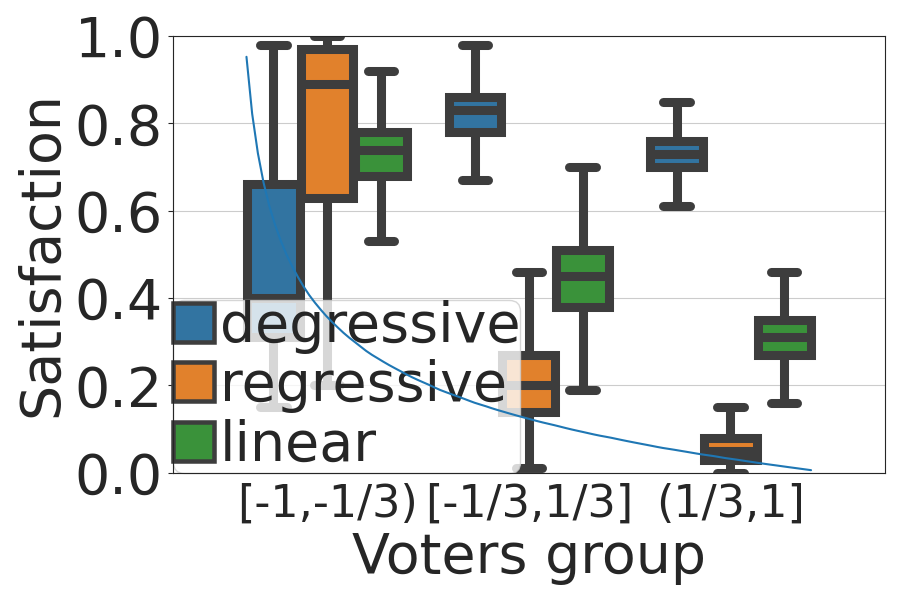}
  (c) \texttt{Beta}$(\nicefrac{1}{2},2)$
\endminipage
\minipage{0.245\textwidth}
  \includegraphics[width=\linewidth]{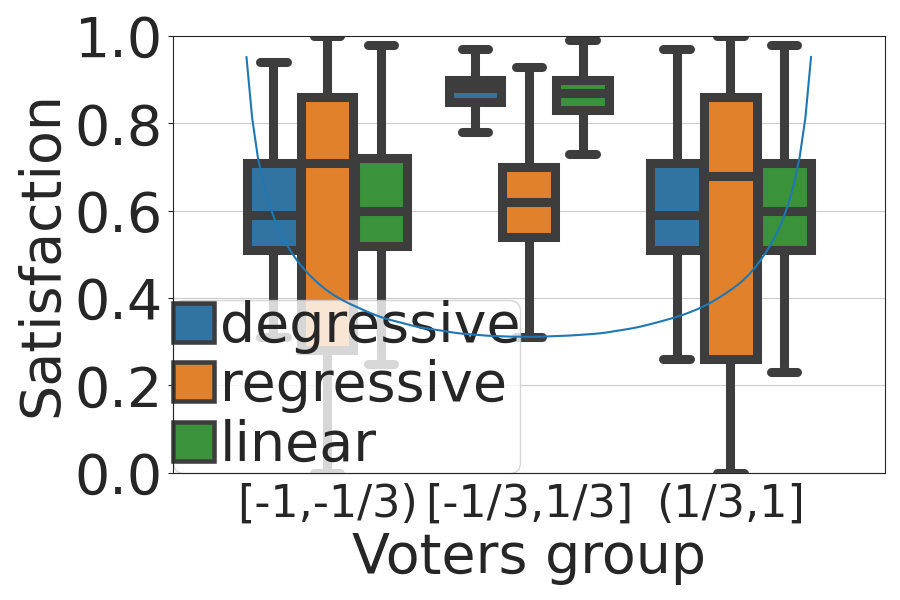}
  (d) \texttt{Beta}$(\nicefrac{1}{2}, \nicefrac{1}{2})$
\endminipage

\caption{The distribution of voters' satisfaction in the voting committee model for different beta  distributions. Acceptance radius is $\xi = 0.2$. In each plot the blue line depicts the density of the distribution from which we sampled voters and candidates.}\label{fig:issues_satisfaction}
\end{figure*}

We present the results of the aggregated satisfaction of the voters in the form of box plots in \Cref{fig:basic_satisfaction} and \Cref{fig:issues_satisfaction}. In order to simplify the visual presentation we divided the voters into three groups based on their position on the Euclidean line: $[-1, -\nicefrac{1}{3}), [-\nicefrac{1}{3}, \nicefrac{1}{3}], (\nicefrac{1}{3}, 1]$. In the simulations we consider instances with $n=200$ voters, $m=150$ candidates and for the committee size $k=25$. What is more, we set the acceptance radius $\tau =0.2$ and the parameters of the probability function $p_{\eta}$ to: $\tau = 30$, $\delta = 120$. We also checked several others sets of parameters (e.g. \{$\tau = 5$, $\delta = 20$\}, \{$\tau = 10$, $\delta = 60$\}), but we found that  the key observations and regularities stay the same. We ran 1000 simulations for each scenario. Additional results of the simulations can be found in the appendix.

Additionally in \Cref{tab:numerical_results} we give numerical values quantifying the total satisfaction of the voters.

From the experiments we draw the following conclusions:
\begin{enumerate}[\hspace{0pt}1.]
    \item Except for the case of polarized societies (i.e., scenarios (c) and (d) in Figures~\ref{fig:basic_satisfaction} and \ref{fig:issues_satisfaction}), we observe a positive correlation between the voters' satisfactions quantified according to our two measures. This suggests that for such societies  voters' satisfactions from the decisions made by the committee is related to the number of representatives these voters get in the elected committee. On the other hand, for the polarized societies there is no such a correlation. For example, in scenario (d) the groups of voters from less populous (and underrepresented) areas are more happy with the decisions made by the committees than the voters from the populous well represented poles. In such cases regressive proportional rules result in the distributions of the voters' satisfactions that more closely resemble densities of the voters' distributions.  
    \item We observe that for regressive proportional rules the shapes of the distributions of the voters' satisfactions (measured in either of the two ways) reflect the shapes of the densities of the voters' distributions. Interestingly, this relation is reflected to a slightly smaller extent for linear-proportional rules, and is generally not observed for the rules following the principle of degressive proportionality (see the plots for (c) and (d) in \Cref{fig:issues_satisfaction}). In our opinion this weakens the arguments in favor of degressive proportionality that are sometimes raised in the literature.
    \item In the voting committee model degressive-proportional rules favor less densely populated areas compared to the other rules, which is especially visible in case of asymmetric voters' distributions (cf. (b) and (c) in \Cref{fig:issues_satisfaction}).
    \item The largest variance is observed for regressive-proportional rules---specifically for the case when the voters' satisfaction is measured as the number of representatives, for \texttt{Beta}$(\nicefrac{1}{2}, 2)$. What is more, in case of \texttt{Beta}$(2, 4)$, around half of the voters from $[-1, -\nicefrac{1}{3})$ have the satisfaction higher than 17 and around 25\% have the satisfaction lower than 8.
    \item In the voting committee model the highest average satisfaction of the voters from the committees' decisions is observed for the rules that follow linear proportionality. If we measure voters' satisfaction as the number of their representatives in the elected committees, then the highest total satisfaction is attained by rules that follow regressive proportionality (consult \Cref{tab:numerical_results}).   
\end{enumerate}

\section{Conclusion}

We have defined a family of committee election rules that extend Phragm\'{e}n's Sequential Rule. These rules span the spectrum of different types of proportionality. We have assessed the worst-case guarantees that these rules provide to groups of voters with similar preferences, and analyzed how these rules treat voters assuming the voters and the candidates are represented as points in the one-dimensional Euclidean space. 

\bibliographystyle{abbrvnat}
\bibliography{aaai22}

\newpage
\appendix

\section{Figures Not Included in the Main Text}
In this section we present figures that were not included in the main text.




\begin{figure}[ht]
\begin{center}

\minipage{0.75\textwidth}
  \includegraphics[width=\linewidth]{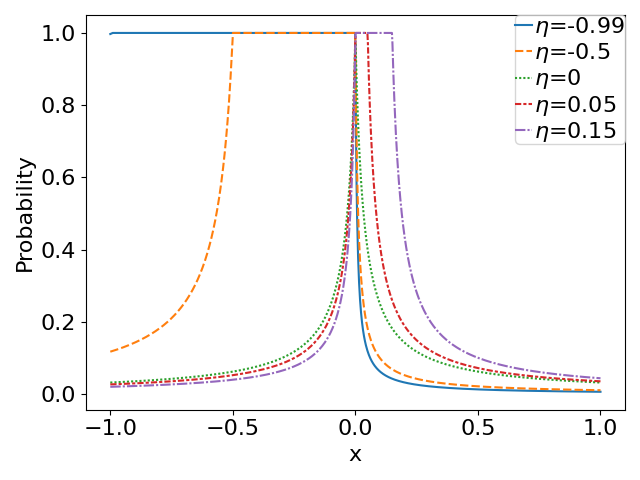}
\endminipage\hfill

\end{center}
\caption{Probability of an individual $\eta$ being \emph{for} an issue~$x$ in the voting committee model ($\tau=30$, $\delta=120$).}\label{fig:prob_of_1}
\end{figure}

\begin{figure}[h]
\begin{center}

\minipage{0.45\textwidth}
  \includegraphics[width=\linewidth]{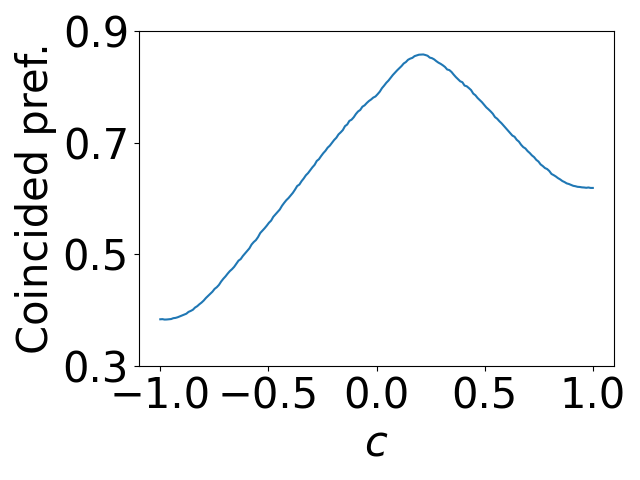}
  (a) voter $v = 0.2$
\endminipage\hfill
\minipage{0.45\textwidth}
  \includegraphics[width=\linewidth]{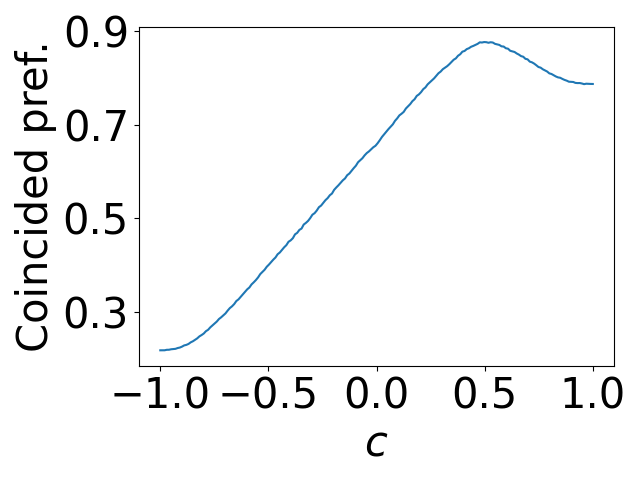}
  (b) voter $v = 0.5$
\endminipage

\end{center}
\caption{The probability that a candidate $c$ has the same preference as a voter $v$ regarding a randomly selected issue.}\label{fig:coincided_pref}
\end{figure}

\newpage
\section{Results of the Simulations Not Included in the Main Text}
In \Cref{fig:basic_satisfaction_0.1_appendix,fig:issues_satisfaction_0.1_appendix,fig:basic_satisfaction_0.3_appendix,fig:issues_satisfaction_0.3_appendix,fig:basic_satisfaction_0.4_appendix,fig:issues_satisfaction_0.4_appendix,fig:basic_satisfaction_0.5_appendix,fig:issues_satisfaction_0.5_appendix} and \Cref{tab:numerical_results_0.1,tab:numerical_results_0.3,tab:numerical_results_0.4,tab:numerical_results_0.5} we present the results of the simulations that were not included in the main text. The aggregated statistics for $\xi \in \{0.1, 0.3, 0.4, 0.5\}$ follow the same patterns as for $\xi =0.2$ (the other parameters remain the same). There are some minor differences in case of $\xi = \{0.3, 0.4, 0.5\}$---if we measure the satisfaction of the voters as their satisfaction from the committees' decisions, then for \texttt{Beta}$(2,4)$ and \texttt{Beta}$(\nicefrac{1}{2},2)$ the rules that follow regressive proportionality attain slightly higher average satisfaction than the linear proportional rules (see \Cref{tab:numerical_results_0.3,tab:numerical_results_0.4,tab:numerical_results_0.5}). What is more, in case of $\xi=0.5$ for \texttt{Beta}$(2,2)$ and \texttt{Beta}$(\nicefrac{1}{2}, \nicefrac{1}{2})$ the rules that follow regressive and degressive proportionality respectively, again attain slightly higher satisfaction.

\begin{table}[t]
\centering
\begin{tabular}{ c|c|c|c|c|c} 
\texttt{Beta} & rule & \multicolumn{2}{c|}{\# representatives} & \multicolumn{2}{c}{\# sat. decisions}\\
params & & avg & std & avg & std \\
\toprule
        & degr & 3.125 & 0.781 & 0.685 & 0.122 \\ 
$(2,2)$ & lin & 3.473 & 1.419 & $\mathbf{0.686}$ & 0.126 \\ 
        & regr & $\mathbf{4.069}$ & 3.736 & 0.684 & 0.144\\ 
\midrule
        & degr & 3.875 & 0.951 & 0.688 & 0.123 \\ 
$(2,4)$ & lin & 4.566 & 1.997 & $\mathbf{0.693}$ & 0.141 \\ 
        & regr & $\mathbf{5.542}$ & 5.453 & 0.686 & 0.175 \\ 
\midrule
                      & degr & 4.379 & 1.438 & 0.593 & 0.188 \\ 
$(\nicefrac{1}{2},2)$ & lin & 7.111 & 4.988 & $\mathbf{0.657}$ & 0.251 \\
                      & regr & $\mathbf{10.27}$ & 11.56 & 0.649 & 0.332 \\ 
\midrule
                                    & degr & 2.918 & 0.857 & 0.668 & 0.155 \\ 
$(\nicefrac{1}{2},\nicefrac{1}{2})$ & lin & 3.398 & 1.867 & $\mathbf{0.669}$ & 0.156 \\ 
                                    & regr & $\mathbf{4.501}$ & 5.491 & 0.559 & 0.303 \\ 
\bottomrule
\end{tabular}
\caption{The total satisfaction of the voters assessed through experiments for $\xi=0.1$.}\label{tab:numerical_results_0.1}
\end{table}

\begin{table}[t]
\centering
\begin{tabular}{ c|c|c|c|c|c} 
\texttt{Beta} & rule & \multicolumn{2}{c|}{\# representatives} & \multicolumn{2}{c}{\# sat. decisions}\\
params & & avg & std & avg & std \\
\toprule
        & degr & 8.353 & 1.030 & 0.686 & 0.124 \\ 
$(2,2)$ & lin & 9.624 & 3.152 & $\mathbf{0.687}$ & 0.127 \\ 
        & regr & $\mathbf{11.18}$ & 8.972 & 0.684 & 0.148\\ 
\midrule
        & degr & 10.03 & 1.766 & 0.659 & 0.143 \\ 
$(2,4)$ & lin & 12.66 & 4.820 & 0.690 & 0.149 \\ 
        & regr & $\mathbf{14.71}$ & 10.27 & $\mathbf{0.691}$ & 0.159 \\ 
\midrule
                      & degr & 9.362 & 1.667 & 0.527 & 0.257 \\ 
$(\nicefrac{1}{2},2)$ & lin & 14.41 & 6.587 & 0.648 & 0.113 \\
                      & regr & $\mathbf{17.39}$ & 10.97 & $\mathbf{0.655}$ & 0.193 \\ 
\midrule
                                    & degr & 7.241 & 1.146 & 0.669 & 0.154 \\ 
$(\nicefrac{1}{2},\nicefrac{1}{2})$ & lin & 7.780 & 2.279 & $\mathbf{0.669}$ & 0.154 \\ 
                                    & regr & $\mathbf{8.664}$ & 6.451 & 0.617 & 0.238 \\ 
\bottomrule
\end{tabular}
\caption{The total satisfaction of the voters assessed through experiments for $\xi=0.3$.}\label{tab:numerical_results_0.3}
\end{table}

\begin{table}[t]
\centering
\begin{tabular}{ c|c|c|c|c|c} 
\texttt{Beta} & rule & \multicolumn{2}{c|}{\# representatives} & \multicolumn{2}{c}{\# sat. decisions}\\
params & & avg & std & avg & std \\
\toprule
        & degr & 11.12 & 1.873 & 0.677 & 0.114 \\ 
$(2,2)$ & lin & 12.63 & 3.968 & 0.683 & 0.125 \\ 
        & regr & $\mathbf{14.36}$ & 9.789 & $\mathbf{0.684}$ & 0.142\\ 
\midrule
        & degr & 12.89 & 1.995 & 0.677 & 0.122 \\ 
$(2,4)$ & lin & 16.52 & 6.134 & 0.694 & 0.142 \\ 
        & regr & $\mathbf{18.20}$ & 9.607 & $\mathbf{0.695}$ & 0.142 \\ 
\midrule
                      & degr & 11.49 & 1.975 & 0.549 & 0.235 \\ 
$(\nicefrac{1}{2},2)$ & lin & 17.04 & 6.609 & 0.633 & 0.131 \\
                      & regr & $\mathbf{19.50}$ & 9.828 & $\mathbf{0.648}$ & 0.115 \\ 
\midrule
                                    & degr & 9.243 & 1.545 & $\mathbf{0.669}$ & 0.154 \\ 
$(\nicefrac{1}{2},\nicefrac{1}{2})$ & lin & 9.701 & 2.445 & 0.668 & 0.157 \\ 
                                    & regr & $\mathbf{10.33}$ & 6.030 & 0.636 & 0.218 \\ 
\bottomrule
\end{tabular}
\caption{The total satisfaction of the voters assessed through experiments for $\xi=0.4$.}\label{tab:numerical_results_0.4}
\end{table}

\begin{table}[t]
\centering
\begin{tabular}{ c|c|c|c|c|c} 
\texttt{Beta} & rule & \multicolumn{2}{c|}{\# representatives} & \multicolumn{2}{c}{\# sat. decisions}\\
params & & avg & std & avg & std \\
\toprule
        & degr & 13.26 & 1.795 & 0.656 & 0.093 \\ 
$(2,2)$ & lin & 15.67 & 5.541 & 0.684 & 0.123 \\ 
        & regr & $\mathbf{17.25}$ & 9.561 & $\mathbf{0.685}$ & 0.137\\ 
\midrule
        & degr & 15.95 & 4.269 & 0.683 & 0.119 \\ 
$(2,4)$ & lin & 20.03 & 6.296 & 0.694 & 0.129 \\ 
        & regr & $\mathbf{21.11}$ & 7.888 & $\mathbf{0.695}$ & 0.132 \\ 
\midrule
                      & degr & 13.62 & 2.159 & 0.569 & 0.214 \\ 
$(\nicefrac{1}{2},2)$ & lin & 19.21 & 6.331 & 0.613 & 0.161 \\
                      & regr & $\mathbf{21.14}$ & 8.525 & $\mathbf{0.628}$ & 0.135 \\ 
\midrule
                                    & degr & 11.41 & 1.585 & $\mathbf{0.659}$ & 0.147 \\ 
$(\nicefrac{1}{2},\nicefrac{1}{2})$ & lin & 11.62 & 1.872 & 0.658 & 0.154 \\ 
                                    & regr & $\mathbf{11.88}$ & 5.106 & 0.650 & 0.197 \\ 
\bottomrule
\end{tabular}
\caption{The total satisfaction of the voters assessed through experiments for $\xi=0.5$.}\label{tab:numerical_results_0.5}
\end{table}

\begin{figure*}[!t]
\begin{center}
\minipage{0.245\textwidth}
  \includegraphics[width=\linewidth]{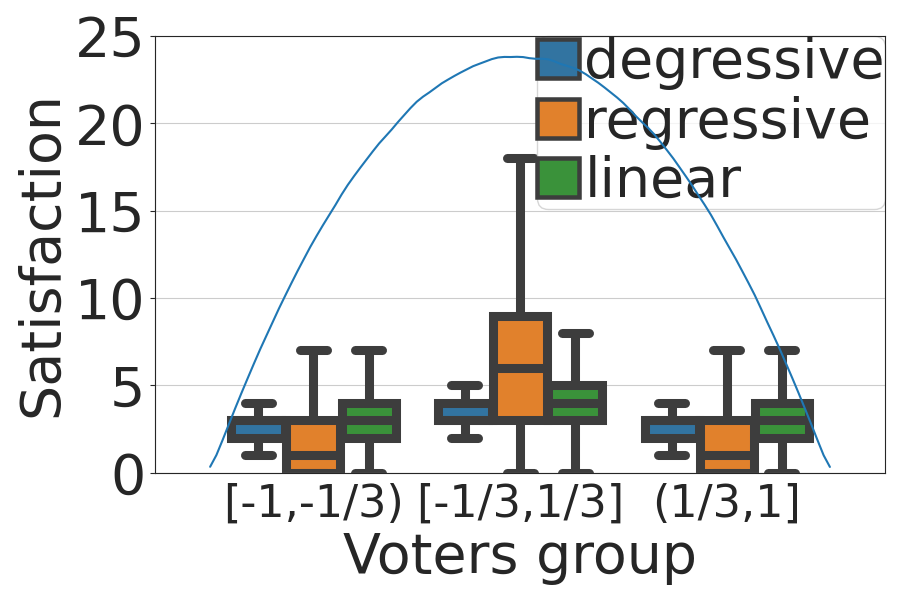}
  (a) \texttt{Beta}$(2,2)$
\endminipage
\minipage{0.245\textwidth}
  \includegraphics[width=\linewidth]{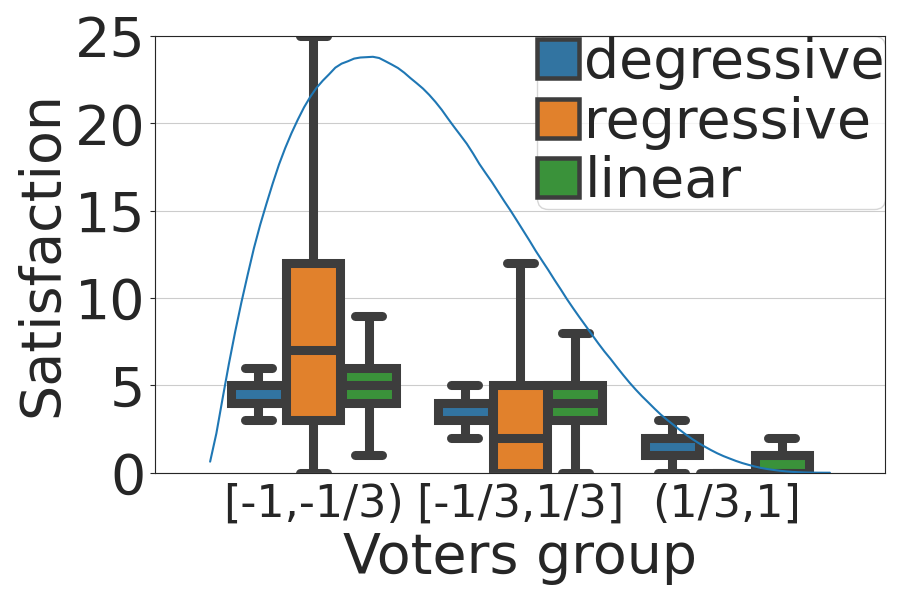}
  (b) \texttt{Beta}$(2,4)$
\endminipage
\minipage{0.245\textwidth}
  \includegraphics[width=\linewidth]{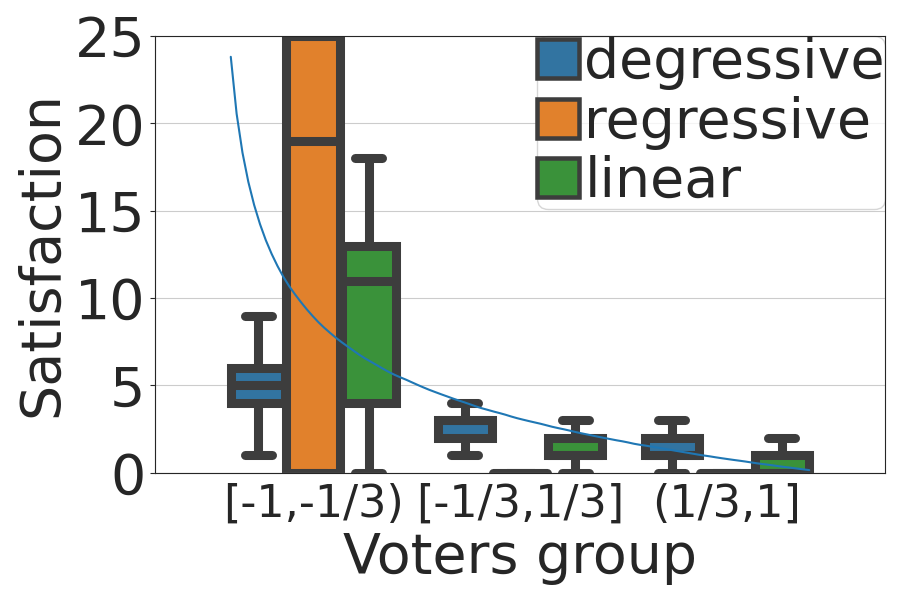}
  (c) \texttt{Beta}$(\nicefrac{1}{2},2)$
\endminipage
\minipage{0.245\textwidth}
  \includegraphics[width=\linewidth]{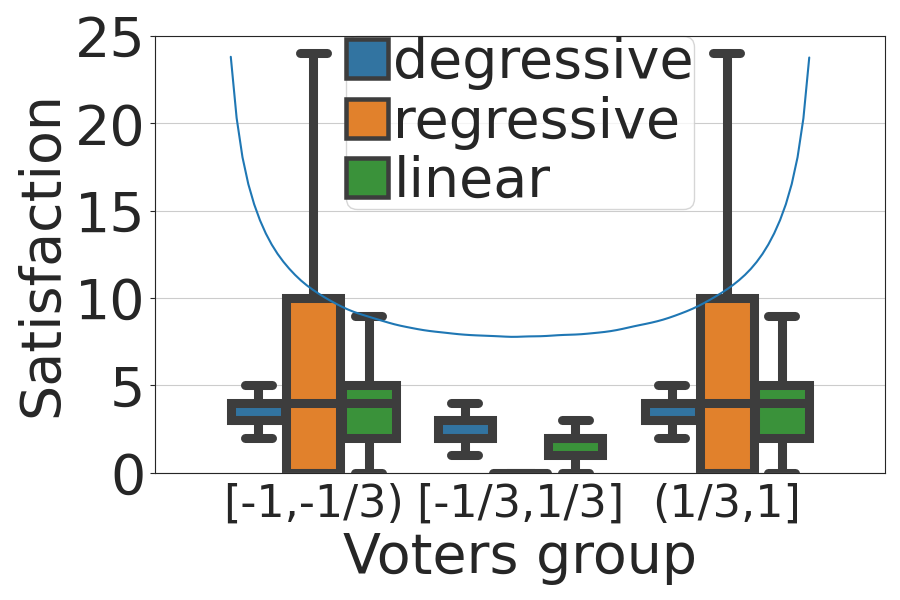}
  (d) \texttt{Beta}$(\nicefrac{1}{2}, \nicefrac{1}{2})$
\endminipage

\end{center}
\caption{Box plots with the distribution of voters' satisfaction for different society models (beta  distributions). Acceptance radius: $\xi = 0.1$. In each plot the blue line depicts the density of the distribution from which we sampled the voters and candidates.}\label{fig:basic_satisfaction_0.1_appendix}
\end{figure*}

\begin{figure*}[!t]
\begin{center}

\minipage{0.245\textwidth}
  \includegraphics[width=\linewidth]{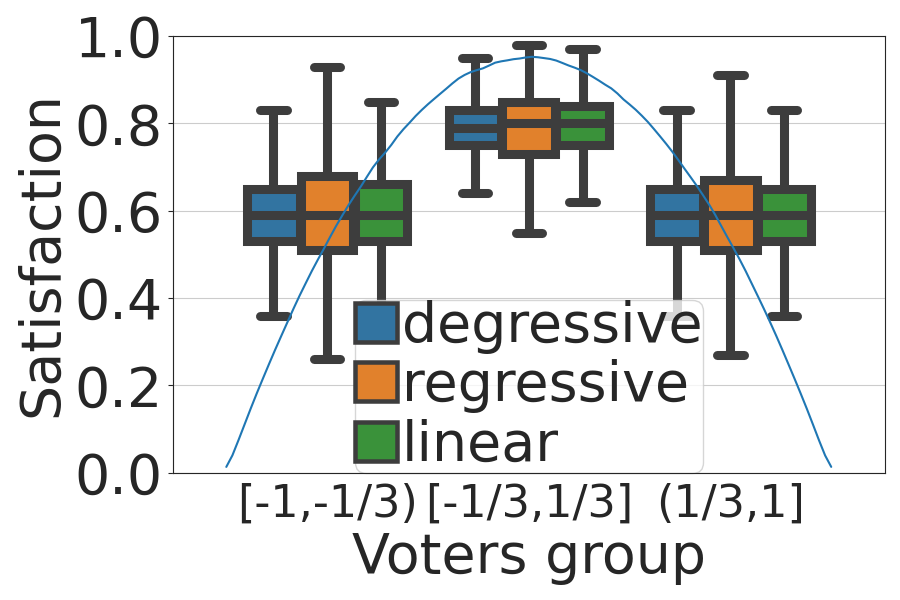}
  (a) \texttt{Beta}$(2,2)$
\endminipage
\minipage{0.245\textwidth}
  \includegraphics[width=\linewidth]{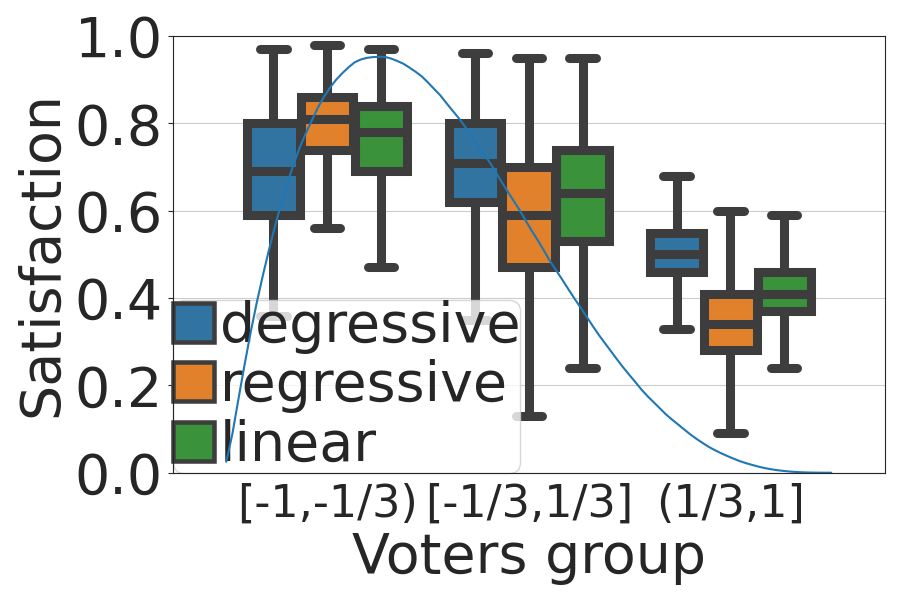}
  (b) \texttt{Beta}$(2,4)$
\endminipage
\minipage{0.245\textwidth}
  \includegraphics[width=\linewidth]{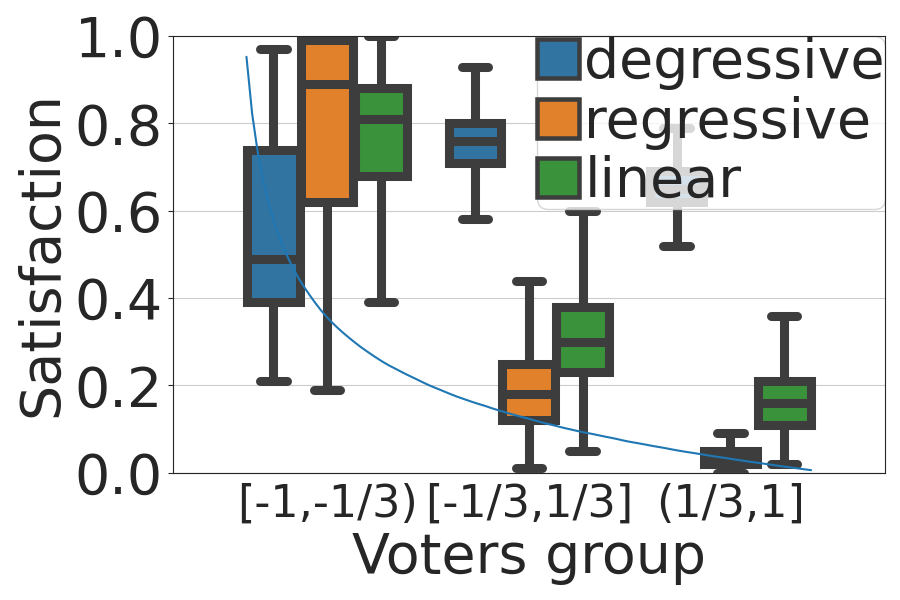}
  (c) \texttt{Beta}$(\nicefrac{1}{2},2)$
\endminipage
\minipage{0.245\textwidth}
  \includegraphics[width=\linewidth]{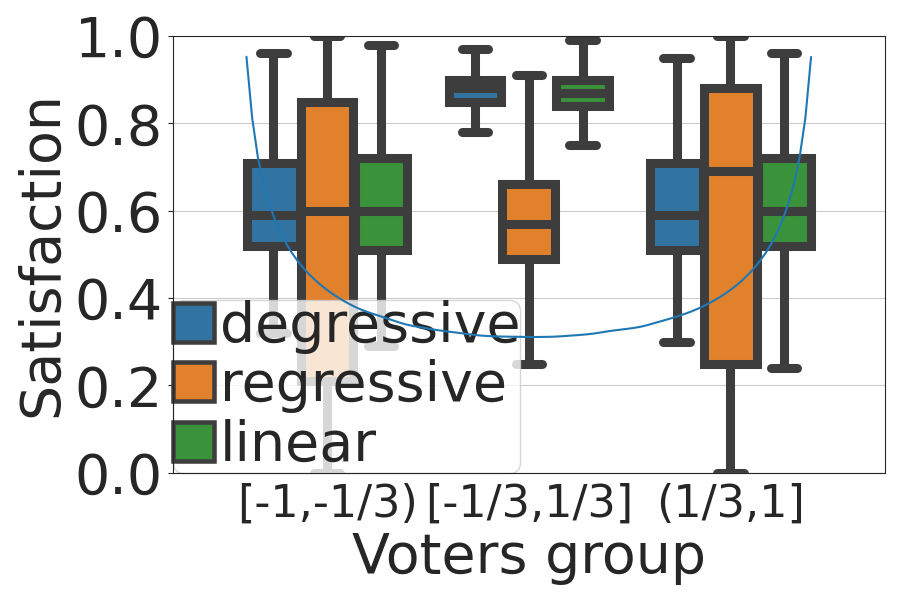}
  (d) \texttt{Beta}$(\nicefrac{1}{2}, \nicefrac{1}{2})$
\endminipage

\end{center}
\caption{Box plots with the distribution of voters' satisfaction in the Voting Committee Model for different society models (beta  distributions). Acceptance radius $\xi = 0.1$. In each plot the blue line depicts the density of the distribution from which we sampled voters and candidates.}\label{fig:issues_satisfaction_0.1_appendix}
\end{figure*}

\begin{figure*}[!t]
\begin{center}
\minipage{0.245\textwidth}
  \includegraphics[width=\linewidth]{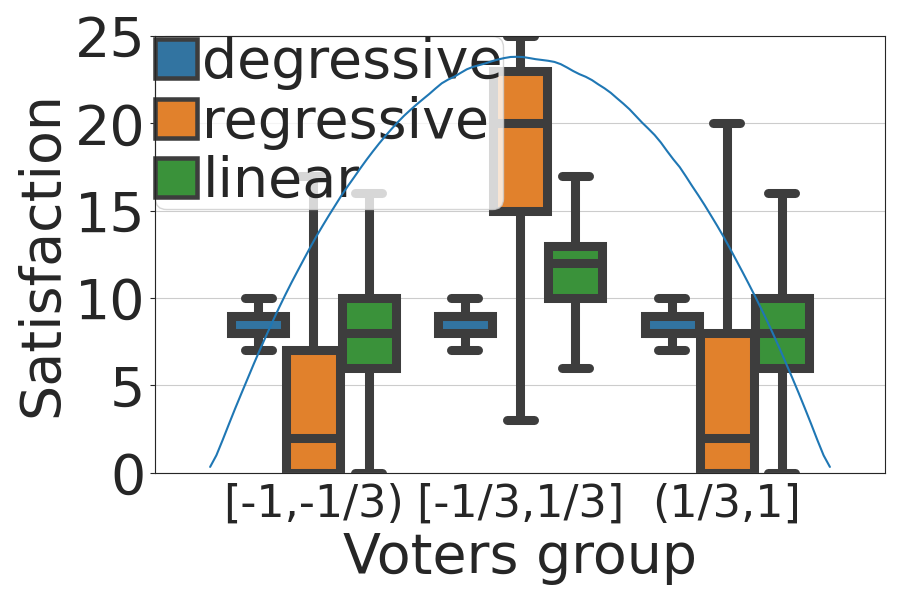}
  (a) \texttt{Beta}$(2,2)$
\endminipage
\minipage{0.245\textwidth}
  \includegraphics[width=\linewidth]{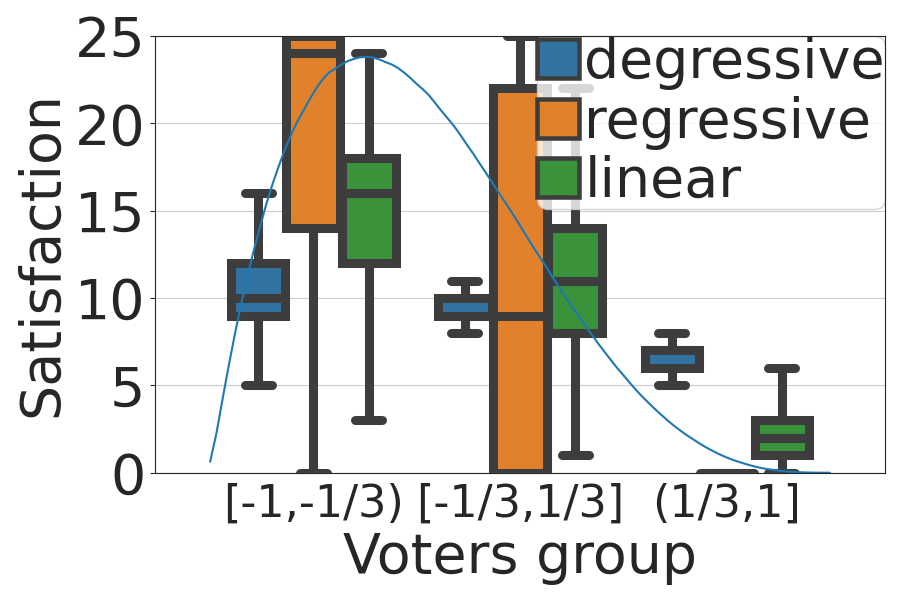}
  (b) \texttt{Beta}$(2,4)$
\endminipage
\minipage{0.245\textwidth}
  \includegraphics[width=\linewidth]{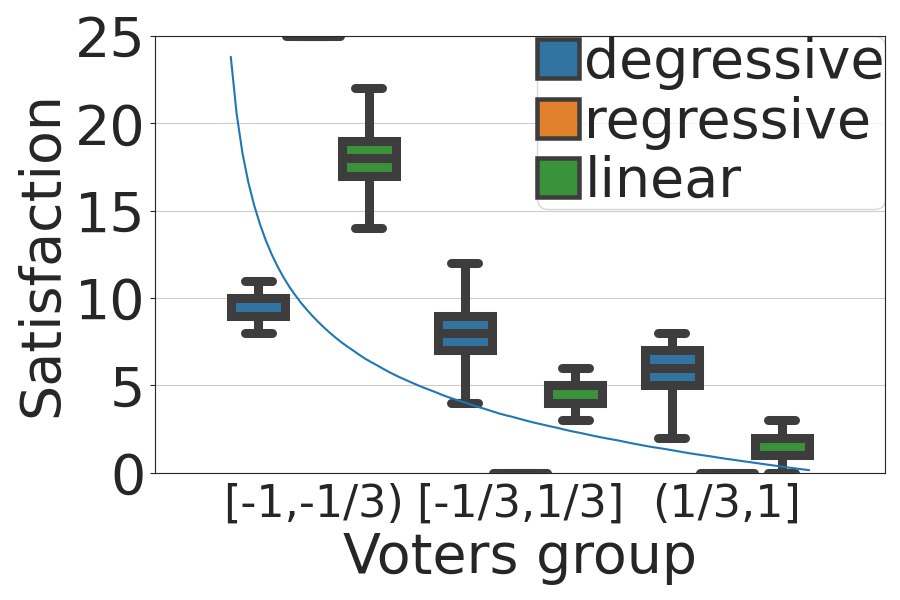}
  (c) \texttt{Beta}$(\nicefrac{1}{2},2)$
\endminipage
\minipage{0.245\textwidth}
  \includegraphics[width=\linewidth]{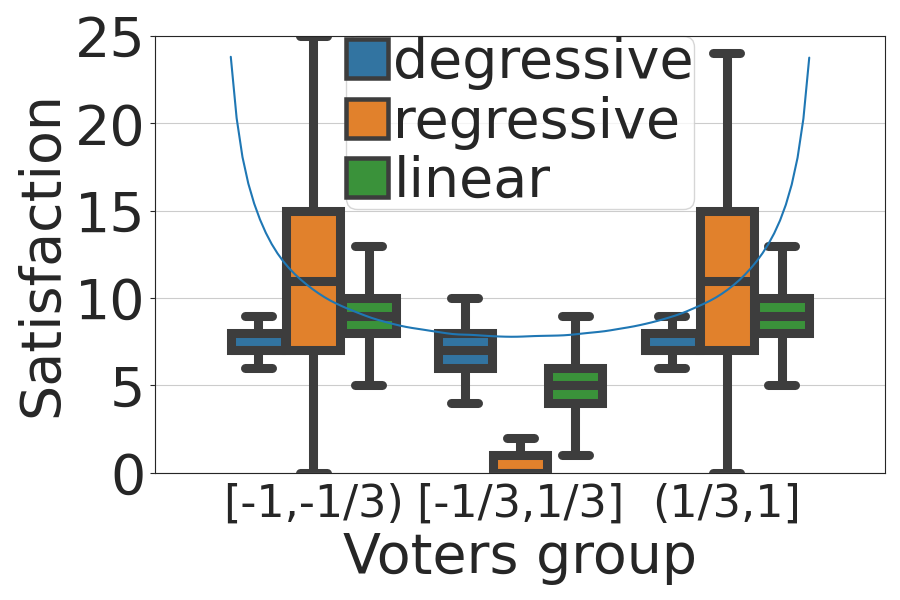}
  (d) \texttt{Beta}$(\nicefrac{1}{2}, \nicefrac{1}{2})$
\endminipage

\end{center}
\caption{Box plots with the distribution of voters' satisfaction for different society models (beta  distributions). Acceptance radius $\xi = 0.3$. In each plot the blue line depicts the density of the distribution from which we sampled voters and candidates.}\label{fig:basic_satisfaction_0.3_appendix}
\end{figure*}

\begin{figure*}[!t]
\begin{center}

\minipage{0.245\textwidth}
  \includegraphics[width=\linewidth]{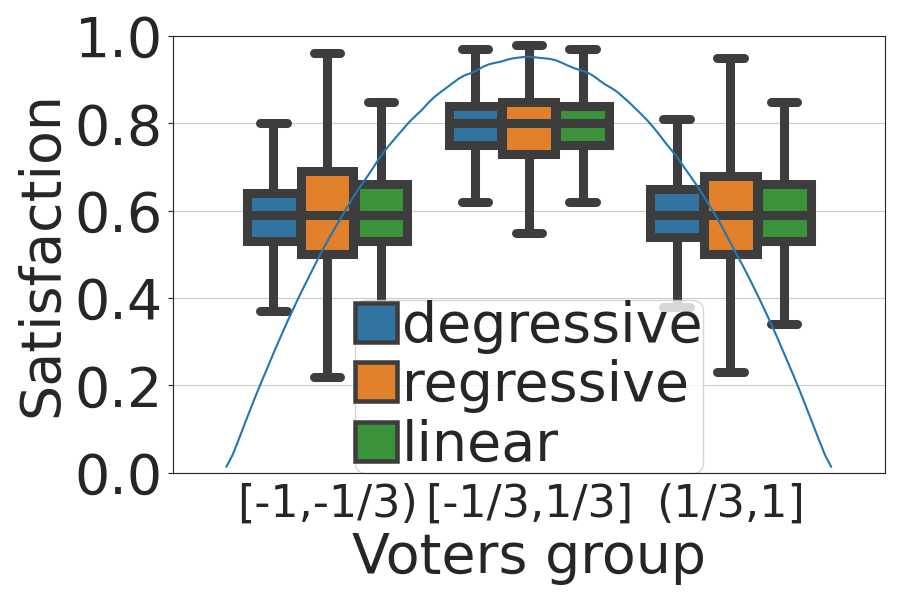}
  (a) \texttt{Beta}$(2,2)$
\endminipage
\minipage{0.245\textwidth}
  \includegraphics[width=\linewidth]{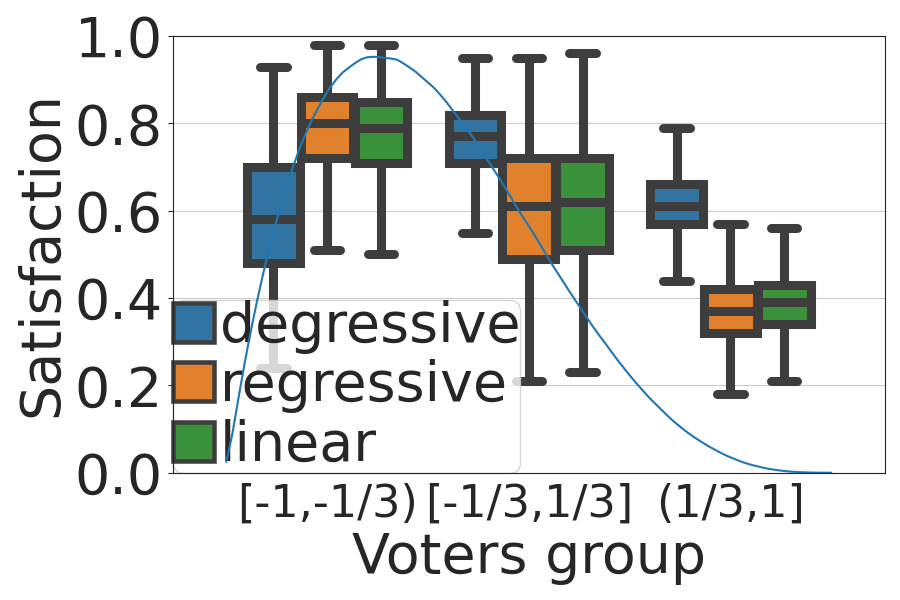}
  (b) \texttt{Beta}$(2,4)$
\endminipage
\minipage{0.245\textwidth}
  \includegraphics[width=\linewidth]{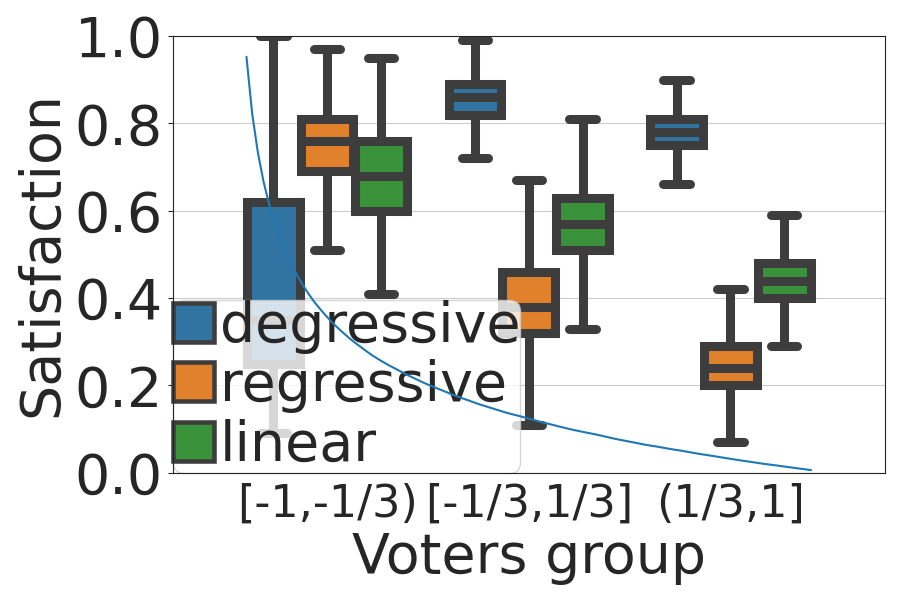}
  (c) \texttt{Beta}$(\nicefrac{1}{2},2)$
\endminipage
\minipage{0.245\textwidth}
  \includegraphics[width=\linewidth]{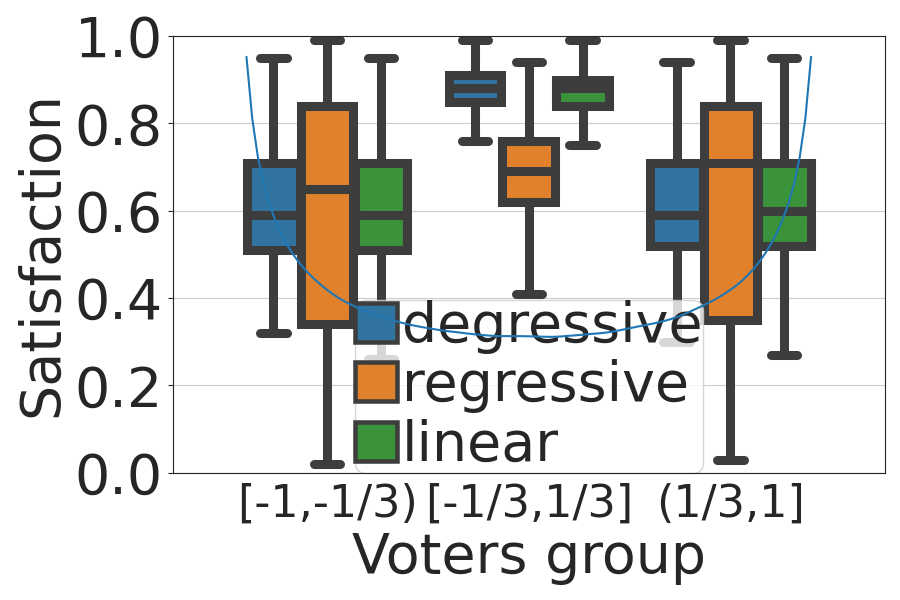}
  (d) \texttt{Beta}$(\nicefrac{1}{2}, \nicefrac{1}{2})$
\endminipage

\end{center}
\caption{Box plots with the distribution of voters' satisfaction in the Voting Committee Model for different society models (beta  distributions). Acceptance radius $\xi = 0.3$. In each plot the blue line depicts the density of the distribution from which we sampled voters and candidates.}\label{fig:issues_satisfaction_0.3_appendix}
\end{figure*}

\begin{figure*}[!t]
\begin{center}
\minipage{0.245\textwidth}
  \includegraphics[width=\linewidth]{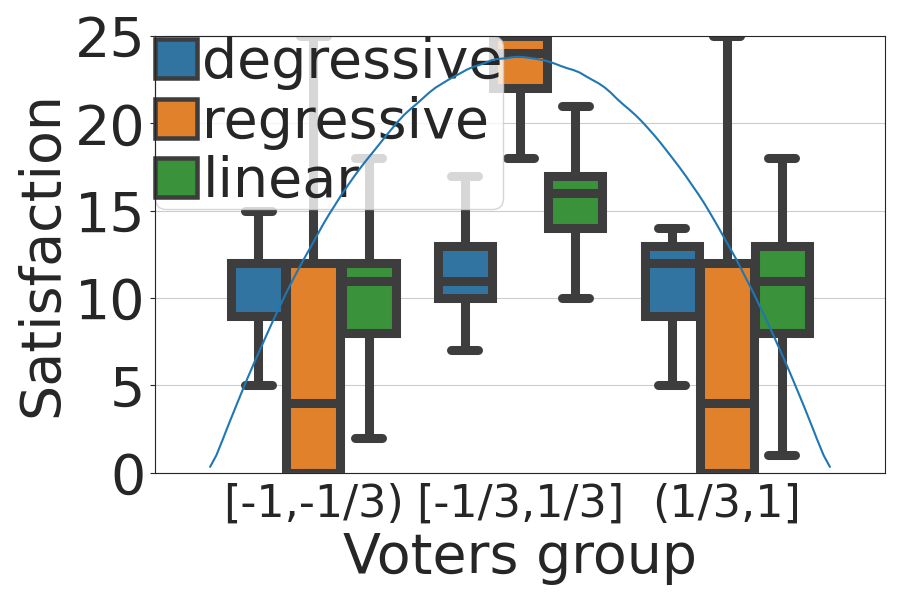}
  (a) \texttt{Beta}$(2,2)$
\endminipage
\minipage{0.245\textwidth}
  \includegraphics[width=\linewidth]{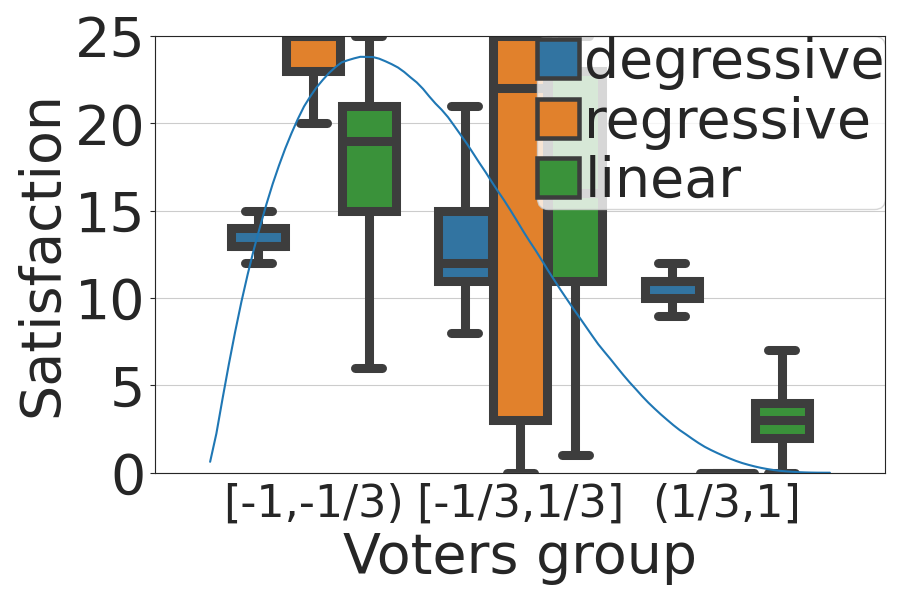}
  (b) \texttt{Beta}$(2,4)$
\endminipage
\minipage{0.245\textwidth}
  \includegraphics[width=\linewidth]{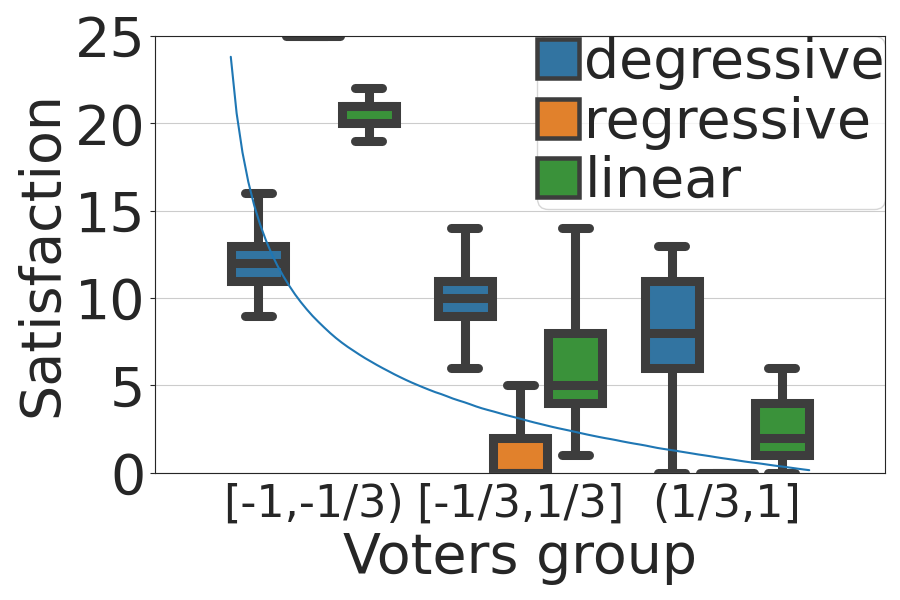}
  (c) \texttt{Beta}$(\nicefrac{1}{2},2)$
\endminipage
\minipage{0.245\textwidth}
  \includegraphics[width=\linewidth]{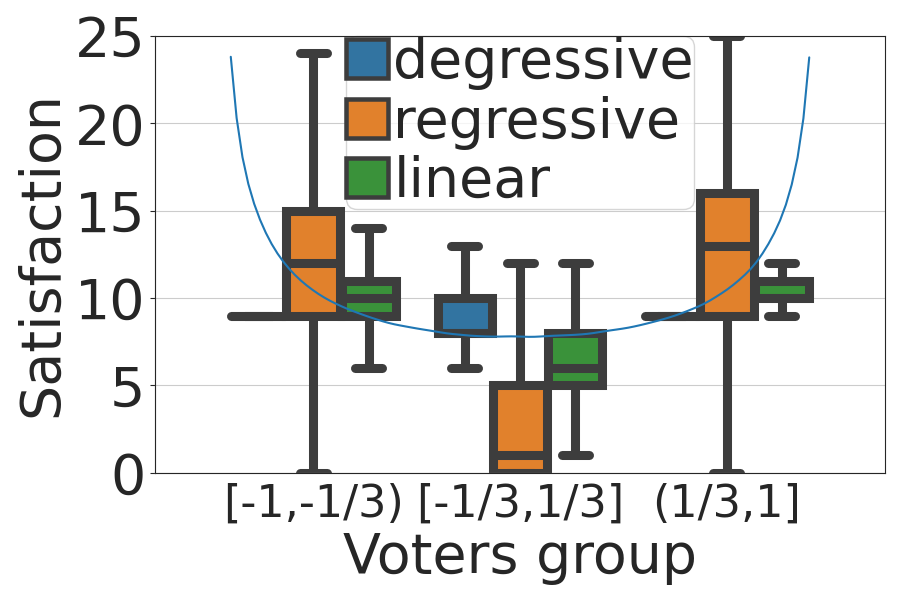}
  (d) \texttt{Beta}$(\nicefrac{1}{2}, \nicefrac{1}{2})$
\endminipage

\end{center}
\caption{Box plots with the distribution of voters' satisfaction for different society models (beta  distributions). Acceptance radius $\xi = 0.4$. In each plot the blue line depicts the density of the distribution from which we sampled voters and candidates.}\label{fig:basic_satisfaction_0.4_appendix}
\end{figure*}

\begin{figure*}[!t]
\begin{center}

\minipage{0.245\textwidth}
  \includegraphics[width=\linewidth]{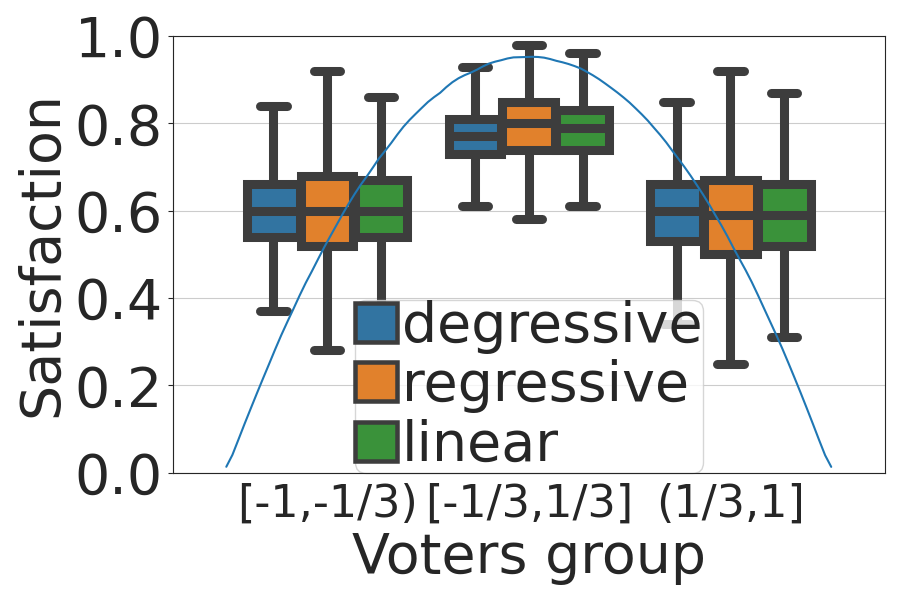}
  (a) \texttt{Beta}$(2,2)$
\endminipage
\minipage{0.245\textwidth}
  \includegraphics[width=\linewidth]{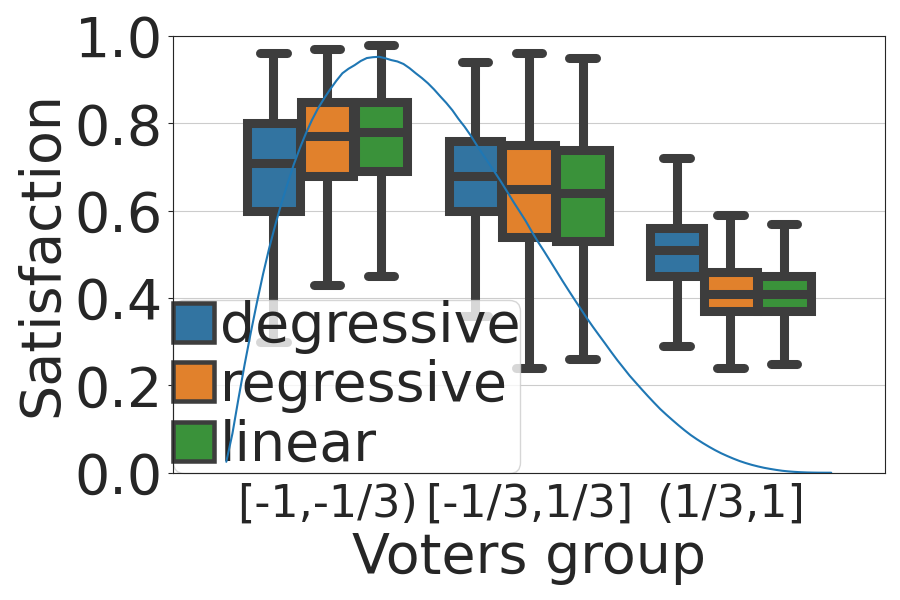}
  (b) \texttt{Beta}$(2,4)$
\endminipage
\minipage{0.245\textwidth}
  \includegraphics[width=\linewidth]{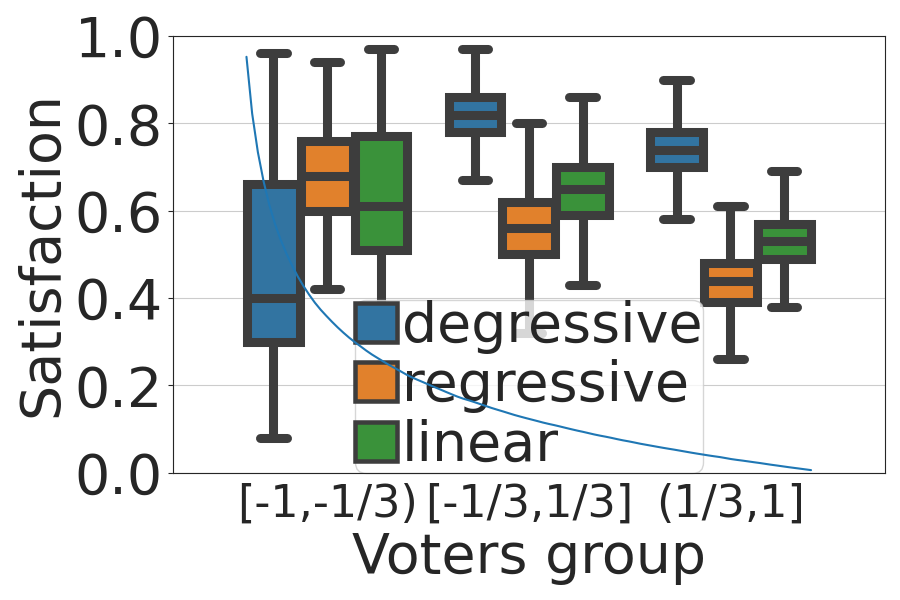}
  (c) \texttt{Beta}$(\nicefrac{1}{2},2)$
\endminipage
\minipage{0.245\textwidth}
  \includegraphics[width=\linewidth]{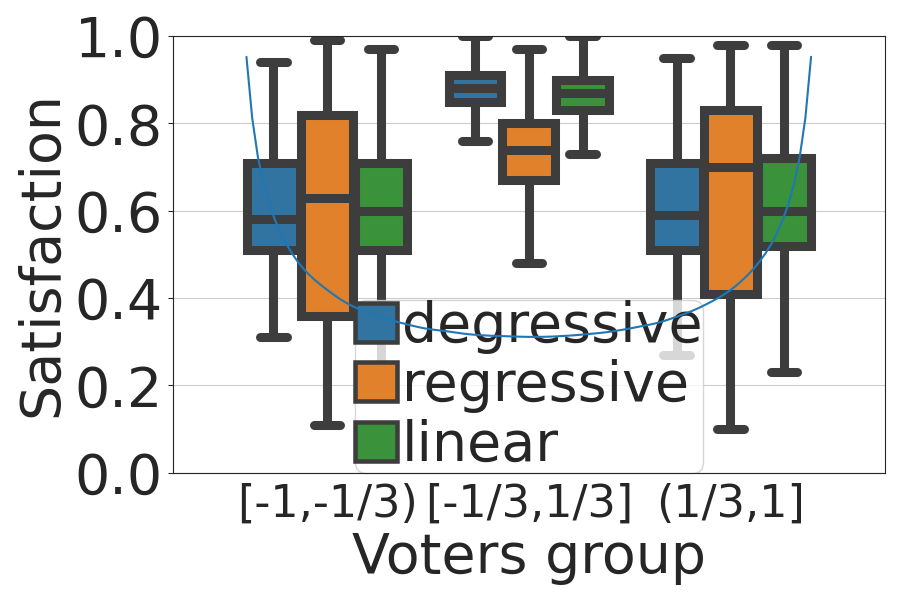}
  (d) \texttt{Beta}$(\nicefrac{1}{2}, \nicefrac{1}{2})$
\endminipage

\end{center}
\caption{Box plots with the distribution of voters' satisfaction in the Voting Committee Model for different society models (beta  distributions). Acceptance radius $\xi = 0.4$. In each plot the blue line depicts the density of the distribution from which we sampled voters and candidates.}\label{fig:issues_satisfaction_0.4_appendix}
\end{figure*}

\begin{figure*}[!t]
\begin{center}
\minipage{0.245\textwidth}
  \includegraphics[width=\linewidth]{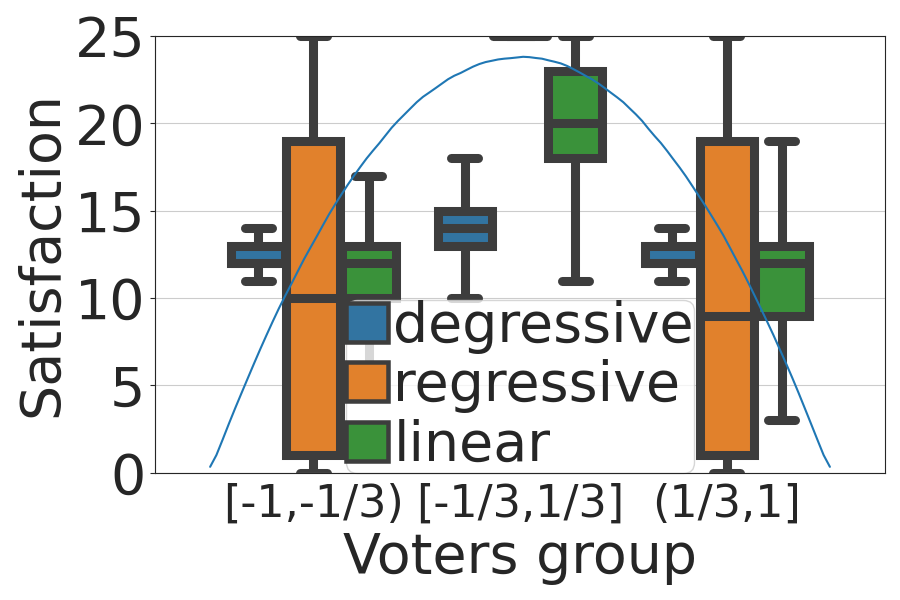}
  (a) \texttt{Beta}$(2,2)$
\endminipage
\minipage{0.245\textwidth}
  \includegraphics[width=\linewidth]{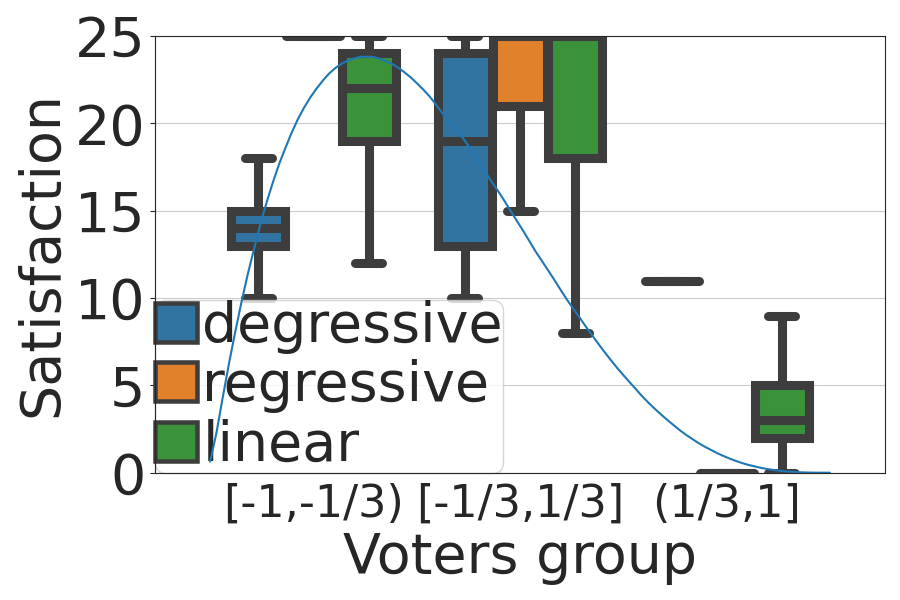}
  (b) \texttt{Beta}$(2,4)$
\endminipage
\minipage{0.245\textwidth}
  \includegraphics[width=\linewidth]{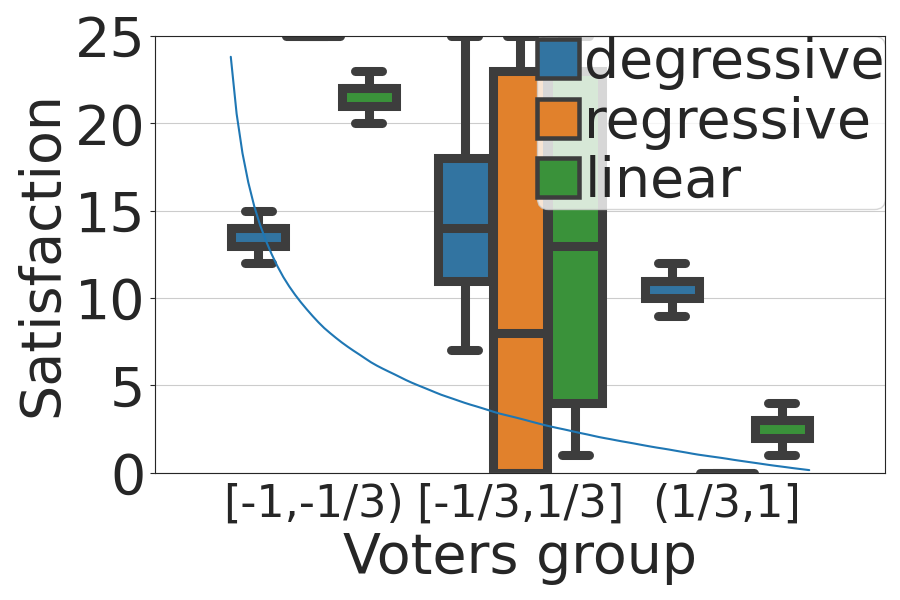}
  (c) \texttt{Beta}$(\nicefrac{1}{2},2)$
\endminipage
\minipage{0.245\textwidth}
  \includegraphics[width=\linewidth]{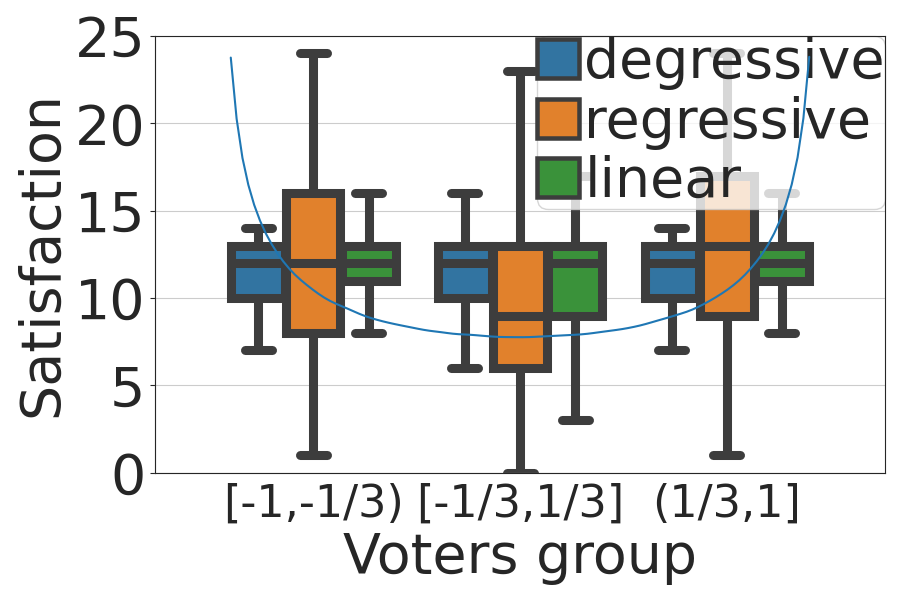}
  (d) \texttt{Beta}$(\nicefrac{1}{2}, \nicefrac{1}{2})$
\endminipage

\end{center}
\caption{Box plots with the distribution of voters' satisfaction for different society models (beta  distributions). Acceptance radius $\xi = 0.5$. In each plot the blue line depicts the density of the distribution from which we sampled voters and candidates.}\label{fig:basic_satisfaction_0.5_appendix}
\end{figure*}

\begin{figure*}[!t]
\begin{center}

\minipage{0.245\textwidth}
  \includegraphics[width=\linewidth]{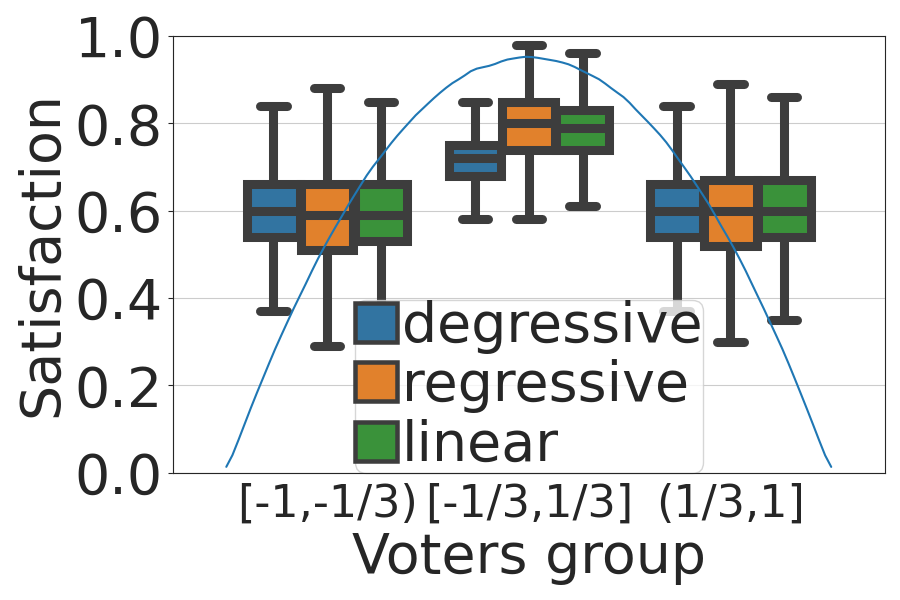}
  (a) \texttt{Beta}$(2,2)$
\endminipage
\minipage{0.245\textwidth}
  \includegraphics[width=\linewidth]{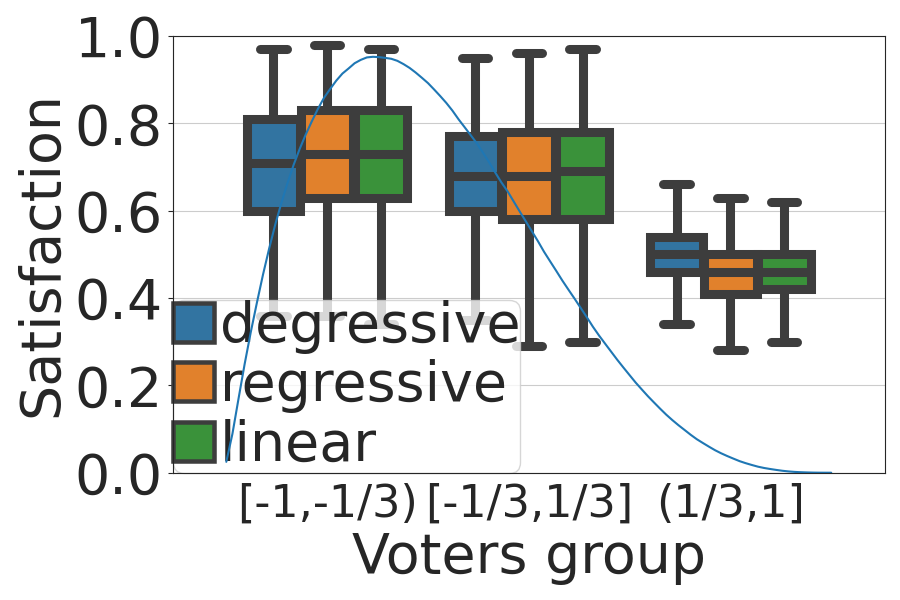}
  (b) \texttt{Beta}$(2,4)$
\endminipage
\minipage{0.245\textwidth}
  \includegraphics[width=\linewidth]{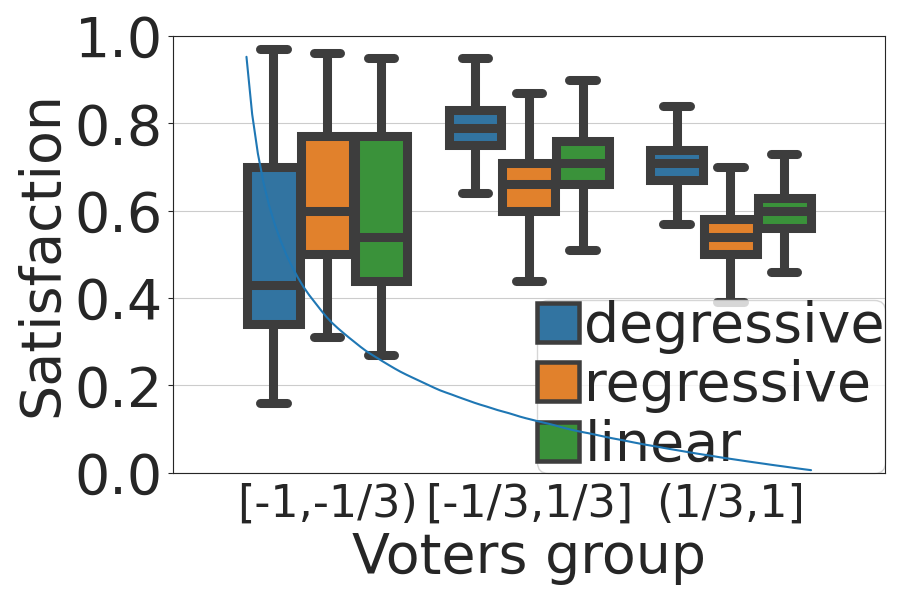}
  (c) \texttt{Beta}$(\nicefrac{1}{2},2)$
\endminipage
\minipage{0.245\textwidth}
  \includegraphics[width=\linewidth]{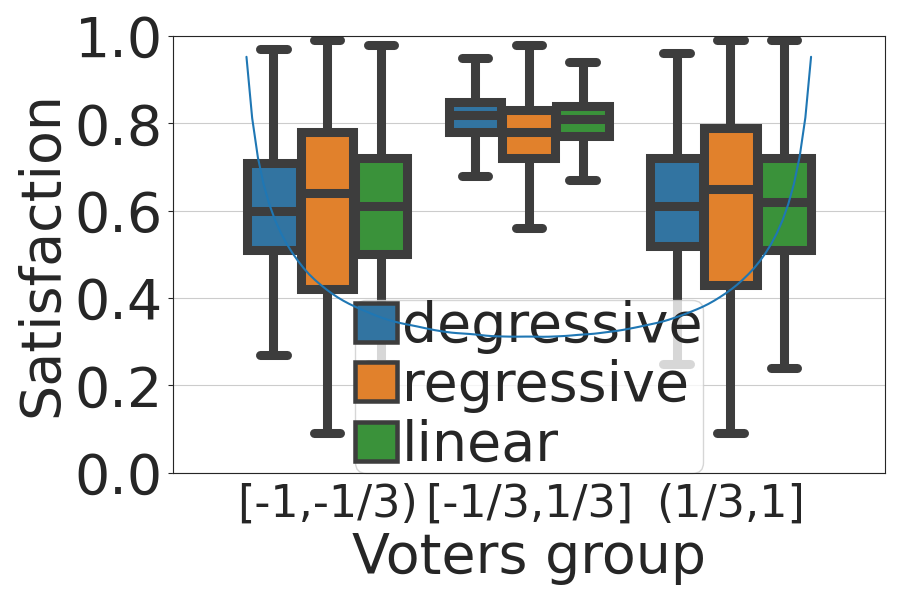}
  (d) \texttt{Beta}$(\nicefrac{1}{2}, \nicefrac{1}{2})$
\endminipage

\end{center}
\caption{Box plots with the distribution of voters' satisfaction in the Voting Committee Model for different society models (beta  distributions). Acceptance radius $\xi = 0.5$. In each plot the blue line depicts the density of the distribution from which we sampled voters and candidates.}\label{fig:issues_satisfaction_0.5_appendix}
\end{figure*}
\clearpage

%
%
%
%







\end{document}